\newcommand{\NN}{\mathbb{N}}
\newcommand{\pname}[1]{\textsc{#1}\xspace}
\newcommand{\Problem}[3]{
	\begin{flushleft}
		\fbox{
		\begin{minipage}{.96\textwidth}
			\noindent {\pname{#1}}\\
			{\bf Instance:} #2\\
			{\bf Question:} #3
		\end{minipage}}
		\medskip
	\end{flushleft}
}
\definecolor{color-test}{rgb}{0.60, 0.0, 0.80}
\newcommand{\ig}[1]{\textcolor{red}{$\langle${\sf Ig}: #1$\rangle$}}
\newcommand{\igr}[1]{\textcolor{medium-blue}{$\langle$#1$\rangle$}}
\renewcommand{\b}[1]{\textcolor{orange}{#1}}
\newcommand{\rud}[2]{\textcolor{green}{#1}\ \textcolor{red}{ \sout{#2}}}
\newcommand{\rudC}[1]{\textcolor{green}{\bf $\langle${\sf Rudini}: #1$\rangle$}}
\newcommand{\vini}[1]{\textcolor{orange}{$\langle${\sf Vini}: #1$\rangle$}}
\newcommand{\orange}[1]{\textcolor{orange}{#1}}
\newcommand{\magenta}[1]{\textcolor{magenta}{#1}}
\renewcommand{\b}[1]{#1}
\renewcommand{\magenta}[1]{#1}
\definecolor{dark-red}{rgb}{0.7,0.15,0.15}
\definecolor{dark-blue}{rgb}{0.15,0.15,0.4}
\definecolor{medium-blue}{rgb}{0,0,0.9}
\definecolor{gray}{rgb}{0.5,0.5,0.5}
\definecolor{color-Ig}{rgb}{0.15,0.7,0.15}
\newcommand{\tw}{{\sf tw}\xspace}
\newcommand{\yes}{{\sf yes}\xspace}
\renewcommand{\FPT}{{\sf FPT}\xspace}
\newcommand{\MSO}{${\sf MSO}$\xspace}
\newcommand{\MSOone}{${\sf MSO}_1$\xspace}
\newcommand{\tbp}{t_{P_3}\xspace}
\newcommand{\Ocal}{\mathcal{O}}
\title{Target~set~selection~with~maximum~activation~time}
\author{Lucas Keiler}{Dept. Computação, Centro de Ciências, Universidade Federal do Ceará,
  Fortaleza - CE, Brazil}{lucas.keiler@hotmail.com}{}{}
\author{Carlos Vinicius Gomes Costa Lima}{Dept. Computação, Centro de Ciências, Universidade Federal do Ceará,
  Fortaleza - CE, Brazil}{gclima@lia.ufc.br}{https://orcid.org/0000-0002-6666-0533}{}
\author{Ana Karolinna Maia}{Dept. Computação, Centro de Ciências, Universidade Federal do Ceará,
  Fortaleza - CE, Brazil}{karolmaia@ufc.br}{https://orcid.org/0000-0002-9027-7948}{}
\author{Rudini Sampaio}{Dept. Computação, Centro de Ciências, Universidade Federal do Ceará,
  Fortaleza - CE, Brazil}{rudini@ufc.br}{https://orcid.org/0000-0001-5889-5183}{}
\author{Ignasi Sau}{LIRMM, Universit\'e de Montpellier, CNRS, Montpellier, France}{ignasi.sau@lirmm.fr}{https://orcid.org/0000-0002-8981-9287}{}
\authorrunning{L.\ Keiler, C.\ V.\ G.\ C.\ Lima, A.\ K.\ Maia, R.\ Sampaio and I.\ Sau}
\keywords{Target set selection, activation time, complexity dichotomy, fixed-parameter tractability, bounded local treewidth, planar graph, tree, bipartite graph.}
\begin{document}
\maketitle

\renewcommand{\c}[1]{\textcolor{color-test}{#1}}

\begin{abstract}
A \emph{target set selection model} is a graph $G$ with a threshold function $\tau:V\to\NN$ upper-bounded by the vertex degree.
For a given model, a set~$S_0\subseteq V(G)$ is a \emph{target set} if $V(G)$ can be partitioned into non-empty subsets $S_0,S_1,\dotsc,S_t$ such that, for $i \in \{1, \dotsc, t\}$,
$S_i$ contains exactly every vertex $v$
having at least $\tau(v)$ neighbors in $S_0\cup\dots\cup S_{i-1}$.
We say that~$t$ is the \emph{activation time} $t_{\tau}(S_0)$ of the target set $S_0$.
The problem of, given such a model, finding  a target set of minimum size has been extensively studied in the literature.
In this article, we investigate its variant, which we call \textsc{TSS-time}, in which the goal is to find a target set~$S_0$ that maximizes $t_{\tau}(S_0)$.
That is, given a graph $G$, a threshold function $\tau$ in $G$, and an integer $k$, the objective of the \textsc{TSS-time} problem is to decide whether $G$ contains a target set~$S_0$ such that $t_{\tau}(S_0)\geq k$.
Let $\tau^* = \max_{v \in V(G)} \tau(v)$.
Our main result is the following dichotomy about the complexity of \pname{TSS-time} when~$G$ belongs to a minor-closed graph class ${\cal C}$: if ${\cal C}$ has bounded local treewidth, the problem is \FPT parameterized by~$k$ and $\tau^{\star}$; otherwise, it is \NP-complete even for fixed $k=4$ and $\tau^{\star}=2$.
We also prove that, with~$\tau^*=2$, the problem is \NP-hard in bipartite graphs for fixed~$k=5$,
and from previous results we observe that \textsc{TSS-time} is \NP-hard in planar graphs and \W[1]-hard parameterized by treewidth.
Finally, we present a linear-time algorithm to find a target set $S_0$ in a given tree maximizing $t_{\tau}(S_0)$.

\end{abstract}

\newpage

\section{Introduction}\label{sec:Intro}


In the \emph{target set selection model} (\textsc{TSS} model for short), as formulated by Chen~\cite{ningchen}, it is given an undirected graph $G=(V,E)$ and a \emph{threshold function} $\tau$ in $G$, which is a function $\tau:V(G)\to\NN$ satisfying $1\leq\tau(v)\leq d(v)$ for every vertex~$v$, where $d(v)$ is the degree of $v$.
We say that a set $S_0\subseteq V(G)$ is a \emph{target set} if the vertex set $V(G)$ can be partitioned into non-empty subsets $S_0,S_1,\dotsc,S_t$ such that, for~$i\in\{1,\dotsc,t\}$, $S_i$ contains exactly every vertex $v$ outside $S_0\cup\dots\cup S_{i-1}$ having at least~$\tau(v)$ neighbors in $S_0\cup\dots\cup S_{i-1}$.
We say that $t$ is the \emph{activation time} $t_{\tau}(S_0)$ of the target set $S_0$,
since this represents an activation process starting with $S_0$: initially all vertices in the target set $S_0$ become \emph{active}, the other vertices are \emph{inactive}, and active vertices remain active forever (that is, it is an irreversible and monotone process). At any step of the process, each inactive vertex gets activated if the number of its active neighbors is at least its threshold. The process is synchronous, that is, all inactive vertices update their status at the same time in each step of the process.

The \pname{Influence Maximization} problem, that consists in finding a subset of vertices of size $k$ that maximizes the expected number of vertices activated by the process described above in a given graph, was first studied by Kempe et al. \cite{kempe2003,KempeKT15} with thresholds randomly chosen from a given range. The \textsc{TSS} model defined above (with deterministic thresholds) was introduced in \cite{ningchen}, where the considered problem was to find a target set $S_0$ of minimum size. \b{Let us call this problem \textsc{TSS-size}.}
 Since then, a number of articles investigated the \textsc{TSS-size} problem~\cite{nichterlein2013, cordasco2018, chopin2014, chiang2013, ACKERMAN2010, BENZWI2011, bazgan2014,EHARD2019}.

For a \textsc{TSS} model, there may exist different targets sets (of minimum size or not), yielding different activation times.
\b{Motivated by a recent line of research arising from a question of Bollob\'as on extremal properties of a closely related model discussed below (see the introduction of~\cite{Przykucki} and~\cite{wg2014-tcs} for a detailed discussion),} we define the following parameter: the \emph{maximum activation time} $t_{\tau}(G)$ \b{of a \textsc{TSS} model $(G, \tau)$} is the maximum~$t_{\tau}(S_0)$ among all target sets $S_0$ of $G$.
We consider the complexity of the decision version of the problem of determining $t_{\tau}(G)$, defined as follows.

\Problem{Target Set Selection-Time (TSS-time)}
{A graph $G=(V,E)$, a threshold function $\tau:V(G)\to\NN$,
 and a positive integer $k$.}
{Is $t_{\tau}(G)\geq k$?}

For an instance $(G,\tau,k)$ of the \textsc{TSS-time} problem, we let $\tau^*:= \max_{v \in V(G)}\tau(v)$. To the best of our knowledge, the above problem had not been considered before \b{(we discuss related work in the next paragraph)}.
Clearly, we may assume $G$ as connected, since otherwise we may solve the problem independently in each connected component. Intuition suggests that the maximum time of activation processes might be obtained with minimum target sets, but this is not true in general. For example, \cref{fig:Fig1} depicts a tree $T$ formed by a root vertex $v$ together with~$k+1$ disjoint induced paths on~$t$ vertices, $v$ being adjacent to an endvertex of each path, for positive integers $t$ and $k>1$. The thresholds are in red and the target sets are marked in gray. \cref{fig:po(T)=1} represents the (unique) minimum target set $S$ of~$T$ of size one (containing the root $v$) with activation time $t$ (achieved at the leaves of $T$). \cref{fig:t(T)=2t} shows a target set $S'$ of size $k$ given by any $k$ leaves of $T$, with activation time $2t$ (achieved at the leaf labeled $u_{t,k+1}$).
Thus, the difference between the cardinalities of a minimum target set and of a target set achieving the maximum activation time can be arbitrarily large.

\begin{figure}[t!]
\centering\scalebox{0.8}{
    \begin{subfigure}[b]{0.6\textwidth}
	    \centering
		\begin{tikzpicture}[scale=0.9,
		level/.style={sibling distance=17mm/#1},
		level distance = 1.5cm,
		level 4/.style = {level distance=1cm},
        level 5/.style = {level distance=1cm}
        ]
		
		    \tikzstyle{vertex}=[draw,circle,fill=white,minimum size=15pt,inner sep=2pt]
		
		    \node[vertex] (v) at (0,0) [fill=black!25, label=above:\textcolor{red}{$k$}] {$v$}
				child {node[vertex] (u11) [label=above left:\textcolor{red}{$1$}] {$u_{1,1}$}
				    child {node[vertex] (u21) [label=above left:\textcolor{red}{$1$}] {$u_{2,1}$}
				        child {node[vertex] (u31) [label=above left:\textcolor{red}{$1$}] {$u_{3,1}$}
				            child {node (ret1) [midway,rotate=90] {$\dots$}
				                child {node[vertex] (ut1) [label=above left:\textcolor{red}{$1$}] {$u_{t,1}$}}
				            }
				        }
				    }
				}
				child {node[vertex] (u12) [label=above:\textcolor{red}{$1$}] {$u_{1,2}$}
				    child {node[vertex] (u22) [label=above left:\textcolor{red}{$1$}] {$u_{2,2}$}
				        child {node[vertex] (u32) [label=above left:\textcolor{red}{$1$}] {$u_{3,2}$}
				            child {node (ret2) [midway,rotate=90] {$\dots$}
				                child {node[vertex] (ut2) [label=above left:\textcolor{red}{$1$}] {$u_{t,2}$}}
				            }
				        }
				    }
				}
				child {node[vertex] (u1k) [label=above:\textcolor{red}{$1$}] {$u_{1,k}$}
				    child {node[vertex] (u2k) [label=above right:\textcolor{red}{$1$}] {$u_{2,k}$}
				        child {node[vertex] (u3k) [label=above right:\textcolor{red}{$1$}] {$u_{3,k}$}
				            child {node (retk) [midway,rotate=90] {$\dots$}
				                child {node[vertex] (utk) [label=above right:\textcolor{red}{$1$}] {$u_{t,k}$}}
				            }
				        }
				    }
				}
				child {node[vertex] (u1k+1) [label=above right:\textcolor{red}{$1$}] {$u_{1,k+1}$}
				    child {node[vertex] (u2k+1) [label=above right:\textcolor{red}{$1$}] {$u_{2,k+1}$}
				        child {node[vertex] (u3k+1) [label=above right:\textcolor{red}{$1$}] {$u_{3,k+1}$}
				            child {node (retk+1) [midway,rotate=90] {$\dots$}
				                child {node[vertex] (utk+1) [label=above right:\textcolor{red}{$1$}] {$u_{t,k+1}$}}
				            }
				        }
				    }
				}
			;
			
			\path (u12) -- (u1k) node (r1) [midway] {$\dots$};
			\path (u22) -- (u2k) node (r2) [midway] {$\dots$};
			\path (u32) -- (u3k) node (r3) [midway] {$\dots$};
			\path (ut2) -- (utk) node (rt) [midway] {$\dots$};
		
		\end{tikzpicture}
	    \caption{Minimum target set of $T$.}
		\label{fig:po(T)=1}
	\end{subfigure}
	~
	\begin{subfigure}[b]{0.6\textwidth}
	    \centering
		\begin{tikzpicture}[scale=0.9,
		level/.style={sibling distance=17mm/#1},
		level distance = 1.5cm,
		level 4/.style = {level distance=1cm},
        level 5/.style = {level distance=1cm}
        ]
		
		    \tikzstyle{vertex}=[draw,circle,fill=white,minimum size=15pt,inner sep=2pt]
		
		    \node[vertex] (v) at (0,0) [label=above:\textcolor{red}{$k$}] {$v$}
				child {node[vertex] (u11) [label=above left:\textcolor{red}{$1$}] {$u_{1,1}$}
				    child {node[vertex] (u21) [label=above left:\textcolor{red}{$1$}] {$u_{2,1}$}
				        child {node[vertex] (u31) [label=above left:\textcolor{red}{$1$}] {$u_{3,1}$}
				            child {node (ret1) [midway,rotate=90] {$\dots$}
				                child {node[vertex] (ut1) [fill=black!25, label=above left:\textcolor{red}{$1$}] {$u_{t,1}$}}
				            }
				        }
				    }
				}
				child {node[vertex] (u12) [label=above:\textcolor{red}{$1$}] {$u_{1,2}$}
				    child {node[vertex] (u22) [label=above left:\textcolor{red}{$1$}] {$u_{2,2}$}
				        child {node[vertex] (u32) [label=above left:\textcolor{red}{$1$}] {$u_{3,2}$}
				            child {node (ret2) [midway,rotate=90] {$\dots$}
				                child {node[vertex] (ut2) [fill=black!25, label=above left:\textcolor{red}{$1$}] {$u_{t,2}$}}
				            }
				        }
				    }
				}
				child {node[vertex] (u1k) [label=above:\textcolor{red}{$1$}] {$u_{1,k}$}
				    child {node[vertex] (u2k) [label=above right:\textcolor{red}{$1$}] {$u_{2,k}$}
				        child {node[vertex] (u3k) [label=above right:\textcolor{red}{$1$}] {$u_{3,k}$}
				            child {node (retk) [midway,rotate=90] {$\dots$}
				                child {node[vertex] (utk) [fill=black!25, label=above right:\textcolor{red}{$1$}] {$u_{t,k}$}}
				            }
				        }
				    }
				}
				child {node[vertex] (u1k+1) [label=above right:\textcolor{red}{$1$}] {$u_{1,k+1}$}
				    child {node[vertex] (u2k+1) [label=above right:\textcolor{red}{$1$}] {$u_{2,k+1}$}
				        child {node[vertex] (u3k+1) [label=above right:\textcolor{red}{$1$}] {$u_{3,k+1}$}
				            child {node (retk+1) [midway,rotate=90] {$\dots$}
				                child {node[vertex] (utk+1) [label=above right:\textcolor{red}{$1$}] {$u_{t,k+1}$}}
				            }
				        }
				    }
				}
			;
			
			\path (u12) -- (u1k) node (r1) [midway] {$\dots$};
			\path (u22) -- (u2k) node (r2) [midway] {$\dots$};
			\path (u32) -- (u3k) node (r3) [midway] {$\dots$};
			\path (ut2) -- (utk) node (rt) [midway] {$\dots$};
		
		\end{tikzpicture}
	    \caption{Target set satisfying $t_{\tau}(T)=2t$.}
		\label{fig:t(T)=2t}
	\end{subfigure}}

    \caption{A tree $T$ with (unique) minimum target set $S$ of size one and $t_{\tau}(S) = t$, and such that $t_{\tau}(G) = 2t$, for any positive integer~$t$. The thresholds are in red, while the vertices of~$S$ and $S'$ are marked in gray in \cref{fig:po(T)=1} and \cref{fig:t(T)=2t}, respectively.
    }
    \label{fig:Fig1}
\end{figure} 

There are several recent articles in the literature  dealing with problems similar to \pname{TSS-time}, but considering different models \magenta{or different activation processes}. For example, the \emph{$r$-neighbor bootstrap percolation model}~\cite{riedl,morris,balogh4,balogh3,neural2,balogh2,balogh,chalupa,holroyd} is almost equivalent to the \pname{TSS} model with all thresholds being equal to $r$ but it accepts thresholds greater than the degree of a vertex.
Motivated by this, we define a \emph{generalized threshold function} in a graph $G$ as any function $\tau':V(G)\to\NN$. Thus, a threshold function $\tau$ is a generalized threshold function satisfying $1\leq\tau(v)\leq d(v)$ for any vertex $v$ of~$G$.
Let the \emph{generalized \pname{TSS} model} be the analogous of \pname{TSS} model with generalized threshold functions, instead of threshold functions.
Hence, for an integer $r>0$, the $r$-neighbor bootstrap percolation model is equivalent to the generalized \pname{TSS} model with all thresholds equal to $r$.
The $r$-\textsc{Neighbor Bootstrap Percolation-time} and \textsc{Generalized Target Set Seletion-time} \b{(\pname{GTSS-time} for short)} problems are defined analogously to \textsc{TSS-time} for the corresponding models. Observe that, for those problems, vertices whose threshold is greater than its degree must be in any target set (activation time 0) and vertices with threshold 0 are always activated at time at most 1.


Considering the $2$-neighbor bootstrap percolation model and the parameter $\tbp(G)$ (the analogous to the parameter $t_{\tau}(G)$ defined above for our problem), Przykucki~\cite{Przykucki} determined the value of the maximum percolation time on the hypercube~$2^{[n]}$ as a function of~$n$, and Benevides and Przykucki~\cite{benevides-EJC2013,benevides-SIDMA} obtained similar results for the square grid $[n]^2$.
It was also proved that deciding whether~$\tbp(G)\geq k$ is polynomial-time solvable for fixed $k\leq 3$~\cite{wg2014-tcs}, and \NP-complete for fixed~$k\geq 4$~\cite{benevides2015}.
In bipartite graphs, it is polynomial-time solvable for fixed $k\leq 4$ and \NP-complete for fixed $k\geq 5$~\cite{wg2014-tcs}.
Finally, it was proved in~\cite{MarcilonS18-tw} that $2$-\textsc{Neighbor Bootstrap Percolation-time} is \W[1]-hard parameterized  by the treewidth of the input graph. Clearly, all these hardness results extend to \textsc{Generalized TSS-time}. However, they cannot be applied directly to \textsc{TSS-time}, since all these hardness reductions use vertices of degree 1, which have an important role in them, and which are not allowed in our setting when all thresholds are equal to 2.

\magenta{In the \pname{Geodesic (resp. Monophonic) Convexity-time} problem, threshold functions are not taken into account, and at any step of the activation process, each inactive vertex gets activated if it is in a shortest (resp. induced) path between two activate vertices. The maximum activation time obtained in these processes
has been studied~\cite{hn1981,benevides-geo,eurinardo}. For each parameter, deciding if its value is greater or equal to $k$ (for fixed $k$) is \NP-complete when $k \ge 2$ and $k \ge 1$, respectively, and the input graph is bipartite~\cite{benevides-geo,eurinardo}. Both problems are polynomial-time solvable for distance-hereditary graphs~\cite{benevides-geo}.}

With respect to the parameterized complexity of these problems, the published articles mainly focus on the generalized version of the \textsc{TSS-size} problem, for which there are no degree restrictions on the threshold function, denoted by \pname{Generalized TSS-size}, and the cases in which the maximum threshold is small or the threshold of every vertex is at least half of its degree. Namely, \pname{Generalized TSS-size} is \b{\FPT parameterized by the size of a minimum vertex cover~\cite{nichterlein2013,abs-1812-01482}, and \W[1]-hard for each of the following parameterizations: distance to cluster~\cite{chopin2014}, neighborhood diversity~\cite{dvork2018}, and distance to forest and pathwidth~\cite{nichterlein2013}.}
The case when all thresholds are exactly half of the degree for each vertex is also \W[1]-hard parameterized by pathwidth~\cite{chopin2014}.
For constant thresholds, the problem becomes~\FPT parameterized by distance to cluster~\cite{chopin2014}, by neighborhood diversity~\cite{dvork2018}, and by treewidth~\cite{BENZWI2011}. Ben-Zwi et al.~\cite{BENZWI2011} also proved that \pname{TSS-size} cannot be solved in $\Ocal(n^{\sqrt{\tw(G)}})$ time, where $n$ and $\tw(G)$ denote the number of vertices and the treewidth of the input graph $G$, respectively.
Recently, Hartmann~\cite{Hartmann18} gave an \FPT algorithm for~\pname{TSS-size} parameterized by clique-width \b{and the maximum value of the threshold function}. Cicalese et al.~\cite{CICALESE20141,CICALESE201540} considered the problem in which the number of rounds of the process is bounded. For graphs of bounded clique-width, given parameters $a$, $b$, $\ell$, they gave polynomial-time algorithms to determine whether there exists a target set of size $b$, such that at least $a$ vertices are activated in at most $\ell$ time steps.


\smallskip

\textbf{Our results and techniques}. In this paper we initiate an analysis of the computational complexity of the \pname{TSS-time} problem, in particular from the viewpoint of parameterized complexity.
We start by showing that reductions of \cite{benevides2015} and \cite{MarcilonS18-tw} for the 2-\textsc{Neighbor Bootstrap Percolation-time} problem can be easily adapted in order to prove that \pname{TSS-time} is \NP-hard in planar graphs \b{and graphs of bounded degree}, and \W[1]-hard when parameterized by the treewidth of the input graph (\cref{corol-W1}).
We then provide \NP-completeness results for fixed values of the activation time~$k$ and the value $\tau^*= \max_{v \in V(G)}\tau(v)$. Namely, by adapting another reduction in~\cite{benevides2015} from \textsc{$3$-Sat}, we prove (\cref{thm:NPC}) that \textsc{TSS-time} is \NP-complete in general graphs for any fixed $k \geq 4$ and $\tau^*=2$. This result is sharp in terms of $\tau^*$ since, as we observe in \cref{lem:tau1}, the problem can be easily solved in polynomial time when $\tau^*=1$. By reducing from the \textsc{Restricted Planar $3$-Sat} problem~\cite{DahlhausJPSY94} instead of \textsc{$3$-Sat} and modifying appropriately the planar embedding given by the incidence graph of the formula, we prove (\cref{thm:hard-apex}) that \textsc{TSS-time} remains \NP-complete for any fixed $k \geq 4$ and $\tau^*=2$ even if the input graph is an {\sl apex} graph, that is, a graph in which there exists a vertex whose removal yields a planar graph. Finally, by modifying the reduction of \cref{thm:NPC} by using bipartite gadgets, we prove (\cref{thm:NPC2}) that \textsc{TSS-time}
is \NP-complete in bipartite graphs for any fixed $k \geq 5$ and $\tau^*=2$.

\smallskip

Motivated by these \NP-completeness results, we study the parameterized complexity of the \pname{TSS-time} problem considering $k$ and $\tau^*$ as parameters.
We manage to provide a dichotomy on the complexity of \pname{TSS-time} when the input graph~$G$ belongs to a minor-closed graph class ${\cal C}$. Namely, we prove (\cref{thm:dichotomy}) that if ${\cal C}$ has {\sl bounded local treewidth} (cf.~\cref{sec:prelim} for the definition), then \textsc{TSS-time} is \FPT with parameters $k$ and $\tau^*$; otherwise it is \NP-complete for any fixed $k \geq 4$ and $\tau^*=2$. \b{Note that, as discussed above, \pname{TSS-time} is \NP-hard in planar graphs (even with $\tau^*=2$), which is a minor-closed graph class of bounded local treewidth, and therefore in our complexity dichotomy,  ``\FPT''  cannot be replaced by ``solvable in polynomial time''}. It is pertinent to mention here that the title of the article of Ben-Zwi et al.~\cite{BENZWI2011} is ``{\sl treewidth governs the complexity of target set selection}'', referring to the \textsc{TSS-size} problem. In this spirit, one of the the main conclusions of our article is that ``{\sl local treewidth governs the complexity of target set selection with maximum activation time}''. Let us now discuss how we prove \cref{thm:dichotomy}.

In order to prove this dichotomy, on the positive side we provide  (\cref{thm:FPT}) an \FPT algorithm \b{for the generalized version of the problem, namely \pname{GTSS-time}}, with parameters $k$ and $\tau^*$ when the input graph $G$ belongs to a graph class ${\cal C}$ of bounded local treewidth (not necessarily minor-closed). In order to do this, we first observe (\cref{lem:equal-at-least}) that, for any instance $(G,\tau,k)$ of \pname{GTSS-time},
$t_{\tau}(G) \geq k$ if and only if there exists a target set activating $G$ at time {\sl exactly}~$k$. We then show (\cref{lem:balls}) that \pname{GTSS-time} on an $n$-vertex graph $G$ can be reduced to solving $n$ instances having as input the graph induced in $G$ by the $k$-th neighborhood of each vertex of $G$. The crucial observation is that, when $G$ belongs to a class of bounded local treewidth, these auxiliary graphs have treewidth bounded by a function of $k$. With this at hand, we show (\cref{lem:MSOL}) that \pname{GTSS-time} can be expressed by a monadic second-order logic formula whose length depends only on~$k$ and $\tau$, and applying Courcelle's Theorem~\cite{Courcelle90} on the linearly many bounded-treewidth auxiliary graphs yields the desired~\FPT algorithm. Note that, since we deal with the generalized version of the \pname{TSS-time} problem, our \FPT algorithm also applies to the 2-\textsc{Neighbor Bootstrap Percolation} problem.
As particular cases of graph classes with bounded local treewidth, the existence of an \FPT algorithm for 2-\textsc{Neighbor Bootstrap Percolation} with parameter $k$ in graphs with bounded maximum degree was already known \cite{MarcilonS18}, but no \FPT algorithm in planar graphs (or, more generally, graphs of bounded genus) existed prior to our work.
Note that 2-\textsc{Neighbor Bootstrap Percolation} has been proved to be \NP-complete in planar graphs by Benevides et al.~\cite{benevides2015}. In this \NP-completeness proof, the authors say that ``our proof does not work when the time is fixed''; the \FPT algorithm of \cref{thm:FPT} provides a solid explanation for that.

As for the hardness part of our complexity dichotomy, we critically use a result of Eppstein~\cite{Eppstein00} stating that, for minor-closed graph classes, having bounded local treewidth is equivalent to excluding some apex graph. Now, if ${\cal C}$ is a minor-closed graph class of unbounded local treewidth, the previous result implies that ${\cal C}$ contains all apex graphs, in particular those originated from our hardness result of \cref{thm:hard-apex} for apex input graphs, and therefore the \textsc{TSS-time} problem is
 \NP-complete in ${\cal C}$ for any fixed $k \geq 4$ and $\tau^*=2$. \b{Again, the same argument applies to 2-\textsc{Neighbor Bootstrap Percolation-time} (\cref{cor:P3-hard-apex}), hence the dichotomy in minor-closed graph classes holds for this problem as well.}


\smallskip

Finally, we provide (\cref{teo-trees1}) an $\Ocal(n)$-time algorithm  for the maximization version of \pname{TSS-time} in trees, that is, for finding the maximum activation time of \b{a target selection model $(T,\tau)$ where $T$ is a tree}. 
Note that the \FPT algorithm of \cref{thm:FPT} implies that deciding whether $t_{\tau}(T) \geq k$ for a tree $T$ (which has treewidth one) can be solved in time $f(k, \tau^*)\cdot n^{\Ocal(1)}$ for some function~$f$, but we provide a stronger result by showing that also the {\sl maximum} activation time of a tree can be computed in polynomial (even linear) time. In order to achieve this, we prove (\cref{lema1b} and \cref{teo-trees1}) that every path such that any internal vertex $v$ satisfies $\tau(v)<d(v)$ (we say that such a $v$ is \emph{non-saturated}) can be activated one vertex per time step by some target set. With this, we prove that the maximum activation time in a tree is equal to the size of a maximum path such that all internal vertices are non-saturated. One interesting point here is that the threshold values are not important, but only whether a vertex is saturated or not.
We generalize this algorithm (\cref{teo-trees2}) to the maximization version of \pname{GTSS-time}, namely, we provide an $\Ocal(n^2)$-time algorithm for finding the maximum activation time of a tree $T$ and a generalized threshold function $\tau$ in $T$.
The main idea is that any target set must contain the set $V_{\sf f}$ of ``forced'' vertices containing any vertex $v$ with $\tau(v)>d(v)$. With this, we start the activation process from $V_{\sf f}$, obtaining the set $H(V_{\sf f})$ containing the vertices that can be activated by $V_{\sf f}$. We then look for certain paths representing an activation sequence, whose first vertices are activated by $V_{\sf f}$ and whose last vertices are non-saturated. In this case, the threshold values are important, since $H(V_{\sf f})$ depends on them.


\medskip

\noindent\textbf{Organization}. In \cref{sec:prelim} we provide basic preliminaries about graphs, convexity, parameterized complexity, graph minors, (bounded local) treewidth, and monadic second-order logic. In \cref{sec:NPcomp} we present our \NP-completeness results, and in \cref{sec:FPT-Alg} we provide the \FPT algorithms for graphs of bounded local treewidth. Altogether, the results in \cref{sec:NPcomp} and \cref{sec:FPT-Alg} yield the complexity dichotomy for graph classes of bounded local treewidth. \cref{sec:PolyTrees} is devoted to the polynomial-time algorithms for trees. 
We conclude the paper in \cref{sec:concl} with some directions for further research.



\section{Preliminaries}
\label{sec:prelim}


\noindent \textbf{Graphs}. We refer the reader to~\cite{Diestel12} for basic background on graph theory, and recall here only some useful definitions. We consider only undirected graphs without loops nor multiple edges. We will use $n$ and $m$ for denoting the number of vertices and edges, respectively, of the input graph of the problem under consideration.
We denote by $uv$ an edge between vertices $u$ and $v$. For a graph $G$ and a vertex set~$S \subseteq V(G)$, we use the shortcut $G \setminus S$ to denote $G[V(G) \setminus S]$.
The \emph{distance} between two vertices $u$ and $v$ in a graph $G$ is the number of edges of a shortest path between $u$ and $v$. The \emph{diameter} of  $G$ is the maximum distance over all pairs of vertices of $G$. For a vertex $v$ in~$G$ and a positive integer $k$, we denote by $N_k(v)$ the set of vertices of $G$ within distance at most~$k$ from $v$, and we let $N_k[v]=N_k(v) \cup \{v\}$. We abbreviate $N_1(v)$ and $N_1[v]$ as $N(v)$ and~$N[v]$, respectively, and we let $d(v) = |N(v)|$ be the \emph{degree} of $v$ in $G$. A \emph{tree} is a connected acyclic graph, and a \emph{leaf} in a tree is a vertex of degree one.

For two non-negative integers $a$ and $b$, we denote by $[a,b]$ the set containing every integer~$c$ such that $a \leq c \leq b$ and we let $[a] = [1,a]$. If a set $S$ is partitioned into pairwise disjoint sets $S_1, \dotsc, S_k$, we denote it by $S = S_1 \uplus \dots \uplus S_k$.


\medskip
\noindent \textbf{Convexity.} Activation problems appear in the literature under a number of different names, such as $r$-neighbor bootstrap percolation~\cite{riedl,morris,balogh4,balogh3,neural2,balogh2,balogh,chalupa,holroyd}, dynamic monopolies~\cite{BESSY2019,soltani2019,ZAKER2012,KHOSHKHAH2014,CHANG2013}, irreversible conversion \cite{CDPRS2011,dmtcs:3952,cs1,TAKAOKA2015}, or graph convexities, and were studied by researchers of various fields.
As mentioned in the introduction, in the particular case in which all thresholds are equal to $2$, generalized \pname{TSS} model is also called $2$-neighbor bootstrap percolation or \emph{$P_3$-convexity}, which is studied in the field of graph convexities.

A finite {\it graph convexity}~\cite{ve} is a pair $(G,\mathcal{C})$ consisting of a finite simple graph~$G=(V,E)$ and a set $\mathcal{C}$ of subsets of $V$ (called \emph{convex sets}) satisfying that~$\emptyset, V\in\mathcal{C}$ and that if $C_1,C_2\in \mathcal{C}$, then $C_1\cap C_2\in \mathcal{C}$. In words, $\emptyset$ and $V$ are convex sets and the intersection of convex sets is a convex set. The {\em convex hull}~$H_\mathcal{C}(S)$ of a set $S$ is the minimum convex set containing $S$, that is,~$H_\mathcal{C}(S)$ is the intersection of all convex sets containing $S$. When $H_\mathcal{C}(S) = V$ then $S$ is a {\it hull set} of $G$.
Some well-studied graph convexities are the so-called \emph{path convexities}, such as the $P_3$-convexity~\cite{er1972}, geodesic convexity~\cite{faja}, and monophonic convexity~\cite{du1988}.

In the following, we show that instances of the generalized \pname{TSS} model
induce graph convexities in most cases. Let $(G,\tau)$ be an instance of the generalized \pname{TSS} model, where~$G=(V,E)$ is a graph and $\tau:V\to\NN$ is a generalized threshold function. For every set $S\subseteq V$, let the \emph{interval} $I_\tau(S)\supseteq S$ be the union of the set $S$ with the set of all vertices $v$ outside $S$ which have $\tau(v)$ neighbors in $S$. From this, let $\mathcal{C_\tau}$ be the family of subsets $S$ of $V$ such that~$I_\tau(S)=S$ (that is, no vertex $v$ outside~$S$ has $\tau(v)$ neighbors in $S$).

\begin{lemma}\label{lem-convexity}
Given an instance $(G,\tau)$ of the generalized \textsc{TSS} model, where $G=(V,E)$ is a graph and~$\tau:V\to\NN$ is a generalized threshold function in $G$, the pair $(G,\mathcal{C}_\tau)$ is a graph convexity if and only if $V=\emptyset$ or all thresholds are strictly positive.
\end{lemma}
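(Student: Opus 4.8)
The plan is to verify, one at a time, the three defining axioms of a graph convexity for the family $\mathcal{C}_\tau = \{S \subseteq V : I_\tau(S) = S\}$, where $I_\tau(S) = S \cup \{v \in V \setminus S : |N(v) \cap S| \geq \tau(v)\}$, and to pin down exactly which axiom is sensitive to the value of the thresholds. My expectation is that two of the three axioms hold unconditionally, and that positivity of the thresholds is needed precisely for the third one.

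First I would dispatch the two threshold-independent axioms. The containment $V \in \mathcal{C}_\tau$ is immediate, since there is no vertex outside $V$, so nothing can be added and $I_\tau(V) = V$. For closure under intersection, I would argue by monotonicity: given $C_1, C_2 \in \mathcal{C}_\tau$ and a vertex $v \notin C_1 \cap C_2$, we may assume $v \notin C_1$; since $I_\tau(C_1) = C_1$ the vertex $v$ was not added to $C_1$, i.e.\ $|N(v) \cap C_1| < \tau(v)$, and because $C_1 \cap C_2 \subseteq C_1$ we get $|N(v) \cap (C_1 \cap C_2)| \leq |N(v) \cap C_1| < \tau(v)$, so $v \notin I_\tau(C_1 \cap C_2)$. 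Hence $I_\tau(C_1 \cap C_2) = C_1 \cap C_2$, giving $C_1 \cap C_2 \in \mathcal{C}_\tau$. Crucially, this argument uses no hypothesis on $\tau$ whatsoever, so the only axiom left to scrutinize is $\emptyset \in \mathcal{C}_\tau$.

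The heart of the proof is then the analysis of the empty set. Since every vertex $v$ satisfies $|N(v) \cap \emptyset| = 0$, we have $I_\tau(\emptyset) = \{v \in V : 0 \geq \tau(v)\} = \{v \in V : \tau(v) = 0\}$, where the last equality uses that $\tau$ takes values in $\NN$. Consequently $\emptyset \in \mathcal{C}_\tau$ if and only if no vertex has threshold $0$, that is, if and only if all thresholds are strictly positive. Combining the three axioms yields that, for $V \neq \emptyset$, the pair $(G, \mathcal{C}_\tau)$ is a graph convexity exactly when every threshold is strictly positive.

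Finally I would record the degenerate case $V = \emptyset$, in which $\mathcal{C}_\tau = \{\emptyset\}$, the axioms $\emptyset = V \in \mathcal{C}_\tau$ and closure under intersection hold vacuously, and the condition ``all thresholds strictly positive'' is vacuously true as well; this is why the empty graph appears as a separate disjunct in the statement. The only genuinely subtle point in the whole argument is the intersection axiom: one might a priori expect positivity of thresholds to be needed there too, but the monotonicity of the map $S \mapsto |N(v) \cap S|$ shows it is not, and isolating the empty-set axiom as the sole place where positivity is required is the key observation.
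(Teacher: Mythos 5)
Your proof is correct and follows essentially the same route as the paper's: both verify $V\in\mathcal{C}_\tau$ vacuously, establish closure under intersection via the monotonicity of $S\mapsto |N(v)\cap S|$ (without needing positivity), and isolate the axiom $\emptyset\in\mathcal{C}_\tau$ as the one equivalent to all thresholds being strictly positive. No gaps.
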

\begin{proof}
We have to prove that $\emptyset, V\in\mathcal{C}_\tau$ and that if $C_1,C_2\in \mathcal{C}_\tau$, then $C_1\cap C_2\in \mathcal{C}_\tau$. Clearly~$V\in\mathcal{C}_\tau$ by vacuity, since there is no vertex outside $V$.
Thus, if $V=\emptyset$, $(G,\mathcal{C}_\tau)$ is a graph convexity, since the only subset~$S$ of $V$ is $S=\emptyset=V$. So assume that $V\ne\emptyset$.

First consider that there is a vertex $v$ with threshold $\tau(v)=0$.
With this, we have that~$S=\emptyset$ is not convex, since $v\not\in S=\emptyset$ and $v$ has $\tau(v)=0$ neighbors in $S=\emptyset$. Then~$(G,\mathcal{C}_\tau)$ is not a graph convexity.

Now assume that all thresholds are strictly positive.
Therefore $S=\emptyset$ is convex, since any vertex $v$ does not have $\tau(v)$ neighbors in $S=\emptyset$. Finally, consider two sets $S_1,S_2\in\mathcal{C}_\tau$ and let $S=S_1\cap S_2$. If $S_1=V$ or $S_2=V$, then $S=S_2$ or $S=S_1$, respectively, and hence~$S\in\mathcal{C}_\tau$. So assume that $S_1\ne V$ and $S_2\ne V$, and let~$v\in V\setminus S$. Then $v\not\in S_1$ or~$v\not\in S_2$. Consider~$v\not\in S_1$. Since $S_1\in\mathcal{C}_\tau$, $v$ does not have $\tau(v)$ neighbors in~$S_1$ and consequently does not have $\tau(v)$ neighbors in $S$. The case $v\not\in S_2$ is analogous. Then~$S=S_1\cap S_2\subseteq\mathcal{C}_\tau$.
\end{proof}

In this context, we can also define the \emph{activation time} $t_{\tau}(S)$ of a vertex subset~$S$ (not necessarily a target set) as the minimum $t$ such that $I_\tau^{t+1}(S)=I_\tau^t(S)$, where $I_\tau^k(S)$ is the $k$-th iterate of the interval function, defined recursively as $I_\tau^0(S)=S$ and $I_\tau^k(S)=I_\tau^{k-1}(I_\tau(S))$ for~$k\geq 1$. We can also define $H_\tau(S)$ as $H_\tau(S)=I_\tau^{t_{\tau}(S)}(S)$. This definition of $H_\tau(S)$ is useful even when $(G,\mathcal{C}_\tau)$ is not a graph convexity (for example, when some thresholds are 0). Recall that a vertex subset $S$ is a target set if~$H_\tau(S)=V$. In this paper, we will use these notations $I_\tau(S)$ and $H_\tau(S)$. When $\tau$ is clear in the context, the subscript will be removed from the notations $I(S)$ and $H(S)$.

The study of complexity aspects related to the computation of graph convexity parameters have been the main goal of various recent papers~\cite{DOURADO2017,benevides-geo,barbosa-2012,eurinardo,MarcilonS18,BUENO201822}.
From \cref{lem-convexity}, all known convexity parameters, such as the Carath\'eodory number, the Radon number, the Helly number and the convexity number~\cite{DOURADO2017} are meaningful in the \pname{TSS} model and can be investigated in this context.

\medskip
\noindent \textbf{Parameterized complexity}. We refer the reader to~\cite{DF13,CyganFKLMPPS15} for basic background on parameterized complexity, and we recall here only the definitions used in this article. A \emph{parameterized problem} is a language $L \subseteq \Sigma^* \times \mathbb{N}$. For an instance $I=(x,k) \in \Sigma^* \times \mathbb{N}$, $k$ is called the \emph{parameter}.

A parameterized problem $L$ is \emph{fixed-parameter tractable} (\FPT) if there exists an algorithm~$\mathcal{A}$, a computable function $f$, and a constant $c$ such that given an instance $I=(x,k)$,~$\mathcal{A}$ (called an \FPT \emph{algorithm}) correctly decides whether $I \in L$ in time bounded by $f(k) \cdot |I|^c$.
	

Within parameterized problems, the class \W[1] may be seen as the parameterized equivalent to the class \NP~of classical decision problems. Without entering into details (see~\cite{DF13,CyganFKLMPPS15} for the formal definitions), a parameterized problem being \W[1]-\emph{hard} can be seen as a strong evidence that this problem is {\sl not} \FPT.

\medskip
\noindent \textbf{Minors, treewidth, and bounded local treewidth}. A graph $H$ is a \emph{minor} of a graph~$G$ if $H$ can be obtained from a subgraph of $G$ by contracting edges. A graph class ${\cal C}$ is \emph{minor-closed} if whenever a graph $G$ belongs to ${\cal C}$, all its minors belong to ${\cal C}$ as well.  A graph is \emph{planar} if it can be drawn in the plane so that its edges may intersect only in the extremities. A graph $G$ is an \emph{apex graph} if it contains a vertex whose removal from $G$ results in a planar graph.

Let $k \geq 0$ be an integer. A graph $G$ is a \emph{$k$-tree} if $G$ can be obtained from a clique of size~$k+1$ by repeatedly adding vertices adjacent to a clique of size $k$ of the current graph. The \emph{treewidth} of a graph $G$, denoted by $\tw(G)$, is the smallest integer $k$ such that $G$ is a subgraph of a $k$-tree.

A graph class ${\cal C}$ has \emph{bounded local treewidth} if there exists a function $f:\NN \to\NN$ such that, for every graph $G \in {\cal C}$, every vertex $v \in V(G)$, and every positive integer $k$, $\tw( G [N_k[v]]) \leq f(k)$. Examples of graph classes of bounded local treewidth are graphs of bounded treewidth, graphs of bounded degree, planar graphs, or graphs of bounded genus; see~\cite{Grohe03} for more on bounded local treewidth. The following theorem of Eppstein~\cite{Eppstein00} states that, for minor-closed graph classes, having bounded local treewidth is equivalent to excluding some apex graph.

\begin{theorem}[Eppstein~\cite{Eppstein00}]
\label{thm:Eppstein}
Let ${\cal C}$ be a minor-closed graph class.  Then ${\cal C}$ has bounded local treewidth if and only if ${\cal C}$ does not contain all apex graphs.
\end{theorem}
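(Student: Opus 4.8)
The plan is to reduce the statement to the equivalent \emph{diameter-treewidth property} and then exploit Robertson and Seymour's Excluded Grid Theorem. The first observation is that, for a minor-closed class ${\cal C}$, bounded local treewidth is equivalent to the existence of a function $g:\NN\to\NN$ with $\tw(H)\le g(\operatorname{diam}(H))$ for every $H\in{\cal C}$. Indeed, any ball $G[N_k[v]]$ has diameter at most $2k$ (a shortest path in $G$ from $v$ to a vertex of $N_k[v]$ stays inside $N_k[v]$), and it is an induced subgraph of $G$, hence a minor, hence a member of ${\cal C}$; conversely a graph of diameter $d$ is one of its own radius-$d$ balls. So it suffices to prove that ${\cal C}$ has the diameter-treewidth property if and only if it excludes some apex graph. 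Throughout I would use the apex graph $H_m$ consisting of an $m\times m$ grid together with one universal \emph{apex} vertex adjacent to all grid vertices; note that $\tw(H_m)\ge m$ while $\operatorname{diam}(H_m)=2$.

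For the direction ``bounded local treewidth $\Rightarrow$ excludes an apex graph'' I would argue by contraposition. If ${\cal C}$ contains every apex graph, then in particular $H_m\in{\cal C}$ for all $m$. Since $H_m$ has diameter $2$ and treewidth at least $m$, no function $g$ can bound $\tw(H_m)$ by $g(\operatorname{diam}(H_m))=g(2)$, so ${\cal C}$ fails the diameter-treewidth property and therefore has unbounded local treewidth.

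For the converse I would again argue by contraposition: assuming ${\cal C}$ has unbounded local treewidth, I would show that ${\cal C}$ contains every apex graph $A$. The key reduction is that every apex graph $A$ is a minor of $H_m$ for $m$ large enough, since its planar part is a minor of a sufficiently large grid and its single apex maps to the universal vertex of $H_m$. As ${\cal C}$ is minor-closed, it therefore suffices to show $H_m\in{\cal C}$ for every $m$, and for this it is enough to exhibit some member of ${\cal C}$ having $H_m$ as a minor. Using unboundedness of local treewidth with the bounding function chosen below, I would produce a ball $B=G[N_k[v]]$ with $G\in{\cal C}$ whose treewidth is enormous relative to its diameter $d\le 2k$; by the Excluded Grid Theorem, $B$ then contains an $r\times r$ grid minor with $r$ as large as desired in terms of $d$ and $m$. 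The remaining step is the geometric core: a small-diameter graph carrying a huge grid minor must contain $H_m$ as a minor.

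The main obstacle is precisely this last step, which I would isolate as a lemma: if $B$ has diameter at most $d$ and an $r\times r$ grid minor with $r\ge h(d,m)$, then $B$ has $H_m$ as a minor. The idea is to fix a vertex $w$ of $B$ and observe that, since $B$ has diameter $\le d$, there is a path of length $\le d$ from $w$ to each of the $r^2$ branch sets of the grid minor. These paths need not be disjoint, and turning them into the branch set of a single apex vertex dominating an $m\times m$ subgrid is the technical heart of the argument: one must select an $m\times m$ subgrid together with pairwise internally vertex-disjoint short paths from $w$ to its branch sets that also avoid those branch sets. This is a dispersion/linkage argument that crucially uses $r\gg d,m$, so that among the many grid cells one can greedily route disjoint short connections; contracting the union of $w$ with these paths to a single vertex, and contracting the chosen subgrid to a clean grid, yields $H_m$ as a minor and closes the proof.
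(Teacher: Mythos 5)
The paper does not prove this statement: it is imported as a black box from Eppstein~\cite{Eppstein00}, so there is no in-paper proof to compare against, and your attempt has to be judged on its own. Your outline does follow the route of Eppstein's actual proof. The translation of bounded local treewidth into the diameter--treewidth property is correct (balls $G[N_k[v]]$ are connected, have diameter at most $2k$ because shortest paths from $v$ stay inside the ball, and lie in ${\cal C}$ by minor-closedness). The direction ``bounded local treewidth $\Rightarrow$ excludes an apex graph'' via the graphs $H_m$ (diameter $2$, treewidth at least $m$) is complete and correct, as is the reduction of the converse to the two facts ``every apex graph is a minor of some $H_m$'' and ``a bounded-diameter graph with a sufficiently large grid minor contains $H_m$ as a minor.''

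The gap is that last lemma, which you rightly call the technical heart but do not actually prove, and the difficulty is not mere bookkeeping. As described, the construction is circular: the short paths from $w$ are routed to the branch sets of ``the chosen $m\times m$ subgrid,'' yet which subgrid can be chosen depends on which cells the paths contaminate, so neither can be fixed first. Moreover, ``pairwise internally vertex-disjoint'' is neither necessary (the union of paths through $w$ is connected in any case) nor the real issue; what must be shown is that the union $A$ of the paths can be arranged to avoid the $m^2$ selected branch sets while (i) each selected branch set stays connected, (ii) adjacent ones stay adjacent, and (iii) $A$ keeps an edge into each --- and a path's endpoint may well land in a piece of its super-cell that $A$ itself separates from the rest. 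Closing this requires a genuine iterative, quantitative argument (e.g.\ routing to one super-cell at a time and showing that an $s\times s$ super-cell with $s$ much larger than the total contamination $O(m^2 d)$ retains a usable connected core), which is exactly where Eppstein's proof does its work. Note the diameter hypothesis cannot be bypassed here: the $r\times r$ grid alone is planar and has no $H_m$ minor for $m\geq 3$. So the proposal is a correct high-level plan whose central lemma is asserted rather than established.
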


\medskip
\noindent \textbf{Monadic second-order logic of graphs}. The syntax of \emph{monadic second-order logic} (\MSO)  of graphs includes the logical connectives $\vee$, $\wedge$, $\neg$, variables for vertices, edges, sets of vertices and sets of edges, the quantifiers $\forall, \exists$ that can be applied to these variables, and the binary relations expressing whether a vertex or an edge belong to a set, whether an edge is incident to vertex, whether two vertices are adjacent, and whether two sets are equal. \MSOone is the restriction of \MSO where only quantification over sets of vertices (but not edges) is allowed.
The following result of Courcelle~\cite{Courcelle90}
is one of the most widely used results in the area of parameterized complexity.

\begin{theorem}[Courcelle~\cite{Courcelle90}]
\label{thm:Courcelle}
Checking whether an \MSO formula $\phi$ holds on an $n$-vertex graph of treewidth at most $\tw$ can be done in time $g(\phi, \tw) \cdot n$, for a computable function $g$.
\end{theorem}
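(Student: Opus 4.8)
The plan is to prove this via the classical route through finite tree automata, reducing \MSO model-checking on a bounded-treewidth graph to a membership test for a regular tree language. First I would preprocess: given $G$ and the bound $\tw$, compute a rooted tree decomposition of width at most $\tw$ with $O(n)$ nodes and massage it into a standard (``nice'') form in which each internal node is an introduce, forget, or join node. The cost of computing such a decomposition can be absorbed into the function $g$.

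Next I would encode the decomposed graph as a node-labeled tree $T$ over a finite alphabet $\Sigma_{\tw}$ whose size depends only on $\tw$. Fixing $\tw+1$ abstract ``slots'' for the vertices of a bag, each label records which slots are occupied, the adjacencies among occupied slots, and the introduce/forget/join operation relating a node to its children; an edge of $G$ is recorded once, at the topmost node whose bag contains both of its endpoints. Crucially, $\Sigma_{\tw}$ is finite, so $T$ is a tree over a fixed alphabet, independent of $n$.

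The heart of the argument is a logical transfer: for every \MSO sentence $\phi$ over graphs there is an \MSO sentence $\hat\phi$ over $\Sigma_{\tw}$-labeled trees such that $G \models \phi$ if and only if $T \models \hat\phi$. A subset of $V(G)$ (or of $E(G)$) corresponds to a coloring of the nodes of $T$ by subsets of the finitely many slots, which is itself expressible by monadic set quantification over $T$; adjacency of two vertices is then read off from the unique node at which their common edge was recorded. This step is what I expect to be the main obstacle, since one must simulate quantification over vertex and edge sets of $G$ by quantification over labelings of $T$ while preserving truth, and verify that incidence and adjacency remain \MSO-expressible after the encoding.

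Finally I would invoke the Thatcher--Wright--Doner characterization: an \MSO-definable tree language is exactly one recognized by a finite bottom-up tree automaton. Compiling $\hat\phi$ into such an automaton $\mathcal{A}_\phi$ yields a machine whose number of states is bounded by a computable function of $\phi$ and $\tw$ (the dependence on $|\phi|$ being necessarily non-elementary, as it arises from determinization at each quantifier alternation). Running $\mathcal{A}_\phi$ bottom-up on $T$ processes each of the $O(n)$ nodes in time depending only on $\phi$ and $\tw$, so the whole computation takes $g(\phi,\tw)\cdot O(n)$ time, and $G\models\phi$ iff the state reached at the root is accepting. An equivalent, automaton-free variant of the last two steps would perform dynamic programming directly on the nice decomposition, storing at each node the finitely many \MSO types (up to the quantifier rank of $\phi$) of the partial graph processed so far and composing them via a Feferman--Vaught--style lemma; the number of such types is again bounded by $g(\phi,\tw)$, giving the same running time.
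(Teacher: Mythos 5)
This statement is Courcelle's Theorem, which the paper does not prove at all: it is imported as a black box with a citation to~\cite{Courcelle90}, so there is no in-paper proof to measure your attempt against. Your sketch is the standard (and correct) automata-theoretic route: nice tree decomposition, encoding as a tree over a finite alphabet $\Sigma_{\tw}$ with edges recorded at the topmost common bag, an \MSO-to-\MSO interpretation transferring $\phi$ on $G$ to $\hat\phi$ on the labeled tree, and then Thatcher--Wright--Doner compilation into a bottom-up tree automaton run in time $g(\phi,\tw)\cdot n$; the Feferman--Vaught-style dynamic programming over \MSO types is indeed an equivalent formulation. The only point I would press you on is the preprocessing step: "the cost of computing such a decomposition can be absorbed into the function $g$" is true but not automatic --- you need Bodlaender's linear-time algorithm (or a comparable \FPT approximation) to produce a width-$\Ocal(\tw)$ decomposition in time $f(\tw)\cdot n$, and this was not available when Courcelle's original paper appeared (the 1990 statement assumes the decomposition is given). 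Citing that explicitly would close the one genuine gap; everything else in your outline, including the remark that the dependence on $|\phi|$ is necessarily non-elementary due to repeated determinization, is accurate.
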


\section{\NP-completeness results for the TSS-time problem}
\label{sec:NPcomp}

In this section, we prove \NP-completeness results  for the \pname{TSS-time} problem. Namely, we prove \NP-completeness for general graphs in \cref{thm:NPC}, for apex graphs in \cref{thm:hard-apex}, and for bipartite graphs in \cref{thm:NPC2}.
We begin by proving membership in \NP.

%

\magenta{Let $t_\tau(v,S_0)$ be the activation time of vertex~$v$ in the process initiated by $S_0$ (set $t(v,S_0)=\infty$ if $S_0$ cannot activate $v$). We first prove that, for every set $S_0$, it is possible to compute $t_\tau(v,S_0)$ for every vertex $v$ in time $\Ocal(m+n)$. Also, let $t_\tau(S_0)=\max_{v\in V(G)}\{t_\tau(v,S_0)\}$. Thus, a vertex set $S_0$ is a target set if \b{and only if} $t_\tau(S_0)<\infty$.}

\begin{lemma}\label{lem:t(v,S0)}
Let $G$ be a graph and $\tau$ be a generalized threshold function in $G$.
Given a set~$S_0\subseteq V(G)$, it is possible to compute $t_\tau(v,S_0)$ for all vertices $v$ of $G$ in  time~$\Ocal(m+n)$.
\end{lemma}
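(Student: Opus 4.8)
The plan is to compute all values $t_\tau(v,S_0)$ at once by simulating the synchronous activation process layer by layer, in the style of a multi-source breadth-first search, while maintaining for each vertex a running count of its already-active neighbors. Concretely, I would keep two arrays indexed by $V(G)$: a label $t(v)$, initialized to $\infty$, which will hold the value $t_\tau(v,S_0)$; and a counter $c(v)$, initialized to $0$, which throughout the execution will equal the number of neighbors of $v$ that have been activated so far. I would also maintain, for each integer $i\geq 0$, the list $A_i$ of vertices that get activated exactly at time~$i$.

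\smallskip
\noindent\emph{The steps, in order.} First, I set $t(v)=0$ and put $v$ into $A_0$ for every $v\in S_0$; to respect the convention that a vertex with $\tau(v)=0$ is activated at time~$1$, I also set $t(v)=1$ and place $v$ into $A_1$ for every $v\notin S_0$ with $\tau(v)=0$. Then I process the layers in increasing order $i=0,1,2,\dots$: for each $u\in A_i$ I scan its incident edges exactly once, and for every neighbor $w$ with $t(w)=\infty$ I increment $c(w)$; if this increment makes $c(w)\geq\tau(w)$, I set $t(w)=i+1$ and append $w$ to $A_{i+1}$. Testing $t(w)=\infty$ before updating ensures that a vertex is labeled and enqueued only once. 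The loop stops as soon as the next layer is empty, and every vertex still carrying the label $\infty$ is reported as having $t_\tau(v,S_0)=\infty$, which is correct since such vertices are never reached by the process.

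\smallskip
\noindent\emph{Correctness.} I would prove by strong induction on $i$ that the list $A_i$ computed by the algorithm coincides with the set $S_i$ of the true process and that every $w\in A_i$ receives label $t(w)=i$. The base case is immediate from the initialization. For the inductive step, assuming $A_0=S_0,\dots,A_i=S_i$, the key observation is that after scanning layer $A_i$ the counter of any still-inactive vertex $w$ equals $|N(w)\cap(S_0\cup\dots\cup S_i)|$; moreover, such a $w$ had counter strictly below $\tau(w)$ before this scan, as otherwise it would already have been activated at time~$i$. Hence $w$ is placed into $A_{i+1}$ exactly when $|N(w)\cap(S_0\cup\dots\cup S_i)|\geq\tau(w)$ and $w\notin S_0\cup\dots\cup S_i$, which is precisely the definition of $S_{i+1}$; this gives $A_{i+1}=S_{i+1}$ with the correct label $i+1$.

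\smallskip
\noindent\emph{Running time and main obstacle.} The initialization costs $\Ocal(n)$. Since each vertex belongs to at most one layer and its incident edges are scanned only while processing that layer, the total number of edge inspections is at most $\sum_{v}d(v)=2m$, so the whole procedure runs in time $\Ocal(m+n)$. I expect the delicate point to be the \emph{synchronicity} of the process: a vertex may reach its threshold through the combined contribution of neighbors lying in several distinct earlier layers, and one must ensure that accumulating these contributions does not activate a vertex one step too early. Handling the layers strictly in order and fixing a label only at the moment the counter first crosses the threshold is exactly what guarantees the correct timing, and verifying this invariant is the crux of the argument; the linear bound then follows from the standard breadth-first-search accounting.
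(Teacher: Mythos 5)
Your proposal is correct and essentially identical to the paper's proof: the paper also simulates the process with a multi-source BFS-style traversal in $\Ocal(m+n)$ time, using a FIFO queue and decrementing each vertex's threshold as active neighbors are processed (activating it when the residual threshold hits zero), whereas you keep explicit layers and an incrementing counter --- a trivial reformulation. Both handle threshold-$0$ vertices by assigning them time $1$ at initialization, and your layer-by-layer induction just makes explicit the correctness argument the paper leaves as "the algorithm simulates the activation process."
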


\begin{proof}
Consider the following algorithm. Let $Q$ be an empty queue and $t$ an array such that~$t[v]=\infty$ for any $v\in V(G)\setminus S_0$. For each $v\in S_0$, set $t[v]=0$ and enqueue $v$ in $Q$. For each vertex $v\not\in S_0$ with threshold 0, set $t[v]=1$ and enqueue $v$ in $Q$.

\vspace{-.15cm}
\begin{itemize}
\item Algorithm \pname{Activation-Times} (set $S_0$)
\item[1]\ \textbf{while} $Q\ \ne\ \emptyset$ \textbf{do}
\item[2]\ \ \ \ \ \ $v\ \leftarrow$\ \pname{Dequeue}($Q$)
\item[3]\ \ \ \ \ \ \textbf{for} each neighbor $u$ of $v$ with $t[u]= \infty$ \textbf{do}
\item[5]\ \ \ \ \ \ \ \ \ \ \ $\tau(u)\ \leftarrow\ \tau(u)-1$
\item[6]\ \ \ \ \ \ \ \ \ \ \ \textbf{if} $\tau(u)=0$ \textbf{then}
\item[7]\ \ \ \ \ \ \ \ \ \ \ \ \ \ \ \ $t[u]\ \leftarrow \ t[v]+1$
\item[9]\ \ \ \ \ \ \ \ \ \ \ \ \ \ \ \ \pname{Enqueue}($Q,u$)
\item[10]\ \textbf{return}  array $t$
\end{itemize}



\magenta{The above algorithm simulates the activation process. Since every edge is analyzed at most twice, \pname{Activation-Times($S_0$)} runs in time $\Ocal(m+n)$.}
\end{proof}

With this, we have membership in \NP\ for \pname{GTSS-time}.
\begin{corollary}\label{lem:facilNP}
The \pname{GTSS-time} problem is in \NP.
\end{corollary}

\begin{proof}
Given an instance $(G,\tau,k)$ of \pname{GTSS-time}, a certificate (or proof) for it is a vertex subset $S_0\subseteq V(G)$. With the Algorithm \pname{Activation-Times}, $t_\tau(v,S_0)$ can be computed for all $v\in V(G)$ in $\Ocal(m+n)$ time. With this, $t_\tau(S_0)$ can be computed and compared with $k$.
\end{proof}

Before moving to the hardness results, consider first the \pname{GTSS-time} problem with all thresholds being at most 1. Given a connected graph $G$, it is easy to see that one vertex is sufficient to activate all vertices. If there are vertices with threshold 0, then $\emptyset$ is a target set and then $t_\tau(G)=t_\tau(\emptyset)$, which can be computed in linear time by the algorithm \pname{Activation-Times} with $S_0=\emptyset$. Otherwise, every single vertex is a target set and then $t_\tau(G)$ is the diameter of $G$, which can be computed in time $\Ocal(m\cdot n)$. With this, we have the following:


\begin{lemma}\label{lem:tau1}
Let $G$ be a graph and $\tau$ be a generalized threshold function in $G$ satisfying $\tau^*\leq 1$.
Then~$t_{\tau}(G)$ can be computed in time $\Ocal(m\cdot n)$.
Thus, \pname{GTSS-time} is $\Ocal(m\cdot n)$-time solvable if $\tau^*\leq 1$.
\end{lemma}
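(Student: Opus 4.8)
The plan is to exploit the fact that, since all thresholds are at most~$1$, activation spreads along edges very quickly, so that the maximum activation time is always attained by an inclusion-wise minimal target set. The first ingredient I would establish is a monotonicity property: if $S_0 \subseteq S_0'$, then $I_\tau^t(S_0) \subseteq I_\tau^t(S_0')$ for every $t \geq 0$, which follows by a straightforward induction on~$t$ from the monotonicity of the interval function $I_\tau$. Consequently $t_\tau(v,S_0') \leq t_\tau(v,S_0)$ for every vertex~$v$, and hence $t_\tau(S_0') \leq t_\tau(S_0)$: enlarging the target set can only speed up the process. Therefore, to compute $t_\tau(G) = \max_{S_0} t_\tau(S_0)$ it suffices to restrict attention to the smallest target sets, and since (as noted in the excerpt) we may assume $G$ connected, I would split the argument into two cases according to whether some vertex has threshold~$0$.

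In the first case some vertex has threshold~$0$, and here I would argue that $S_0 = \emptyset$ is already a target set. Indeed, every threshold-$0$ vertex is activated at time~$1$ (it vacuously has at least $0$ active neighbors), and from then on, since $G$ is connected and every remaining threshold is exactly~$1$, activation propagates along edges to the whole graph, so $H_\tau(\emptyset) = V$. By the monotonicity above, $\emptyset$ maximizes the activation time among all target sets, whence $t_\tau(G) = t_\tau(\emptyset)$. This value is computed in time $\Ocal(m+n)$ by running the algorithm \pname{Activation-Times} of \cref{lem:t(v,S0)} with $S_0 = \emptyset$ and returning the maximum entry of the resulting array.

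In the second case every threshold equals~$1$. Now $\emptyset$ is not a target set (no vertex ever activates), so the minimal target sets are exactly the singletons $\{v\}$, each of which is a target set because the activation started at~$v$ reaches every vertex of the connected graph~$G$. I would observe that, with unit thresholds, a vertex~$u$ is activated at the first time step at which one of its neighbors is already active, so that $t_\tau(u,\{v\})$ equals the distance between $u$ and~$v$; hence $t_\tau(\{v\})$ is the eccentricity of~$v$. Combining this with the monotonicity property gives $t_\tau(G) = \max_{v} t_\tau(\{v\}) = \max_v \mathrm{ecc}(v)$, which is precisely the diameter of~$G$. The diameter is computed by performing a breadth-first search from each vertex, i.e.\ $n$ runs of $\Ocal(m+n)$ time each, for a total of $\Ocal(m \cdot n)$ (using $m \geq n-1$ for connected~$G$).

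Taking the two cases together yields the claimed $\Ocal(m\cdot n)$ bound, and since $t_\tau(G)$ is computed exactly, comparing it with~$k$ solves \pname{GTSS-time} with $\tau^* \leq 1$ within the same time. The only point requiring care --- and what I regard as the main (mild) obstacle --- is the monotonicity lemma and its use to certify that a minimal target set is the maximizer; everything else is a direct simulation of the process. I would also double-check the boundary behaviour of threshold-$0$ vertices, which are activated at time~$1$ from the empty set, consistently with the remark in the excerpt that such vertices are ``always activated at time at most~$1$''.
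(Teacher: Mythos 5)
Your proposal is correct and follows essentially the same route as the paper: the paper's (very terse) argument also assumes $G$ connected, splits on whether a threshold-$0$ vertex exists, takes $S_0=\emptyset$ and runs \pname{Activation-Times} in the first case, and identifies $t_\tau(G)$ with the diameter via singleton target sets in the second. The only difference is that you make explicit the monotonicity of $I_\tau$ under enlarging the target set, which the paper leaves implicit; this is a welcome clarification but not a different proof.
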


Now let us consider the case where all thresholds are equal to 2.
As mentioned in the introduction, the $2$-neighbor bootstrap percolation model is equivalent to the generalized \pname{TSS} model with all thresholds equal to $2$.
With this, let us list again shortly the existing hardness results for the $2$-\textsc{Neighbor Bootstrap Percolation-time} problem:
\NP-hardness in planar graphs \cite{benevides2015}, \NP-hardness in general graphs for fixed $k=4$ \cite{benevides2015}, \NP-hardness in bipartite graphs for fixed $k=5$ \cite{wg2014-tcs}, \NP-hardness in bounded degree graphs for $k=\Theta(\log n)$ \cite{MarcilonS18}, and \W[1]-hardness when parameterized  by  treewidth \cite{MarcilonS18-tw}.
All these hardness results also apply to \textsc{GTSS-time} with all thresholds equal to $2$, but cannot be extended directly to \pname{TSS-time}, since they use many vertices of degree~1, which have activation time 0 and are important to control the maximum activation time.

However, except in the case of bipartite graphs, it is possible to apply local changes to all these reductions by replacing every vertex $p$ of degree 1 by two adjacent vertices $p_1$ and~$p_2$, forming a triangle with the original neighbor $q$ of $p$ (this is the reason why this replacement does not work in bipartite graphs).
Consider any of the reductions mentioned above, and let $(G,k)$ be the original constructed instance of $2$-\textsc{Neighbor Bootstrap Percolation time}, where~$G$ is the graph and $k$ is the desired activation time. Let $(G',k')$ be the instance where $G'$ is the graph obtained from $G$ with this modification and $k'=k+1$.
Since all thresholds are~2, we have that, for each vertex $p$ of degree~1 in $G$, at least one of $p_1$ or $p_2$ (say $p_1$ w.l.g.) must be in any target set of $G'$ and will play in $G'$ the same role as $p$ in $G$.
Thus any target set $S$ of $G$ induces a target set $S'$ of $G'$, which activates $p_2$ in one time step more than $q$.
Now consider a target set $S'$ of $G'$. One important point in all these reductions is that the neighbor $q$ of any vertex $p$ of degree~1 is always activated (with the help of $p$) by a forced set of vertices which must belong to any target set (hull set in their terminology)\magenta{\ and this set remains forced when the above modification is applied in the construction}. That is, for any vertex $p$ of degree 1 in $G$, $S''=(S'\cup\{p_1\})\setminus\{p_2\}$ is also a target set of $G'$ with activation time greater or equal to the activation time of $S'$.
With this, we may assume that, for every vertex $p$ of degree~1 in $G$, $S'$
contains $p_1$ and does not contain $p_2$, and consequently it induces a target set $S$ in $G$ (just replacing $p_1$ by $p$ for any $p$ of degree~1 in $G$).
Finally, in all these reductions, all vertices are activated at time at most $k-1$ for any target set, except a special vertex $z$ which can be activated at time $k$ if the reduction is from a {\sf yes}-instance. Moreover, $z$ has exactly one neighbor $p$ of degree~1 in $G$ (in all these reductions) and consequently the corresponding neighbor $p_2$ of $z$ in $G'$ can be activated at time $k'=k+1$.
%
These modifications can be safely applied to the \NP-hardness reductions for bounded degree graphs with $k=\Theta(\log n)$ \cite{MarcilonS18} and planar graphs \cite{benevides2015}, and in the \W[1]-hardness reduction when parameterized by treewidth \cite{MarcilonS18-tw}, yielding the following corollary.


\begin{corollary}\label{corol-W1}
The \pname{TSS-time} problem is \NP-hard in planar graphs, \NP-hard in graphs with maximum degree $\Delta$ for any fixed $\Delta\geq 4$ and $k=\Theta(\log n)$, and $\W[1]$-hard when parameterized by the treewidth of the input graph, even if all thresholds are equal to 2.
\end{corollary}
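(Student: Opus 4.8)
The plan is to obtain all three hardness statements simultaneously by recycling the corresponding reductions for $2$-\textsc{Neighbor Bootstrap Percolation-time} (equivalently, \pname{GTSS-time} with all thresholds equal to $2$): the planar reduction of~\cite{benevides2015}, the bounded-degree reduction with $k=\Theta(\log n)$ of~\cite{MarcilonS18}, and the reduction witnessing \W[1]-hardness parameterized by treewidth of~\cite{MarcilonS18-tw}. Each of these produces an instance $(G,k)$ in which $G$ has vertices of degree $1$ carrying threshold $2$; these are illegal for \pname{TSS-time} (where $\tau(v)\le d(v)$), so the entire difficulty is to delete them without disturbing the activation dynamics. For a fixed such reduction I would build $(G',\tau',k')$ by replacing each degree-$1$ vertex $p$, with unique neighbor $q$, by two new adjacent vertices $p_1,p_2$ each joined to $q$, so that $\{p_1,p_2,q\}$ induces a triangle; I set $\tau'\equiv 2$ and $k'=k+1$. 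Now $G'$ has no degree-$1$ vertex, so $(G',\tau',k')$ is a legitimate \pname{TSS-time} instance with $\tau^*=2$.

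For the forward direction, starting from a target set $S$ of $G$ (which in the percolation model must contain every degree-$1$ vertex $p$), I would let $S'$ be the image of $S$ in $G'$, using $p_1$ in place of each such $p$. Since $p_1$ contributes exactly the one active neighbor that $p$ used to contribute, every original vertex of $G'$ is activated at the same time as in $G$; in particular the distinguished vertex $z$ of the reduction is still activated at time $k$ whenever the original instance is a yes-instance. As $z$ has a degree-$1$ neighbor $p$, the triangle vertex $p_2$ then has two active neighbors ($p_1$ from the start and $z$ at time $k$), so it is activated at time $k+1=k'$, giving $t_{\tau'}(G')\ge k'$.

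For the backward direction I would exploit a feature shared by the three reductions: the neighbor $q$ of each original degree-$1$ vertex is activated by a set of vertices that is \emph{forced} into every target set, and this forcing is untouched by the triangle replacement. Consequently, for any target set $S'$ of $G'$ the exchange $(S'\cup\{p_1\})\setminus\{p_2\}$ is again a target set of no smaller activation time, so I may assume that $S'$ contains $p_1$ and omits $p_2$ for every such $p$; collapsing each $p_1$ back to $p$ yields a target set $S$ of $G$ whose original vertices keep their $G'$-activation times. Since each $p_2$ is activated exactly one step after its neighbor $q$, one obtains $t_{\tau'}(S')\le t_\tau(S)+1$ and, over all target sets, $t_{\tau'}(G')=t_\tau(G)+1$. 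In particular $t_{\tau'}(G')\ge k'$ if and only if the original activation time is at least $k$, which together with the forward direction gives the required equivalence for each reduction.

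It remains to check that the modification preserves each ambient restriction: drawing the triangle inside a face incident to $q$ keeps $G'$ planar; the new vertices have degree $2$ and only $d(q)$ grows, by one, so bounded maximum degree (hence $\Delta\ge 4$ with $k=\Theta(\log n)$) is maintained; and deleting the gadget recovers $G$, whence $\tw(G')\le\tw(G)+\Ocal(1)$ and the \W[1]-hardness parameterized by treewidth carries over. The step I expect to be the real obstacle is not this local gadget analysis but verifying, separately for each of the three published reductions, that (i) the neighbor $q$ of every pendant is genuinely forced and stays forced after the replacement, and (ii) the triangle creates no shortcut allowing $z$ (or any vertex) to activate before time $k$; any such shortcut would collapse the carefully tuned $+1$ gap on which the choice $k'=k+1$ depends.
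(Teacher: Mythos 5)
Your proposal is correct and follows essentially the same route as the paper: the same triangle replacement of each pendant vertex $p$ by $p_1,p_2$ adjacent to $q$, the same shift $k'=k+1$, the same exchange argument $(S'\cup\{p_1\})\setminus\{p_2\}$ relying on the forced activation of $q$, and the same verification that planarity, bounded degree, and treewidth are preserved. The caveats you flag at the end (that $q$ stays forced after the replacement and that no shortcut is created) are exactly the points the paper also singles out as the properties of the three published reductions that make the modification safe.
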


In the case of the reduction for fixed $k=4$ in general graphs~\cite{benevides2015}, which is from the \textsc{$3$-Sat} problem, this global argument does not work, since there is a unique vertex $z$ whose activation time is 3 or 4, depending on whether the \textsc{$3$-Sat} formula is satisfiable or not. However, by replacing every vertex $p$ of degree~1 by $p_1$ and $p_2$ as before, the activation time of $p_2$ is one more than the time of $q$ (the neighbor of $p$) and then the reduction fails for fixed $k=4$ (but works for $k=5$).
With a small additional change, the reduction can be corrected for $k=4$.
In the following, we present this modified reduction, where we also have to show that no additional vertex ($p_1$ or $p_2$) can be activated at time 4.
Moreover, although this reduction is similar to the one of \cite{benevides2015}, we present it in detail since we need to modify it in the proof of \NP-hardness for apex graphs shown in \cref{thm:hard-apex}.

\begin{theorem}\label{thm:NPC}
The \textsc{TSS-time} problem is \NP-complete even restricted to instances $(G,\tau,k)$ such that $\tau(v) = 2$ for every $v \in V(G)$, and $k \geq 4$ is fixed.
\end{theorem}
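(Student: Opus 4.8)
The plan is to prove both membership in \NP\ and \NP-hardness. Membership is immediate: \textsc{TSS-time} is the special case of \pname{GTSS-time} with $1 \le \tau(v) \le d(v)$, so \cref{lem:facilNP} already gives it, a candidate target set $S_0$ being verified in polynomial time via \cref{lem:t(v,S0)}. The bulk of the work is the hardness, for which I would reduce from \textsc{$3$-Sat}, adapting the reduction of Benevides et al.~\cite{benevides2015} for $2$-\textsc{Neighbor Bootstrap Percolation-time} at $k=4$. Given a $3$-CNF formula $\phi$ over variables $x_1,\dots,x_n$ and clauses $C_1,\dots,C_m$, I would take the graph produced by the original reduction and then remove every degree-$1$ vertex $p$ (which cannot carry threshold $2$), replacing it by two new vertices $p_1,p_2$ forming a triangle with $p$'s former neighbor $q$, so that the resulting $G$ has $\tau\equiv 2$.

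The first step is a \emph{forcing claim}: in such a triangle, if neither $p_1$ nor $p_2$ lies in the target set, then neither can ever be activated (each needs both the other and $q$ to be active, a deadlock), so every target set contains at least one of $p_1,p_2$. This reproduces the role that the forced degree-$1$ vertices played in the percolation reduction, supplying the ``initial wave'' of always-active vertices at time $0$. With this in hand, I would prove the two directions of correctness. If $\phi$ is satisfiable, I would exhibit a target set $S_0$ --- the forced vertices together with the literal vertices encoding a satisfying assignment --- and trace the activation process to show that a designated output vertex $z$ is activated exactly at time $4$, whence $t_\tau(G)\ge 4$. Conversely, if $\phi$ is unsatisfiable, I would show that every target set activates all of $G$ by time $3$, i.e.\ $t_\tau(G)\le 3$: after the standard normalization in which we assume $p_1\in S_0$ and $p_2\notin S_0$ for each replaced vertex, every target set restricts to a truth assignment, and an unsatisfied clause forces the second relevant neighbor of $z$ to be active already by time $2$, hence $z$ by time $3$. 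To obtain the statement for arbitrary fixed $k\ge 4$, I would append to $z$ a short triangulated path-like gadget of length $k-4$ (keeping all thresholds equal to $2$) that propagates the activation, so that the maximum time becomes $\ge k$ exactly when it was $\ge 4$ in the base construction.

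The main obstacle --- and precisely the reason a verbatim triangle replacement fails at $k=4$ --- is the extra vertex $p_2$ created at the output vertex $z$. Since $z$ itself has a degree-$1$ neighbor in the original construction, its replacement $p_2$ would be activated one step \emph{after} $z$; thus in the no-instance, where $z$ activates at time $3$, we would get $p_2$ at time $4$ and the reduction would wrongly accept. The delicate part of the proof is therefore the ``small additional change'' localized at $z$ (re-wiring $z$'s forced neighbor so that it primes $z$ with one active neighbor from time $0$ without itself being activated later than $z$), together with the accompanying case analysis. Concretely, the crux is to verify that whenever $\phi$ is unsatisfiable, \emph{none} of the newly introduced triangle vertices $p_1,p_2$ --- and the one adjacent to $z$ in particular --- is ever activated at time $4$, so that $t_\tau(G)\le 3$ is preserved. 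Establishing this controlled timing of all triangle vertices is where the careful bookkeeping of the argument concentrates.
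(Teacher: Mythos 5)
Your proposal follows essentially the same route as the paper: membership via \cref{lem:facilNP}, a reduction from \textsc{$3$-Sat} adapting the Benevides et al.\ construction by replacing degree-$1$ vertices with triangle gadgets whose forcing property you state correctly, the observation that the naive replacement breaks at $k=4$ precisely because of the auxiliary vertex attached to $z$, a localized fix giving $z$ an early-activated neighbor (the paper uses a small gadget $z_0,z_1,z_2,z_3,z_4$ in which $z_0$ activates at time $1$), and an appended path gadget for $k>4$. The correctness argument you sketch, including the need to check that no auxiliary triangle vertex activates at time $4$, matches the paper's.
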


\begin{proof}
We present a reduction from the \textsc{$3$-Sat} problem.
Let $\varphi = ({\cal X},{\cal C})$ be an instance of \pname{$3$-SAT}, where
${\cal X}=\{x_1, \dotsc,x_n\}$ is the set of variables and $\mathcal{C}=\{C_1,\dotsc,C_m\}$ is the set of clauses.  We may assume that each clause contains exactly 3 literals.
For $i \in [n]$, we denote the three literals of clause~$C_i$ by $\ell_{i,1}$, $\ell_{i,2}$, and $\ell_{i,3}$. The constructed graph $G$ is described below.

For every clause $C_i$, add the gadget depicted in \cref{gadget1a}.
Let $U$, $W$, and $B$ be the sets containing all vertices $u_{i,p}$, $w_{i,p}$, and $b_{i,p}$ for $p \in [3]$, respectively.
Let $U_i=\{u_{i,1}, u_{i,2}, u_{i,3}\}$.
For every pair of complementary literals $\ell_{i,p}, \ell_{j,q}$ for $i,j \in [n]$ and $p,q \in [3]$,
add a vertex $y_{(i,p),(j,q)}$ adjacent to $w_{i,p}$ and $w_{j,q}$.
Let $Y$ be the set of all vertices $y_{(i,p),(j,q)}$.
Finally, add six vertices $z,z_0,z_1,z_2,z_3,z_4$ and the edges $zz_0,z_0z_1,z_0z_2,z_0z_3,z_0z_4,z_1z_2$, and~$z_3z_4$. Also join $z$ with an edge to every vertex of $Y$ (see \cref{gadget1b}).
This completes the construction of the instance $(G,\tau,k)$ of \textsc{TSS-time}, where $k=4$ and $\tau(v) = 2$ for every $v \in V(G)$. Notice that $G$ does not contain vertices of degree~1, as required.

\begin{figure}[h]
\centering\scalebox{0.75}{
\begin{tikzpicture}[scale=0.9]
\tikzstyle{vertex}=[draw,circle,fill=black!25,minimum size=15pt,inner sep=1pt]

\foreach \i [evaluate=\i as \angle using \i*120+90] in {1,2,3}
    \node[vertex] (u\i) at (\angle:1cm) {$u_{i,\i}$};
\foreach \i [evaluate=\i as \angle using \i*120+90] in {1,2,3}
    \node[vertex] (w\i) at (\angle:3cm) {$w_{i,\i}$};

\node[vertex,shift={(160:3cm)}] (a1) at (w1) {$a_{i,1}$};
\node[vertex,shift={( 20:3cm)}] (a2) at (w2) {$a_{i,2}$};
\node[vertex,shift={(180:3cm)}] (a3) at (w3) {$a_{i,3}$};
\node[vertex,shift={(180:3cm)}] (b1) at (w1) {$b_{i,1}$};
\node[vertex,shift={( 0 :3cm)}] (b2) at (w2) {$b_{i,2}$};
\node[vertex,shift={(200:3cm)}] (b3) at (w3) {$b_{i,3}$};

\foreach \x/\y in {1/2,2/3,3/1} \draw[-] (u\x) to (u\y);
\foreach \x/\y in {1/2,2/3,3/1}	\draw [-] (w\x) to [bend right=30] (w\y);
\foreach \x in {1,2,3} \draw[-] (w\x) -- (u\x);
\foreach \x in {1,2,3} \draw[-] (a\x) -- (b\x);
\foreach \x in {1,2,3} \draw[-] (w\x) -- (a\x);
\foreach \x in {1,2,3} \draw[-] (w\x) -- (b\x);

\node at (-1,-1.1) {\textcolor{blue}{0}};
\node at ( 1,-1.1) {\textcolor{blue}{3}};
\node at (.6, 1) {\textcolor{blue}{3}};
\node at (-2.2,-2.1) {\textcolor{blue}{1}};
\node at (2.2,-2.1) {\textcolor{blue}{2}};
\node at (0.7,3) {\textcolor{blue}{2}};

\node at (-4,3) {\textcolor{blue}{0}};
\node at (-3.8,2) {\textcolor{blue}{3}};
\node at (-6.7,-1.5) {\textcolor{blue}{2}};
\node at (-6.5,-0.3) {\textcolor{blue}{0}};
\node at (6.7,-1.5) {\textcolor{blue}{3}};
\node at (6.5,-0.3) {\textcolor{blue}{0}};

\end{tikzpicture}}
\caption{Gadget for a clause $C_i$. All thresholds are 2. Notice that every target set must contain at least one of $u_{i,1}, u_{i,2}, u_{i,3}$ and at least one of $a_{i,p}, b_{i,p}$ for $p \in [3]$. The blue numbers near the vertices show the times of an example of activation process. Vertices with time 0 belong to the target set.}\label{gadget1a}
\end{figure}

Firstly notice that, for $i\in [m]$ and $p \in [3]$,  $a_{i,p}$ and $b_{i,p}$ cannot be activated only by $w_{i,p}$, since their degrees are equal to their thresholds (equal to two) and they are adjacent. That is, every target set must contain $a_{i,p}$ or $b_{i,p}$, say $a_{i,p}$ w.l.g.. The same argument applies to~$z_1$ and $z_2$ (say $z_1$ w.l.g.) and to $z_3$ and $z_4$ (say $z_3$ w.l.g.). From this, we have that $z_0$ is activated at time 1 and $z_2$ and $z_4$ are activated at time 2. The important fact here is that $z$ has a neighbor $z_0$ activated at time 1. Let $L$ be the set containing vertices $z_1$, $z_3$ and all vertices~$a_{i,p}$ for $i\in [m]$ and $p \in [3]$.

\begin{figure}[!bt]
\centering\scalebox{1.0}{
\begin{tikzpicture}[scale=1]
\tikzstyle{vertex}=[draw,circle,fill=black!25,minimum size=15pt,inner sep=1pt]

\node[vertex] (z)  at (0,1) {$z$};
\node[vertex] (z0) at (2.5,1) {$z_0$};
\node[vertex] (z1) at (1,0) {$z_1$};
\node[vertex] (z2) at (2,0) {$z_2$};
\node[vertex] (z3) at (3,0) {$z_3$};
\node[vertex] (z4) at (4,0) {$z_4$};

\path[-,thick] (z0) edge (z) edge (z1) edge (z2) edge (z3) edge (z4);
\path[-,thick] (z1) edge (z2) (z3) edge (z4);

\node at (1,-0.5) {\textcolor{blue}{0}};
\node at (2,-0.5) {\textcolor{blue}{2}};
\node at (3,-0.5) {\textcolor{blue}{0}};
\node at (4,-0.5) {\textcolor{blue}{2}};
\node at (3,1) {\textcolor{blue}{1}};

\draw (-3,1) ellipse (0.5cm and 1cm);
\node at (-3,1) {\textcolor{red}{$Y$}};
\path[-]
(z)edge(-2.4,1)edge(-2.5,0.2)edge(-2.5,1.8)edge(-2.4,1.4)edge(-2.4,0.6);
\node at (-3,2.3) {\textcolor{blue}{2 or 3}};
\node at (0,1.5) {\textcolor{blue}{3 or 4}};
\end{tikzpicture}}

\caption{Gadget of vertex $z$ (the only which can have \b{activation} time 4). The blue numbers near the vertices show the times of an example of activation process. Vertices with time 0 belong to the target set.}
\label{gadget1b}
\end{figure} 

We show that $\varphi$ is satisfiable if and only if $G$ contains a target set with activation time at least 4.
Suppose that $\varphi$ has a truth assignment.
For every clause $C_i$, let $k_i \in [3]$ be such that $\ell_{i,k_i}$ is set to true by the assignment.
Let $S' = \{u_{i,k_i} : i \in [m]\}$ and $S = S'\cup L$.
We show that $S$ is a target set which activates $z$ at time 4.
At time 1, $S$ activates $z_0$ and all vertices $w_{i,k_i}$ for $i \in [m]$. At time 2, $S$ activates $z_2$, $z_4$, all vertices $b_{i,k_i}$ for $i \in [m]$, and the remaining vertices of $W$.
At time 3, $S$ activates all the remaining vertices in $U$ and $B$. All vertices in $Y$ are activated by $S$ at time exactly 3, since $S$ was obtained from a truth assignment and then no vertex of $Y$ has two neighbors activated at time 2.
At time 4, $S$ activates only vertex $z$.
Thus, $G$ has activation time at least $4$.

Now, suppose that $t_{\tau}(G)\ge 4$ and let $S$ be a target set $S$ with activation time at least $4$.
As said before, we may assume that $S$ contains $L$. Moreover, for every clause $C_i$, $U_i\cap S \neq \emptyset$ since $|N(u_{i,p})\setminus U_i|\le 1$, for any $i \in [m]$ and $p \in [3]$. With this, we have that $S$ activates $W$ at time at most 2, $B\cup U\cup Y$ at time at most 3, and vertex $z$ at time 4. If $S$ activates a vertex of $Y$ at time 2, then $z$ is activated at time 3 (with the help of $z_0$), a contradiction. Thus no vertex of $Y$ is activated at time $2$ or less, which implies that no pair $\{u_{i,p}, u_{j,q}\}$, where $\ell_{i,p}$ is the negation of $\ell_{j,q}$, is in $S$. This means that assigning true to each $\ell_{i,p}$ for which $u_{i,p}\in S$ yields an assignment that satisfies $\varphi$.

For time values $k>4$, it suffices to add to $G$ a new path $P$ with $k-5$ edges and $k-4$ new vertices $s_1,\dotsc,s_{k-4}$, and the edge $zs_1$. Moreover, for every vertex $s_i$ of $P$, add five new vertices $s_{i,0},s_{i,1},s_{i,2},s_{i,3},s_{i,4}$ and seven edges $s_is_{i,0}, s_{i,1}s_{i,2},s_{i,3}s_{i,4}$, and $s_{i,0}s_{i,p}$ for $p \in [4]$. As before, the constructed graph $G'$ has no vertex of degree~1. From this, it is easy to see that a target set $S$ activates $s_{k-4}$ at time $k$ if and only if $S$ activates $z$ at time~4.
\end{proof}



The \emph{variable-clause incidence graph} of a \textsc{Sat} formula $\varphi = ({\cal X},{\cal C})$, where ${\cal X}$ and ${\cal C}$ are the variables and the clauses of $\varphi$, respectively, is the bipartite graph $G_{\varphi}$ with vertex set ${\cal X} \cup {\cal C}$ such that, for $x \in {\cal X}$ and $c \in {\cal C}$, $xc$ is an edge of $G_{\varphi}$ if and only if clause $c$ contains variable $x$ (either positively or negatively).

The \textsc{Restricted Planar $3$-Sat} problem is the variant of the \textsc{Sat} problem restricted to formulas $\varphi$ such that
\begin{itemize}
\item[$\bullet$] each clause has two or three literals,
\item[$\bullet$] each variable appears exactly twice positively and once negatively, and
\item[$\bullet$] the variable-clause incidence graph of $\varphi$ is planar.
\end{itemize}

Building on the proof of \cref{thm:NPC} and exploiting the fact that  \textsc{Restricted Planar $3$-Sat} is \NP-complete~\cite{DahlhausJPSY94}, we get the following result.

\begin{theorem}\label{thm:hard-apex}
The \textsc{TSS-time} problem is \NP-complete even restricted to instances $(G,\tau,k)$ such that $G$ is an apex graph, $\tau(v) = 2$ for every $v \in V(G)$, and $k \geq 4$ is fixed.
\end{theorem}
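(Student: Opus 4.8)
The plan is to reuse the reduction of \cref{thm:NPC} almost verbatim, but to start from \textsc{Restricted Planar $3$-Sat} rather than plain \textsc{$3$-Sat}, exploiting its two extra guarantees: the variable-clause incidence graph $G_\varphi$ is planar, and every variable occurs exactly twice positively and once negatively. Given such a formula $\varphi=({\cal X},{\cal C})$, I would build the graph $G$ exactly as in \cref{thm:NPC}: one clause gadget (\cref{gadget1a}) for each clause with three literals, one vertex $y_{(i,p),(j,q)}$ for each complementary pair of literals joined to the two corresponding $w$-vertices, and the $z$-gadget of \cref{gadget1b} with $z$ adjacent to every vertex of $Y$. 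The single new ingredient is a gadget for the two-literal clauses allowed by \textsc{Restricted Planar $3$-Sat}, namely the obvious two-terminal analogue of \cref{gadget1a} with two $u$-vertices and two $w$-vertices; one verifies exactly as before that it has no vertex of degree $1$ and forces at least one $u$-vertex into every target set. Set $\tau\equiv 2$ and $k=4$ (the path extension of \cref{thm:NPC} handles larger fixed $k$).

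Since the gadgets and their incidences are unchanged, the correctness proof is inherited line by line from \cref{thm:NPC}: a satisfying assignment selects one $u$-vertex per clause and, together with the forced set $L$, activates $z$ at time exactly $4$, while conversely a target set of activation time at least $4$ must prevent any complementary pair of selected $u$-vertices from simultaneously activating a $y$-vertex at time $2$, which is precisely a consistent truth assignment. Hence $\varphi$ is satisfiable if and only if $t_\tau(G)\geq 4$, and membership in \NP\ follows from \cref{lem:facilNP}.

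What remains, and what is genuinely new, is to argue that $G$ is an apex graph, i.e.\ that $G\setminus\{z\}$ is planar; the vertex $z$, adjacent to all of $Y$, is precisely the apex. To see this I would take a fixed planar embedding of $G_\varphi$ and perform a gadget substitution. Each clause vertex $C_i$ is replaced by a planar drawing of its clause gadget confined to a small disk, embedded so that $w_{i,1},w_{i,2},w_{i,3}$ all lie on the boundary of that disk (the pendant triangles $\{w_{i,p},a_{i,p},b_{i,p}\}$ are drawn as tiny ears hanging off each $w_{i,p}$, so each $w_{i,p}$ stays on the outer face) in the cyclic order prescribed by the rotation of $G_\varphi$ around $C_i$; since there are at most three terminals and reflecting the gadget realizes both cyclic orders, any prescribed order can be met. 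Each variable $x$, having one negative and two positive occurrences, contributes exactly two $y$-vertices, both adjacent to the $w$-vertex of its negative occurrence, giving a subdivided tripod $w^{+}_{1}\,y_{1}\,w^{-}\,y_{2}\,w^{+}_{2}$ that is routed along the three edges incident to the variable vertex $x$ in the embedding of $G_\varphi$. As these are only local gadget substitutions and edge subdivisions of a planar embedding, the result is a crossing-free drawing of $G\setminus\{z\}$; the detached $z$-gadget piece $\{z_0,\dots,z_4\}$ is planar and placed anywhere. Re-inserting $z$ and its star to $Y$ then witnesses that $G$ is apex.

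The main obstacle I anticipate is the embedding bookkeeping of the clause gadget: one must check that the inherited gadget of \cref{gadget1a} really does admit a planar embedding in which all three connection vertices $w_{i,1},w_{i,2},w_{i,3}$ lie on a common face in a freely choosable cyclic order, so that the substitution into the rotation system of $G_\varphi$ creates no crossings. Drawing the $u$-triangle inside the $w$-triangle and treating each $\{w_{i,p},a_{i,p},b_{i,p}\}$ triangle as a pendant ear at $w_{i,p}$ resolves this, but making the routing of the $y$-tripods along the incident edges of each variable vertex fully rigorous, and handling the two-literal clauses uniformly, is the step that requires the most care.
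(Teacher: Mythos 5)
Your proposal is correct and follows essentially the same route as the paper: a reduction from \textsc{Restricted Planar $3$-Sat} reusing the construction of \cref{thm:NPC} (with the two-literal clause gadget obtained by dropping the third branch, which is exactly your ``two-terminal analogue''), correctness inherited verbatim, and planarity of $G\setminus\{z\}$ established by substituting the clause gadgets and the two $Y$-vertices per variable into a planar embedding of the variable--clause incidence graph. The paper phrases the last step in the reverse direction---contracting each clause gadget and identifying the two $Y$-vertices of each variable to recover $G_\varphi$, then expanding back---but this is the same gadget-substitution argument you describe, including the local routing of the path $w_{j_1,p_1}\,y_1\,w_{j_3,p_3}\,y_2\,w_{j_2,p_2}$ around the former variable vertex.
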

\begin{proof}
We present a polynomial reduction from the \textsc{Restricted Planar $3$-Sat}, which is \NP-complete~\cite{DahlhausJPSY94}. Given an instance $\varphi = ({\cal X},{\cal C})$ of \textsc{Restricted Planar $3$-Sat}, let~$(G, \tau, k)$ be the instance of \textsc{TSS-time} constructed in the proof of
\cref{thm:NPC} for the formula $\varphi$.
If~$C_i \in {\cal C}$ is a clause containing only two literals, we still use the same gadget depicted in \cref{gadget1a}, but removing the vertices $a_{i,3}, b_{i,3}, u_{i,3}$, and $w_{i,3}$.
By the proof of \cref{thm:NPC}, it follows that $t_{\tau}(G) \geq k$ if and only if $\varphi$ is satisfiable. It just remains to show that $G$ is an apex graph. More precisely, we claim that the graph obtained from $G$ be removing vertex $z$ (see \cref{gadget1b}) is planar. Clearly, it is enough to show that~$G \setminus \{z, z_0,z_1,z_2,z_3,z_4\}$ is planar.

Let $G_{\sf c}$ be the graph obtained from $G \setminus \{z, z_0,z_1,z_2,z_3,z_4\}$ by doing the following operations. First, for every $i \in [m]$, contract all the vertices in the clause gadget of $C_i$ to a single vertex. Then, for every $j \in [n]$, identify all vertices in $Y$ corresponding to a pair of occurrences of $x_j$ and $\bar{x}_j$. Since every variable appears positively and negatively in $\varphi$, it can be easily verified that $G_{\sf c}$ is isomorphic to the variable-clause incidence graph of $\varphi$. Therefore,~$G_{\sf c}$ is a planar graph. Consider an arbitrary planar embedding of $G_{\sf c}$, and we proceed to argue that it can be modified so to yield a planar embedding of $G \setminus \{z, z_0,z_1,z_2,z_3,z_4\}$. Note first that, for every $i \in [m]$, replacing the vertex in $G_{\sf c}$ corresponding to clause $C_i$ by its clause gadget in $G$ preserves planarity, since only the vertices $w_{i,1},w_{i,2},w_{i,3}$ have neighbors outside the gadget.

Now consider $j \in [n]$, and let $C_{j_1},C_{j_2},C_{j_3}$ be the three clauses of $\varphi$ containing variable~$x_j$. Equivalently, $C_{j_1},C_{j_2},C_{j_3}$ are the three neighbors of vertex $x_j$ in the graph $G_{\sf c}$. Since every variable appears twice positively and once negatively in $\varphi$, we may assume w.l.g. that $C_{j_1}$ and~$C_{j_2}$ contain $x_j$ positively, and that $C_{j_3}$ contains $x_j$ negatively. For $\ell \in [3]$, let $w_{j_{\ell},p_{\ell}}$ be the vertex in the clause gadget of $C_{j_i}$ corresponding to $x_j$ (see \cref{gadget1a}). Note that, since every variable appears exactly once negatively in $\varphi$, there are exactly two vertices in $Y \subseteq V(G)$ associated with each variable $x_j$. We add back to the planar embedding constructed so far the two vertices in $Y$ corresponding to $x_j$ as shown in \cref{fig:embedding}. Since no edge crossing is created by this construction,  we have obtained a planar embedding of~$G \setminus \{z, z_0,z_1,z_2,z_3,z_4\}$, and the theorem follows.
\end{proof}

\begin{figure}[ht]
\begin{center}
\includegraphics[scale=1]{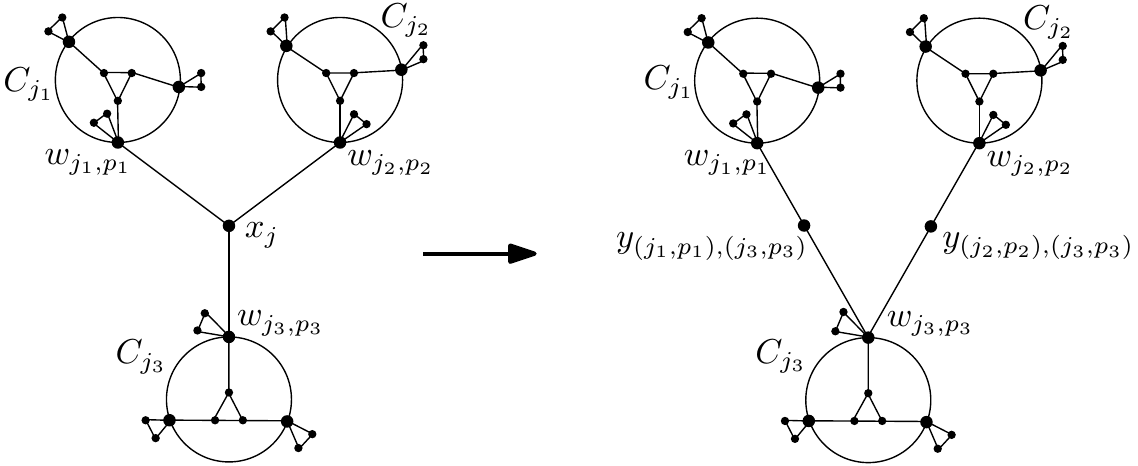}
\end{center}
\caption{Illustration of the local modification to the planar embedding of $G_{\sf c}$ so to add back the two vertices in $Y \subseteq V(G)$ associated with variable $x_j$, labeled $y_{(j_1,p_1),(j_3,p_3)}$ and $y_{(j_2,p_2),(j_3,p_3)}$  in the figure. }
\label{fig:embedding}
\end{figure}

\b{The transformation described in the proof of \cref{thm:hard-apex} can also be applied to the original \NP-hardness proof of Benevides et al.~\cite{benevides2015} for the 2-\textsc{Neighbor Bootstrap Percolation-time} problem, and therefore we get the following corollary.}

\begin{corollary}\label{cor:P3-hard-apex}
\b{The 2-\textsc{Neighbor Bootstrap Percolation-time} problem is \NP-complete even restricted to instances $(G,k)$ such that $G$ is an apex graph and $k \geq 4$ is fixed.}
\end{corollary}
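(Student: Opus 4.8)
\textbf{Proof proposal for Corollary~\ref{cor:P3-hard-apex}.}
The plan is to observe that the statement follows from combining the original planar-oriented reduction of Benevides et al.~\cite{benevides2015} for the 2-\textsc{Neighbor Bootstrap Percolation-time} problem with precisely the same embedding surgery that we carried out in the proof of \cref{thm:hard-apex}. Concretely, I would start from the reduction of~\cite{benevides2015} from \textsc{$3$-Sat}, but feed it instead an instance $\varphi$ of \textsc{Restricted Planar $3$-Sat}, which is \NP-complete by~\cite{DahlhausJPSY94}. This substitution is legitimate because \textsc{Restricted Planar $3$-Sat} is a special case of \textsc{$3$-Sat} (with clauses of two or three literals), so the correctness argument of~\cite{benevides2015}---that the constructed graph $G$ has activation time at least $k$ if and only if $\varphi$ is satisfiable---carries over verbatim, and clauses with two literals are handled by deleting the superfluous vertices of the clause gadget exactly as we did in \cref{thm:hard-apex}.

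The heart of the argument is then to show that, in the construction of~\cite{benevides2015}, the single vertex playing the role of our vertex $z$ can be removed to leave a planar graph, which establishes that $G$ is an apex graph. For this I would follow the same strategy as in \cref{thm:hard-apex}: after deleting $z$ together with its small attached gadget, I contract each clause gadget to a single vertex and identify the $Y$-type connector vertices associated with the positive and negative occurrences of each variable. The resulting contracted graph $G_{\sf c}$ is isomorphic to the variable-clause incidence graph of $\varphi$, which is planar by the definition of \textsc{Restricted Planar $3$-Sat}. Starting from an arbitrary planar embedding of $G_{\sf c}$, I expand each clause vertex back into its gadget---planarity is preserved because only the boundary ``$w$''-vertices have neighbors outside the gadget---and then add back the two connector vertices per variable using the local modification illustrated in \cref{fig:embedding}. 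Since no edge crossing is introduced, this yields a planar embedding of $G$ minus the apex vertex.

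The main technical obstacle, and the reason this is stated as a corollary rather than absorbed into \cref{thm:hard-apex}, is verifying that the combinatorial structure of the reduction of~\cite{benevides2015}---in particular the way connector vertices attach to clause gadgets and the fact that exactly one vertex (our $z$) accumulates all the long-range adjacencies---is sufficiently close to the structure used in \cref{thm:NPC,thm:hard-apex} that the embedding surgery applies without change. I would check that the degree-one vertices and the forced-activation mechanics of the original proof do not interfere with the contraction/identification step, but since the only adjacencies leaving a clause gadget are those through the connector vertices and those incident to the apex vertex, the same counting that makes $G_{\sf c}$ an incidence graph goes through, and membership in \NP\ for $k$ fixed is immediate from \cref{lem:facilNP}. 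Hence the corollary follows.
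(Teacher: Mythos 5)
Your proposal matches the paper's argument: the paper justifies \cref{cor:P3-hard-apex} precisely by applying the Restricted Planar $3$-Sat substitution and the embedding surgery of \cref{thm:hard-apex} to the original reduction of Benevides et al.~\cite{benevides2015}, which is exactly what you do (the paper merely states this in one sentence, whereas you spell out the contraction to the incidence graph and the re-expansion). The approach and all key steps are the same, so the proposal is correct.
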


In \cite{wg2014-tcs}, it was proved that the 2-\textsc{Neighbor Bootstrap Percolation-time} problem is \NP-complete in {\sl bipartite} graphs for every fixed $k\geq 5$.
As before, in the \NP-hardness reduction of \cite{wg2014-tcs}, the constructed graph has many vertices of degree~1, which are not allowed in the \pname{TSS} model. In order to obtain a reduction to \pname{TSS-time} with all thresholds equal to 2, we adapt appropriately the reduction of \cite{wg2014-tcs} in order to avoid vertices of degree~1.
The solution to remove degree~1 vertices in the reduction of \cref{thm:NPC} involves many triangles of type $a_{i,j},b_{i,j},w_{i,j}$, which cannot be used here, since the graph must be bipartite. Therefore, we need to devise other gadgets.

\begin{theorem} \label{thm:NPC2}
The \textsc{TSS-time} problem is \NP-complete even restricted to instances $(G,\tau,k)$ such that $G$ is a bipartite graph, $\tau(v) = 2$ for every $v \in V(G)$, and $k \geq 5$ is fixed.
\end{theorem}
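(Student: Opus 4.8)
The plan is to start from the bipartite \NP-hardness reduction of \cite{wg2014-tcs} for the $2$-\textsc{Neighbor Bootstrap Percolation-time} problem, which produces, for every fixed $k \geq 5$, a bipartite graph $G_0$ (with all thresholds equal to $2$ in the percolation sense) together with a distinguished vertex $z$ whose activation time is exactly $k$ if and only if the source instance is positive, while every other vertex of $G_0$ activates strictly before time $k$ under any target set. The only obstruction to reading $(G_0,2,k)$ as a \textsc{TSS-time} instance is the presence of degree-$1$ vertices, which violate $\tau(v)\le d(v)$. As in the discussion preceding \cref{thm:NPC}, each such vertex $p$ (with unique neighbor $q$) is \emph{forced} into every target set and serves to supply $q$ with one permanently active neighbor. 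The goal is therefore to replace each degree-$1$ vertex by a bipartite gadget that reproduces this forcing-and-support behavior, keeps all thresholds equal to $2$, and does not let any new vertex reach the deadline before the distinguished vertex does.

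First I would design the replacement gadget. The triangle gadget used in \cref{thm:NPC} (two mutually adjacent vertices $p_1,p_2$ forming a triangle with $q$) is unavailable, and not merely incidentally: in a bipartite graph a forced neighbor of $q$ lies in the part opposite to $q$, so $q$ together with such a forced neighbor has no common neighbor and cannot cascade-activate a third vertex. Consequently the support must propagate through an even structure. I would attach to $q$ a small $C_4$/$K_{2,2}$-type gadget---for instance two vertices $p_1,p_2$ adjacent to $q$ together with a vertex $g$ adjacent to both $p_1$ and $p_2$, possibly attaching it through two vertices so that $q$ still receives the intended support---and verify, exactly as in the forced-set argument, that every target set must contain at least one gadget vertex and that it can be \emph{normalized} (by moving the chosen gadget vertex to a canonical position, in the spirit of replacing $p_2$ by $p_1$ before \cref{thm:NPC}) without decreasing its activation time. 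The gadget is chosen so that, after normalization, its canonical vertex emulates the role of the former degree-$1$ vertex and the whole gadget activates within a bounded number of steps; the distinguished vertex $z$ receives a tailored gadget so that the measured activation time is preserved, which (with the small correction already used for $k=4$ in \cref{thm:NPC}) keeps the bound at $k\geq 5$.

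With the construction in hand I would prove the two directions. For the forward direction, starting from a positive source instance I take the target set produced by the reduction of \cite{wg2014-tcs}, add the forced gadget vertices in their canonical positions, and check the activation schedule: $z$ (or its gadget) realises time $k$, while every remaining vertex, including all newly introduced gadget vertices, activates strictly earlier, so $t_\tau(G)\ge k$. For the converse, given any target set $S$ with $t_\tau(S)\ge k$ I normalize $S$ inside each gadget, obtaining a target set of no smaller activation time that restricts to a valid target set of $G_0$ with activation time $\ge k$; by the correctness of the reduction of \cite{wg2014-tcs} this certifies a positive source instance. Throughout, I check that $G$ is bipartite (each gadget is an even structure attached to a single vertex of $G_0$), that $\tau(v)=2$ for all $v$, and that $G$ has no degree-$1$ vertex. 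Membership in \NP\ is \cref{lem:facilNP}.

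The main obstacle is the gadget design together with the timing analysis. The difficulty is genuinely structural: the bipartite constraint rules out the triangle trick for a precise reason (no cascade from a forced neighbor of $q$), so the gadget must both force target-set membership and deliver activation support through an even path while guaranteeing that no gadget vertex ever overtakes the distinguished vertex. Making the normalization step watertight---that any target set can be moved to the canonical per-gadget configuration without losing activation time and without any gadget vertex reaching the deadline early---is the crux of the argument, and it is what forces the careful choice of the gadget attached to $z$ and of the target value $k$.
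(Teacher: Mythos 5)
Your high-level plan---take the bipartite reduction of the $2$-neighbor bootstrap percolation paper as a black box and locally replace each degree-$1$ vertex by a small bipartite gadget---is not what the paper does, and it has two concrete gaps that are precisely the reasons the paper abandons this route. First, the gadget itself is not pinned down, and the natural candidate you describe (a $C_4$ on $q,p_1,g,p_2$ with all thresholds $2$) does not behave like the degree-$1$ vertex it replaces: if only one of $p_1,p_2,g$ is placed in the target set, the remaining two form a deadlock (each needs the other plus a vertex that is not yet active), so the set is not a target set at all; forcing \emph{two} gadget vertices changes the forcing analysis, and in every case the last gadget vertex is activated one time step \emph{after} $q$. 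Second, and more seriously, that one-step lag is exactly the failure mode discussed before \cref{thm:NPC}: any vertex $q$ of the original construction that some target set activates at time $k-1$ now spawns a gadget vertex activated at time $k$, so negative instances may become positive. In the general-graph case this is why the global patch only yields hardness for $k\ge 5$ and the authors had to reopen the reduction and present it in full to recover $k=4$; for the bipartite case at $k=5$ the same repair would be needed, and it cannot be carried out while treating the source reduction as a black box, since one must verify which vertices can reach time $k-1$ and whether the ``forced set'' arguments survive the replacement. Your normalization step is likewise asserted rather than proved, and with a two-vertex-forced gadget it is no longer the simple swap $p_2\mapsto p_1$.

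For contrast, the paper's proof of \cref{thm:NPC2} is a self-contained reduction from \textsc{$3$-Sat} built from new bipartite gadgets: ``squared'' vertices whose activation time is at most $2$ under \emph{any} target set, ``double squared'' vertices with activation time at most $3$ (and at most $2$ if an outside neighbor activates by time $1$), and a redesigned clause gadget in which these gadgets replace every role previously played by degree-$1$ vertices or triangles. The key property---that every vertex except the distinguished $z$ is activated by time at most $4$, so only $z$ can realize time $5$---is then proved directly for the new construction rather than inherited from an external reduction. To make your approach work you would essentially have to do the same: open up the bipartite reduction of the cited paper, design a gadget with a provable forcing-and-timing behavior, and re-verify the ``all other vertices finish by time $k-1$'' invariant after the replacement. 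As written, the proposal identifies the right obstacles but does not supply the construction that overcomes them.
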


\begin{proof}
Let us prove that this restriction of the \pname{TSS-time} problem is \NP-complete by presenting, as in \cref{thm:NPC}, a polynomial reduction from the \textsc{3-Sat} problem (each clause contains exactly three literals).

In order to simplify the reduction, let us introduce some notation.
A \emph{squared vertex} $h$ in the reduction represents the gadget of \cref{gadget2a} with auxiliary vertices $h_1,h_2,h_{11},\ldots,h_{26}$ all whose edges are represented in the figure. We assume that all thresholds are equal to 2.
One important fact about squared vertices is that, for any target set, its activation time is always at most 2. This is because any target set must contain at least a vertex of each one of the following sets: $\{h_{11},h_{14}\}$, $\{h_{12},h_{15}\}$, $\{h_{13},h_{16}\}$, $\{h_{21},h_{24}\}$, $\{h_{22},h_{25}\}$, and $\{h_{23},h_{26}\}$ (otherwise the set is never activated). If $h$ is not activated at time 0 or 1, then by the pigeonhole principle any target set contains at least two vertices in each one of $\{h_{14},h_{15},h_{16}\}$ and $\{h_{24},h_{25},h_{26}\}$. From this, $h_1$ and $h_2$ are activated at time 1 and consequently $h$ is activated at time~2.

\begin{figure}[ht]
\centering\scalebox{0.75}{
\begin{tikzpicture}[scale=1]
\tikzstyle{vertex}=[draw,circle,fill=white,minimum size=15pt,inner sep=2pt]
\tikzstyle{vertex1}=[draw,circle,fill=black!5,minimum size=25pt,inner sep=4pt]
\tikzstyle{vertex2}=[draw,fill=black!5,minimum size=25pt,inner sep=4pt]

\node[vertex2] (hh)   at (-5,3) {$h$};

\path[-,thick] (-3,2.7) edge (-2,2.7);
\path[-,thick] (-3,3.0) edge (-2,3.0);
\path[-,thick] (-3,3.3) edge (-2,3.3);

\node[vertex1] (h)   at (3,3) {$h$};
\node[vertex] (h2)  at (6,3) {$h_2$};
\node[vertex] (h23) at (4,2) {$h_{23}$};
\node[vertex] (h22) at (4,3) {$h_{22}$};
\node[vertex] (h21) at (4,4) {$h_{21}$};
\node[vertex] (h26) at (5,2) {$h_{26}$};
\node[vertex] (h25) at (5,3) {$h_{25}$};
\node[vertex] (h24) at (5,4) {$h_{24}$};
\node[vertex] (h1)  at (0,3) {$h_1$};
\node[vertex] (h13) at (2,2) {$h_{13}$};
\node[vertex] (h12) at (2,3) {$h_{12}$};
\node[vertex] (h11) at (2,4) {$h_{11}$};
\node[vertex] (h16) at (1,2) {$h_{16}$};
\node[vertex] (h15) at (1,3) {$h_{15}$};
\node[vertex] (h14) at (1,4) {$h_{14}$};

\draw [-] (h) to [bend right=80,looseness=2] (h1);
\draw [-] (h) to [bend  left=80,looseness=2] (h2);

\path[-,thick]
(h) edge (h11) edge (h12) edge (h13)
(h11) edge (h14) (h12) edge (h15) (h13) edge (h16)
(h1) edge (h14) edge (h15) edge (h16);

\path[-,thick]
(h) edge (h21) edge (h22) edge (h23)
(h21) edge (h24) (h22) edge (h25) (h23) edge (h26)
(h2) edge (h24) edge (h25) edge (h26);
\end{tikzpicture}}

\caption{Gadget of a squared vertex $h$, which is always activated at time at most $2$.}
\label{gadget2a}
\end{figure} 

A \emph{double squared vertex} $h$ in the reduction represents the gadget of \cref{gadget2b} with auxiliary vertices $h',h_0,h_{01},\dotsc,h_{06}$ all whose edges are represented in the figure (notice that $h'$ is squared and also contains the edges in \cref{gadget2a}).

One important fact about double squared vertices is that, for any target set, its activation time is always at most 3. This is because, with identical arguments as before, if $h$ is not activated at time 0 and 1, $h_0$ is activated at time~1 and then $h$ is activated at time at most~3, since $h'$ is squared and is activated at time at most 2. Another important fact is that, if a neighbor of $h$ outside the gadget of \cref{gadget2b} is activated at time 0 or 1, then $h$ is activated at time at most 2 (since $h_0$ and this neighbor activate $h$).

\begin{figure}[ht]
\centering\scalebox{0.75}{
\begin{tikzpicture}[scale=1]
\tikzstyle{vertex}=[draw,circle,fill=white,minimum size=15pt,inner sep=2pt]
\tikzstyle{vertex1}=[draw,circle,fill=black!20,minimum size=25pt,inner sep=4pt]
\tikzstyle{vertex2}=[draw,fill=black!5,minimum size=25pt,inner sep=4pt]
\tikzstyle{vertex3}=[draw,double,fill=black!20,minimum size=25pt,inner sep=4pt]

\node[vertex3] (hh)   at (-5,3) {$h$};

\path[-,thick] (-3,2.7) edge (-2,2.7);
\path[-,thick] (-3,3.0) edge (-2,3.0);
\path[-,thick] (-3,3.3) edge (-2,3.3);

\node[vertex1] (h)  at (3,3) {$h$};
\node[vertex2] (h2)  at (6,3) {$h'$};
\node[vertex] (h1)  at (0,3) {$h_0$};
\node[vertex] (h13) at (2,2) {$h_{03}$};
\node[vertex] (h12) at (2,3) {$h_{02}$};
\node[vertex] (h11) at (2,4) {$h_{01}$};
\node[vertex] (h16) at (1,2) {$h_{06}$};
\node[vertex] (h15) at (1,3) {$h_{05}$};
\node[vertex] (h14) at (1,4) {$h_{04}$};

\draw [-] (h) to [bend right=80,looseness=2] (h1);

\path[-,thick]
(h) edge (h2) edge (h11) edge (h12) edge (h13)
(h11) edge (h14) (h12) edge (h15) (h13) edge (h16)
(h1) edge (h14) edge (h15) edge (h16);

\end{tikzpicture}}

\caption{Gadget of double squared vertex $h$, which is always activated at time at most 3. Moreover, if a neighbor of $h$ outside this gadget is activated at time 0 or 1, $h$ is activated at time at most 2.}
\label{gadget2b}
\end{figure} 

Let $\varphi = ({\cal X},{\cal C})$ be an instance of \pname{$3$-SAT}, where
${\cal X}=\{x_1, \dotsc,x_n\}$ is the set of variables and $\mathcal{C}=\{C_1,\dotsc,C_m\}$ is the set of clauses.
Let us denote the three literals of $C_i$ by $\ell_{i,1}$, $\ell_{i,2}$ and $\ell_{i,3}$.
We proceed by constructing a graph $G$ such that $t_{\tau}(G)\geq 5$ if and only if the \textsc{$3$-Sat} instance is satisfiable.

For every clause $C_i$ of $\mathcal{C}$, add to $G$ the gadget of \cref{gadget-bip-t5}.
Let $W = \{w_{i,p} \mid i \in [m], p \in [3]\}$.
For every pair of complement literals $\ell_{i,p}, \ell_{j,q}$,
add a vertex~$y_{(i,p),(j,q)}$ adjacent to $w_{i,p}$ and $w_{j,q}$. Let $Y$ be the set of all vertices $y_{(i,p),(j,q)}$. Finally, add a vertex $z$ adjacent to all vertices in $Y$ and a squared vertex $z'$ adjacent to $z$.

Notice that $G$ has no vertex of degree~1, as required in the definition of threshold function. To prove that $G$ is bipartite, consider the following partition $(A,B)$ of the main vertices of $G$. $A$ contains all vertices $u^A_{i,j},a_{i,j},h_{i,j},y_{(i,p),(j,q)}$, and $z'$.
$B$ contains all vertices $u^B_{i,j},w_{i,j},b_{i,j},c_{i,j}$, and $z$.
Moreover, the gadgets of \cref{gadget2a} and \cref{gadget2b} are clearly bipartite.

\begin{figure}[ht]
\centering\scalebox{0.9}{
\begin{tikzpicture}[scale=.85]
\tikzstyle{every node}=[font=\scriptsize]
\tikzstyle{vertex}=[draw,circle,fill=white,minimum size=15pt,inner sep=1pt]
\tikzstyle{vertex2}=[draw,regular polygon sides=4,fill=black!5,minimum size=15pt,inner sep=1pt]
\tikzstyle{vertex3}=[draw,regular polygon sides=4,double,fill=black!20,minimum size=15pt,inner sep=1pt]

\def \diff {17}
\def \dist {3cm}
\foreach \i/\im [evaluate=\i as \angle using (\i*90)-90] in {1/3,2/2,3/1}{
	\node[vertex] (uB\i) at (\angle - \diff:\dist) {$u_{i,\im}^B$};
    \node[vertex] (uA\i) at (\angle + \diff:\dist) {$u_{i,\im}^A$};
    \node[vertex3] (w\i)  at (\angle:\dist+3cm) {$w_{i,\im}$};
    \node[vertex2,shift={(90*\i-90 + \diff:1.5cm)}] (h\i) at (uB\i) {$h_{i,\im}$};
    \node[vertex,shift={(90*\i-180:1cm)}] (a\i) at (w\i) {$a_{i,\im}$};
    \node[vertex,shift={(90*\i-180:1cm)}] (b\i) at (h\i) {$b_{i,\im}$};
    \draw[-] (uA\i) to (uB\i);
    \draw[-] (uA\i) to (w\i);
    \draw[-] (uB\i) to (h\i);
    \draw[-] (h\i) to (w\i);
    \draw[-] (h\i) to (b\i);
    \draw[-] (a\i) to (b\i);
    \draw[-] (a\i) to[bend right=0] (w\i);
}
\draw[-] (uA1) to (uB2);
\draw[-] (uA1) to (uB3);
\draw[-] (uA2) to (uB1);
\draw[-] (uA2) to (uB3);
\draw[-] (uA3) to (uB1);
\draw[-] (uA3) to (uB2);
\draw[red] (135:4.95cm) rectangle (-\diff-6:3.8cm);
\node[shift={(90 :0.6cm)}] at (w1) {\textcolor{blue}{3}};
\node[shift={(180:0.6cm)}] at (w2) {\textcolor{blue}{2}};
\node[shift={(270:0.6cm)}] at (w3) {\textcolor{blue}{3}};
\node[shift={(160:0.5cm)}] at (h1) {\textcolor{blue}{2}};
\node[shift={(200:0.5cm)}] at (h2) {\textcolor{blue}{2}};
\node[shift={(-20:0.5cm)}] at (h3) {\textcolor{blue}{2}};
\node[shift={( 0 :0.5cm)}] at (a1) {\textcolor{blue}{4}};
\node[shift={(180:0.5cm)}] at (b1) {\textcolor{blue}{0}};
\node[shift={( 0 :0.5cm)}] at (a2) {\textcolor{blue}{3}};
\node[shift={( 0 :0.5cm)}] at (b2) {\textcolor{blue}{0}};
\node[shift={(90 :0.5cm)}] at (a3) {\textcolor{blue}{4}};
\node[shift={(90 :0.5cm)}] at (b3) {\textcolor{blue}{0}};
\node[shift={(90 :0.6cm)}] at (uA1){\textcolor{blue}{4}};
\node[shift={(115:0.6cm)}] at (uB1){\textcolor{blue}{3}};
\node[shift={(180:0.6cm)}] at (uA2){\textcolor{blue}{0}};
\node[shift={( 0 :0.6cm)}] at (uB2){\textcolor{blue}{3}};
\node[shift={(65 :0.6cm)}] at (uA3){\textcolor{blue}{4}};
\node[shift={(90 :0.6cm)}] at (uB3){\textcolor{blue}{3}};
\node at (0,0){\textcolor{red}{\Large $U_i$}};
\end{tikzpicture}}
\caption{Bipartite gadget of clause $C_i$. The blue numbers near the vertices are the activation times when $u^A_{i,2}$ is in the target set. The set $U_i$ is inside the red square.} \label{gadget-bip-t5}
\end{figure} 

\begin{figure}[ht]
\centering\scalebox{0.9}{
\begin{tikzpicture}[scale=.85]
\tikzstyle{every node}=[font=\scriptsize]
\tikzstyle{vertex}=[draw,circle,fill=white,minimum size=15pt,inner sep=1pt]
\tikzstyle{vertex2}=[draw,regular polygon sides=4,fill=black!5,minimum size=15pt,inner sep=1pt]
\tikzstyle{vertex3}=[draw,regular polygon sides=4,double,fill=black!20,minimum size=15pt,inner sep=1pt]

\def \diff {17}
\def \dist {3cm}
\foreach \i/\im [evaluate=\i as \angle using (\i*90)-90] in {1/3,2/2,3/1}{
	\node[vertex] (uB\i) at (\angle - \diff:\dist) {$u_{i,\im}^B$};
    \node[vertex] (uA\i) at (\angle + \diff:\dist) {$u_{i,\im}^A$};
    \node[vertex3] (w\i)  at (\angle:\dist+3cm) {$w_{i,\im}$};
    \node[vertex2,shift={(90*\i-90 + \diff:1.5cm)}] (h\i) at (uB\i) {$h_{i,\im}$};
    \node[vertex,shift={(90*\i-180:1cm)}] (a\i) at (w\i) {$a_{i,\im}$};
    \node[vertex,shift={(90*\i-180:1cm)}] (b\i) at (h\i) {$b_{i,\im}$};
    \draw[-] (uA\i) to (uB\i);
    \draw[-] (uA\i) to (w\i);
    \draw[-] (uB\i) to (h\i);
    \draw[-] (h\i) to (w\i);
    \draw[-] (h\i) to (b\i);
    \draw[-] (a\i) to (b\i);
    \draw[-] (a\i) to[bend right=0] (w\i);
}
\draw[-] (uA1) to (uB2);
\draw[-] (uA1) to (uB3);
\draw[-] (uA2) to (uB1);
\draw[-] (uA2) to (uB3);
\draw[-] (uA3) to (uB1);
\draw[-] (uA3) to (uB2);
\draw[red] (135:4.95cm) rectangle (-\diff-6:3.8cm);
\node[shift={(90 :0.6cm)}] at (w1) {\textcolor{blue}{3}};
\node[shift={(180:0.6cm)}] at (w2) {\textcolor{blue}{2}};
\node[shift={(270:0.6cm)}] at (w3) {\textcolor{blue}{3}};
\node[shift={(160:0.5cm)}] at (h1) {\textcolor{blue}{2}};
\node[shift={(200:0.5cm)}] at (h2) {\textcolor{blue}{1}};
\node[shift={(-20:0.5cm)}] at (h3) {\textcolor{blue}{2}};
\node[shift={( 0 :0.5cm)}] at (a1) {\textcolor{blue}{4}};
\node[shift={(180:0.5cm)}] at (b1) {\textcolor{blue}{0}};
\node[shift={( 0 :0.5cm)}] at (a2) {\textcolor{blue}{3}};
\node[shift={( 0 :0.5cm)}] at (b2) {\textcolor{blue}{0}};
\node[shift={(90 :0.5cm)}] at (a3) {\textcolor{blue}{4}};
\node[shift={(90 :0.5cm)}] at (b3) {\textcolor{blue}{0}};
\node[shift={(90 :0.6cm)}] at (uA1){\textcolor{blue}{4}};
\node[shift={(115:0.6cm)}] at (uB1){\textcolor{blue}{4}};
\node[shift={(180:0.6cm)}] at (uA2){\textcolor{blue}{3}};
\node[shift={( 0 :0.6cm)}] at (uB2){\textcolor{blue}{0}};
\node[shift={(65 :0.6cm)}] at (uA3){\textcolor{blue}{4}};
\node[shift={(90 :0.6cm)}] at (uB3){\textcolor{blue}{4}};
\node at (0,0){\textcolor{red}{\Large $U_i$}};
\end{tikzpicture}}
\caption{Bipartite gadget of clause $C_i$. The blue numbers near the vertices are the activation times when $u^B_{i,2}$ is in the target set. The set $U_i$ is inside the red square.} \label{gadget-bip-t5B}
\end{figure} 

Suppose that $\varphi$ has a truth assignment. For every clause $C_i$, let $k_i\in [3]$ such that $\ell_{i,k_i}$ is true. We obtain a target set $S$ of $G$ as follows: $S$ contains $u^A_{i,k_i}$, $b_{i,1}$, $b_{i,2}$, and $b_{i,3}$ for every clause $C_i$. Moreover, $S$ contains the vertices $h_{j,4}$, $h_{j,5}$, and $h_{j,6}$ in \cref{gadget2a} and \cref{gadget2b} for every squared or double squared vertex.
It is not difficult to see in \cref{gadget-bip-t5} that all vertices in the clause gadgets are activated by $S$ at time at most 4. Also, for every clause $C_i$, we have that $S$ activates $w_{i,k_i}$  at time 2 and activates $w_{i,k'}$ at time 3 for $k' \in [3] \setminus \{k_i\}$. From the truth assignment, all vertices of $Y$ are activated at time exactly 4, since every vertex $y\in Y$ is adjacent to exactly one vertex in $W$ activated at time 2 and to another vertex in $W$ activated at time 3. Thus,  vertex $z$ is activated at time 5 and consequently $t_{\tau}(G)\geq 5$.

Now, suppose that $t_{\tau}(G)\geq 5$ and let $S$ be a target set with activation time at least 5.
Then $S$ has at least one vertex of every set $U_i$, since any vertex in $U_i$ has only one neighbor outside $U_i$. Moreover, from the same argument, $S$ contains $a_{i,p}$ or $b_{i,p}$ for every $p \in [3]$.
It is not difficult to see in \cref{gadget-bip-t5} and \cref{gadget-bip-t5B} that all vertices in the clause gadgets are activated by $S$ at time at most 4 and all vertices of $W$ are activated at time at most 3 (in the figures, $S$ is represented by the vertices with activation time 0).
We may assume w.l.g. that $S$ contains $b_{i,p}$ instead of $a_{i,p}$, for $p \in [3]$, since $a_{i,p}$ activates $w_{i,p}$ at time at most 2.
Also, all vertices of $Y$ are activated at time at most 4. Also recall that $z'$ is activated at time at most 2. Therefore, $z$ is the unique vertex activated by $S$ at time 5 and consequently all vertices of $Y$ must be activated at time exactly~4.

With this, consider the following assignment. For every $u_{i,p}$ in $S$, for $p \in [3]$, assign true to the literal $\ell_{i,p}$. That is, if the literal $\ell_{i,p}$ is a positive literal, assign true to its variable; otherwise, assign false to its variable.
This is a valid truth assignment, since all vertices of $Y$ are activated at time 4 and consequently any two vertices of $U$ representing complementary literals cannot be both in $S$.
Moreover, this assignment satisfies all the clauses in $\mathcal{C}$, since $S$ has at least one vertex in each set $U_i$ and consequently the assignment satisfies at least one literal of every clause.


For time values $k>5$, it suffices to include a new path $P$ with $k-6$ edges and $k-5$ new vertices $s_1,\dotsc,s_{k-5}$ and to add the edge $zs_1$. Moreover create $k-5$ new squared vertices $s'_i$ and add the edge $s_is_i'$ for $i \in [k-5]$. As before, $G$ has no vertex of degree~1. From this, it is easy to see that a target set $S$ activates $s_{k-5}$ at time at least~$k$ if and only if $S$ activates $z$ at time at least~5.
\end{proof}

\section{TSS-time is FPT in graphs of bounded local treewidth}\label{sec:FPT-Alg}

In this section we provide an \FPT algorithm to solve the \textsc{GTSS-time} problem (so the \textsc{TSS-time} problem as well) in graphs of bounded local treewidth (\cref{thm:FPT}). This result together the \NP-completeness result of \cref{thm:hard-apex} will yield the complexity dichotomy proved in \cref{thm:dichotomy}.
We first need to introduce some notation and a slightly (more) generalized version of the \textsc{GTSS-time} problem.

Let $G$ be a graph and $\tau$ be a \b{generalized} threshold function in $G$. Recall that {$\tau^*= \max_{v \in V(G)}\tau(v)$}. Given a subset $V_{\sf f} \subseteq V(G)$ of \emph{forced} vertices, we denote by $t_{\tau}(G,V_{\sf f})$ the maximum activation time $t_{\tau}(S_0)$ among all target sets $S_0$ of $G$ such that~$V_{\sf f} \subseteq S_0$.
Clearly, $t_{\tau}(G,\emptyset) = t_{\tau}(G)$, hence deciding whether $t_{\tau}(G,V_{\sf f}) \geq k$, for a positive integer $k$, is equivalent to the \pname{GTSS-time} problem if we consider the threshold of any vertex in $V_{\sf f}$ strictly greater than its degree, while the threshold of any other vertex is maintained. \b{However, we still need this generalized version of the problem in this section, since we do not want the value of $\tau^*$ to increase when considering the auxiliary subproblems that we will define below.}

Before this, we show in the next lemma that deciding whether there exists a target set with activation time {\sl at least} $k$ is equivalent to the {\sl exact} version.

\begin{lemma}\label{lem:equal-at-least}
For every graph $G$, every generalized threshold function $\tau$ in $G$, every set $V_{\sf f} \subseteq V(G)$, and every positive integer $k$, $t_{\tau}(G,V_{\sf f}) \geq k$ if and only if there is target set $S_0 \subseteq V(G)$ with~$V_{\sf f} \subseteq S_0$ and such that  $t_{\tau}(S_0) = k$. In particular, $t_{\tau}(G) \geq k$ if and only if there is target set $S_0 \subseteq V(G)$ such that  $t_{\tau}(S_0) = k$.
\end{lemma}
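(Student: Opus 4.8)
The plan is to observe that one direction is immediate and that all the work lies in the other. Since $t_\tau(S_0)=k$ forces $t_\tau(G,V_{\sf f})\ge k$ whenever $V_{\sf f}\subseteq S_0$, the ``if'' direction needs no argument. For the ``only if'' direction I would start from a target set $S_0\supseteq V_{\sf f}$ witnessing $t_\tau(G,V_{\sf f})\ge k$, set $t:=t_\tau(S_0)\ge k$, and produce a new target set whose activation time is exactly $k$ by \emph{starting the process later}.

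Concretely, the construction I propose is $S_0':=I_\tau^{\,t-k}(S_0)$, the set of vertices already active after $t-k$ rounds of the process initiated at $S_0$ (this is well defined because $t\ge k$). Three facts then need to be checked. First, $V_{\sf f}\subseteq S_0'$: this is immediate from $V_{\sf f}\subseteq S_0\subseteq I_\tau^{\,t-k}(S_0)$ by monotonicity of $I_\tau$. Second, $S_0'$ is a target set: using the composition identity $I_\tau^{a}(I_\tau^{b}(S))=I_\tau^{a+b}(S)$ we get $I_\tau^{k}(S_0')=I_\tau^{t}(S_0)=V(G)$, so $H_\tau(S_0')=V(G)$. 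Third, $t_\tau(S_0')=k$ exactly: we have $I_\tau^{k}(S_0')=V(G)$ from the previous point, while $I_\tau^{k-1}(S_0')=I_\tau^{t-1}(S_0)$, and minimality of $t$ as the stabilization time of $S_0$ gives $I_\tau^{t-1}(S_0)\subsetneq V(G)$; hence the process from $S_0'$ is not yet stable at round $k-1$ but is at round $k$, which is precisely $t_\tau(S_0')=k$.

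I expect the only delicate points to be purely bookkeeping: establishing the semigroup property of the iterated interval operator $I_\tau$ (a one-line induction from its recursive definition $I_\tau^{j}(S)=I_\tau^{j-1}(I_\tau(S))$), and justifying $I_\tau^{t-1}(S_0)\subsetneq V(G)$, which follows from the definition of $t_\tau(S_0)$ as the \emph{minimum} $t$ with $I_\tau^{t+1}(S_0)=I_\tau^{t}(S_0)$ together with the monotonicity that forces every round before stabilization to add at least one vertex. No step presents a genuine obstacle; the single idea that makes the lemma work is the shift $S_0\mapsto I_\tau^{\,t-k}(S_0)$, which trades surplus activation time for a later starting configuration. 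Finally, the ``in particular'' clause is just the special case $V_{\sf f}=\emptyset$, since $t_\tau(G,\emptyset)=t_\tau(G)$.
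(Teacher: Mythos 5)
Your proposal is correct and matches the paper's proof: the set $I_\tau^{\,t-k}(S_0)$ you construct is exactly the set $S_0\cup S_1\cup\dots\cup S_{t-k}$ used in the paper, i.e., the vertices active after $t-k$ rounds. The only difference is that you spell out the verification (via the semigroup property of $I_\tau$ and minimality of $t$) that the paper leaves as ``easily verified''.
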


\begin{proof}
If there is target set $S_0 \subseteq V(G)$ with $V_{\sf f} \subseteq S_0$ and such that  $t_{\tau}(S_0) = k$, then by definition $t_{\tau}(G, V_{\sf f}) \geq k$, so let us focus on the forward direction. Let $S_0$ be a target set of $G$ with $V_{\sf f} \subseteq S_0$ and such that $t_{\tau}(S_0) \geq k$, and let $S_0, S_1, S_2, \ldots, S_t$ be the partition of  $V(G)$ given by  $S_0$, where $t \geq k$. If $t=k$, then we are done. Otherwise, let $S_0' = S_0 \cup S_1 \cup \dots \cup S_{t-k}$. It can be easily verified that $S_0'$ is a target set of $G$ with $V_{\sf f} \subseteq S_0'$ and such that~$t_{\tau}(S_0') =~k$.
\end{proof}


We now define auxiliary graphs that will be used in the \FPT algorithm of \cref{thm:FPT}. The crucial property of these auxiliary graphs is that they have diameter $\Ocal(k)$, which will be exploited in order to bound their treewidth.  Let $G$ be a graph, let $\tau$ be a generalized threshold function in $G$, let $v \in V(G)$, and let $k$ be a positive integer. We define the pair $(G_{k}^v, \tau_k^v)$ such that~$G_{k}^v$ is a graph and $\tau_k^v$ is a generalized threshold function in $G_{k}^v$, as follows. Let $G_{k}^v = G[N_k[v]]$, that is, the subgraph of $G$ induced by the vertices at distance at most $k$ from $v$ in $G$ (including $v$), and let
$\tau_k^v$ be the restriction of $\tau$ to $N_k[v]$. \b{Note that $\tau_k^v$ may not be a threshold function in $G_{k}^v$, even if $\tau$ is a threshold function in $G$, since the degree in $G_{k}^v$ of the vertices at distance exactly $k$ from $v$ in $G$ may have decreased, but $\tau_k^v$ is still a {\sl generalized} threshold function in $G_{k}^v$ such that $(\tau_k^v)^{*} \leq \tau^*$.}
 In the next lemma we show that dealing with the auxiliary graphs $G_{k}^v$ is enough in order to solve the \textsc{GTSS-time} problem.

\begin{lemma}\label{lem:balls}
For every graph $G$, every generalized threshold function $\tau$ in $G$, and every positive integer~$k$, $t_{\tau}(G) \geq k$ if and only if there exists a vertex $v \in V(G)$ such that $t_{\tau_k^v}(G_{k}^v, V_{\sf f}) \geq k$, where $V_{\sf f}$ is the set of vertices at distance exactly $k$ from $v$ in~$G$.
\end{lemma}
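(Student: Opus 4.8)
The plan is to prove both implications by comparing the activation process on the whole graph $G$ with the one taking place inside the ball $G_k^v = G[N_k[v]]$, and the single fact that drives everything is the \emph{locality} (``light cone'') of the process: for any graph $H$, threshold function $\sigma$, initial set $S$, vertex $u$ and integer $j \ge 0$, whether $u \in I_\sigma^j(S)$ depends only on the triple $(H[N_j^H[u]],\, \sigma|_{N_j^H[u]},\, S \cap N_j^H[u])$, where $N_j^H[u]$ denotes the set of vertices at distance at most $j$ from $u$ in $H$. This is checked by a routine induction on $j$. I would combine it with the obvious monotonicity of the process, namely that enlarging the initial set can only decrease activation times.

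For the (easier) backward direction, suppose some $v$ admits a target set $S_0'$ of $G_k^v$ with $V_{\sf f} \subseteq S_0'$ and $t_{\tau_k^v}(S_0') \ge k$. I would lift it to $G$ by forcing everything outside the ball, that is, set $S_0 = S_0' \cup (V(G) \setminus N_k[v])$. The point is that every vertex at distance at most $k-1$ from $v$ has all of its $G$-neighbours inside $N_k[v]$, while the vertices of $V_{\sf f}$ (at distance exactly $k$) are active from time $0$ in both processes; hence a straightforward induction on the time step shows $u \in I_\tau^j(S_0) \Leftrightarrow u \in I_{\tau_k^v}^j(S_0')$ for every $u \in N_k[v]$. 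Thus the two processes have identical activation times on $N_k[v]$, which simultaneously shows that $S_0$ is a target set of $G$ (all of $N_k[v]$ is activated, the rest being forced) and that its activation time is at least $k$, so $t_\tau(G) \ge k$.

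For the forward direction I would first invoke \cref{lem:equal-at-least} to obtain a target set $S_0$ of $G$ with $t_\tau(S_0) = k$ \emph{exactly}, and pick a vertex $v$ with $t_\tau(v,S_0) = k$. The candidate is $S_0' = (S_0 \cap N_k[v]) \cup V_{\sf f}$, which contains $V_{\sf f}$ by construction. A monotone induction on the value $t_\tau(u,S_0)$—using again that interior vertices keep all their neighbours inside the ball and that $\tau_k^v$ agrees with $\tau$ there—gives $t_{\tau_k^v}(u,S_0') \le t_\tau(u,S_0)$ for every $u \in N_k[v]$. This proves both that $S_0'$ activates all of $N_k[v]$, so it is a target set of $G_k^v$, and that $v$ is activated by time $k$.

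The delicate point, which I expect to be the main obstacle, is the matching lower bound: I must ensure that forcing the boundary vertices $V_{\sf f}$ into the initial set does not \emph{accelerate} the activation of $v$ below time $k$. This is exactly where locality pays off. Since $V_{\sf f}$ consists of vertices at distance exactly $k$ from $v$, it is disjoint from $N_{k-1}[v]$, so the two initial sets coincide on $N_{k-1}[v]$; moreover $N_{k-1}^{G_k^v}[v] = N_{k-1}[v]$, and the induced subgraphs and thresholds agree there. Applying the light-cone observation with $j = k-1$ and $u = v$ then forces $v \notin I_{\tau_k^v}^{k-1}(S_0')$ (because $v \notin I_\tau^{k-1}(S_0)$), that is, $t_{\tau_k^v}(v,S_0') \ge k$. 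Combined with the upper bound this yields $t_{\tau_k^v}(S_0') \ge t_{\tau_k^v}(v,S_0') \ge k$, and hence $t_{\tau_k^v}(G_k^v,V_{\sf f}) \ge k$, completing the proof.
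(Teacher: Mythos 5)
Your proof is correct and takes essentially the same approach as the paper's: both directions use the identical constructions ($S_0' = (S_0 \cap N_k[v]) \cup V_{\sf f}$, which coincides with the paper's $(S_0\cap N_{k-1}(v))\cup V_{\sf f}$, and $S_0 = S_0' \cup (V(G)\setminus N_k[v])$), and your explicit ``light-cone'' locality observation is exactly the argument the paper uses implicitly when it notes that adding vertices at distance exactly $k$ from $v$ cannot change the fact that $v$ is activated at time exactly $k$. Your write-up merely makes the inductions and the locality principle more explicit.
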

\begin{proof} Suppose first that $t_{\tau}(G) \geq k$. By \cref{lem:equal-at-least}, there is target set $S_0 \subseteq V(G)$ such that~$t_{\tau}(S_0) = k$. Let $V(G) = S_0 \uplus S_1 \uplus \dots \uplus S_k$ be the partition of $V(G)$ into $k$ non-empty sets given by \magenta{the activation process starting at} $S_0$,  let $v$ be any vertex in $S_k$, and let $V_{\sf f}$ be  the set of vertices at distance exactly~$k$ from $v$ in~$G$.  We claim that $t_{\tau_k^v}(G_{k}^v,V_{\sf f} ) \geq k$. Let $S_0^v = (S_0 \cap N_{k-1}(v)) \cup V_{\sf f}$. Since~$V_{\sf f} \subseteq S_0^v$, we just have to verify that $S_0^v$ is a target set of $G_{k}^v$ with $t_{\tau_k^v}(S_0^v) \geq k$. Since $S_0$ activates vertex $v$ at time exactly $k$ in $G$, $S_0 \cup \{u\}$ also activates vertex $v$ at time exactly $k$ for any vertex $u$ at distance exactly $k$ from $v$ in $G$. Iterating this argument, it follows that $S_0 \cup V_{\sf f}$ activates vertex $v$ at time exactly $k$ in $G$. Thus, $S_0^v$ activates vertex $v$ at time exactly $k$ in~$G_{k}^v$. As for the other vertices of $G_{k}^v$, since $S_0$ is a target set of $G$ and $V_{\sf f} \subseteq S_0^v$, it follows that $S_0^v$ is indeed a target set of $G_{k}^v$ containing $V_{\sf f}$ that activates $v$ at time exactly $k$, and therefore $t_{\tau_k^v}(G_{k}^v, V_{\sf f}) \geq k$.

Conversely, suppose that there exists $v \in V(G)$ such that $t_{\tau_k^v}(G_{k}^v, V_{\sf f}) \geq k$, where $V_{\sf f}$ is the set of vertices at distance exactly $k$ from $v$ in~$G$. Let $S_0^v$ be a target set of $G_{k}^v$ containing~$V_{\sf f}$ such that $t_{\tau_k^v}(S_0^v) \geq k$. Let $S_0 = S_0^v \cup (V(G) \setminus N_{k}[v])$. That is, $S_0$ contains $S_0^v$ and all vertices at distance at least $k+1$ from $v$ in $G$. Since $V_{\sf f} \subseteq S_0^v$, $S_0$ also contains all vertices at distance exactly $k$ from $v$ in $G$. We claim that $S_0$ is a target set of $G$ with $t_{\tau}(S_0) \geq k$. The fact that~$S_0$ is a target set of $G$ follows from the hypothesis that $S_0^v$ is a target set of $G_{k}^v$ and the fact that $S_0$ contains all vertices at distance at least $k$ from $v$ in $G$. On the other hand, since~$t_{\tau_k^v}(S_0^v) \geq k$, $V_{\sf f} \subseteq S_0^v \subseteq S_0$, and no vertex in $V(G) \setminus N_{k}[v]$ has a neighbor in the set~$N_{k-1}[v]$, it follows that $t_{\tau}(S_0) \geq k$, and therefore $t_{\tau}(G) \geq k$.
\end{proof}

The last ingredient that we need before proving \cref{thm:FPT} is to show that deciding whether $t_{\tau}(G,V_{\sf f}) \geq k$ can be expressed by an \MSOone formula of appropriate length. Note that, in particular, this applies to deciding whether $t_{\tau}(G,\emptyset) = t_{\tau}(G)\geq k$, that is, to the \textsc{GTSS-time} problem.

\begin{lemma}\label{lem:MSOL}
 Given a graph $G$, a generalized threshold function $\tau$ in $G$, a subset $V_{\sf f} \subseteq V(G)$, and a positive integer $k$, the problem of deciding whether $t_{\tau}(G,V_{\sf f}) \geq k$ can be expressed by an~\MSOone formula $\phi$ whose length depends on $k$ and $\tau^*$.
\end{lemma}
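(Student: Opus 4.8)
The plan is to directly encode the first $k$ rounds of the activation process by existentially quantified vertex sets, together with a closure condition certifying that the quantified initial set is a genuine target set. I will work over the \emph{labeled} graph obtained by equipping $G$ with unary predicates: one predicate $F$ marking the forced set $V_{\sf f}$, and predicates $T_0,T_1,\dots,T_{\tau^*}$ with $T_j(v)$ holding exactly when $\tau(v)=j$. Since there are $\tau^*+2$ such predicates and \cref{thm:Courcelle} applies verbatim to structures with a bounded number of unary labels, this augmentation is harmless; the resulting formula will be an \MSOone formula over this vocabulary whose length depends only on $k$ and $\tau^*$.

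The first building block is a formula $\mathrm{atleast}_j(u,A)$, for each $j\in\{0,\dots,\tau^*\}$, asserting that $u$ has at least $j$ neighbors in the set $A$; it is written by quantifying $j$ pairwise-distinct vertices, each adjacent to $u$ and lying in $A$, and has length $\Ocal(j^2)$. From these I obtain $\mathrm{act}(u,A):=\bigvee_{j=0}^{\tau^*}\bigl(T_j(u)\wedge \mathrm{atleast}_j(u,A)\bigr)$, expressing that $u$ would be activated in one step from the active set $A$ (note that $\mathrm{atleast}_0$ is vacuously true, so threshold-$0$ vertices activate immediately, as expected), and $\mathrm{closed}(A):=\forall u\,(u\notin A\to\neg\,\mathrm{act}(u,A))$, expressing that $A$ is closed under the interval operator, i.e.\ $I_\tau(A)=A$.

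The main formula then reads $\phi:=\exists S_0\,\exists A_1\cdots\exists A_{k-1}\,\psi$, where $\psi$ is the conjunction of the following clauses. First, $\forall v\,(F(v)\to v\in S_0)$, ensuring $V_{\sf f}\subseteq S_0$. Second, writing $A_0:=S_0$, for each $i\in[k-1]$ the clause $\forall u\,\bigl(u\in A_i\leftrightarrow(u\in A_{i-1}\vee \mathrm{act}(u,A_{i-1}))\bigr)$, which pins down $A_i=I_\tau^i(S_0)$. Third, $\neg\,\mathrm{closed}(A_{k-1})$, which by monotonicity of $I_\tau$ forces $I_\tau^{k-1}(S_0)\subsetneq I_\tau^k(S_0)$ and hence $t_\tau(S_0)\geq k$. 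Fourth, the target-set condition, expressed \emph{without} reference to $|V(G)|$ by $\forall C\,\bigl((\forall v(v\in S_0\to v\in C)\wedge \mathrm{closed}(C))\to\forall v\,(v\in C)\bigr)$, which says that the only closed set containing $S_0$ is $V(G)$, equivalently $H_\tau(S_0)=V(G)$. Correctness is then immediate from the semantics: the choices of $A_1,\dots,A_{k-1}$ are forced to be the true iterates of $S_0$, the third and fourth clauses together assert precisely that $S_0$ is a target set with $t_\tau(S_0)\geq k$, and by \cref{lem:equal-at-least} such an $S_0\supseteq V_{\sf f}$ exists if and only if $t_\tau(G,V_{\sf f})\geq k$.

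I expect the two delicate points to be the following. First, expressing membership in the class of target sets purely in \MSOone: this is handled by the ``no proper closed superset'' characterization above, which keeps the formula length independent of $n$ (a naive encoding via $I_\tau^n(S_0)=V$ would fail, since $n$ is not bounded by $k$ and $\tau^*$). Second, confining all dependence on the thresholds to the bounded disjunction over $T_j$ and the bounded-length $\mathrm{atleast}_j$ gadgets; a quick accounting gives $|\phi|=\Ocal(k\cdot(\tau^*)^3)$, so the length depends only on $k$ and $\tau^*$, as required.
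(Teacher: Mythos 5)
Your proof is correct, and while it shares the paper's basic strategy of existentially quantifying the time-layers of the process and checking local neighbor-counting conditions (your $\mathrm{atleast}_j$ gadgets are essentially the paper's conditions on vertices in $S_i \cap V_j$), it diverges on the one genuinely delicate point, namely how to certify that $S_0$ is a target set. The paper first invokes \cref{lem:equal-at-least} to replace ``$t_\tau(S_0)\geq k$'' by ``$t_\tau(S_0)=k$'' and then quantifies a \emph{partition} $V(G)=S_0\uplus S_1\uplus\dots\uplus S_k$ into $k+1$ non-empty classes; because the partition covers all of $V(G)$, the full-activation requirement $H_\tau(S_0)=V(G)$ comes for free and no extra clause is needed. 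You instead quantify the cumulative iterates $A_i=I_\tau^i(S_0)$, certify $t_\tau(S_0)\geq k$ directly via $\neg\,\mathrm{closed}(A_{k-1})$, and express the target-set property through the hull characterization ``the only closed superset of $S_0$ is $V(G)$,'' which is valid by monotonicity of $I_\tau$ (indeed $H_\tau(S_0)$ is the minimum closed set containing $S_0$, exactly as in the convexity discussion of \cref{sec:prelim}). Your route costs one additional universally quantified set variable and does not actually need \cref{lem:equal-at-least} (your closing appeal to it is redundant, since the existence of a target set $S_0\supseteq V_{\sf f}$ with $t_\tau(S_0)\geq k$ is by definition equivalent to $t_\tau(G,V_{\sf f})\geq k$); the paper's route keeps the formula purely existential over set variables but leans on the exact-time reduction. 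Both yield a formula whose length depends only on $k$ and $\tau^*$, which is all that \cref{thm:Courcelle} requires.
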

\begin{proof}
We may assume that $G$ is given along with a partition of $V(G)$ according to $V_{\sf f}$ and the values of the generalized threshold function $\tau$, namely $V(G) = V_{\sf f} \uplus V_0 \uplus V_1 \uplus \dots \cup V_{\tau^{\star}}$, where~$\tau^{\star} = \max_{v \in V(G)}\tau(v)$ and, for $j \in [0,\tau^{\star}]$, $V_j = \{v \in V(G) \setminus V_{\sf f} \mid \tau(v) = j\}$. Note that $V_{\sf f}$ and the sets $V_j$ may be empty. By \cref{lem:equal-at-least}, $(G, \tau, k)$ is a \yes-instance of \textsc{GTSS-time} 
if and only there exists a target set $S_0 \subseteq V(G)$ such that $t_{\tau}(G,\tau,S_0) = k$. The existence of such a set~$S_0$ can be equivalently expressed as the existence of
 a partition $V(G)= S_0 \uplus S_1 \uplus \dots \uplus S_k$ into~$k+1$ non-empty sets  with $V_{\sf f} \subseteq S_0$ and such that
 \begin{enumerate}
\item[(i)] for every $i \in [2,k]$ and every $j \in  [0,\tau^{\star}]$, every vertex in $S_i \cap V_j$ has strictly less than $j$ neighbors in the set $\bigcup_{h=0}^{i-2}S_h$, and
\item[(ii)] for every $i \in [1,k]$ and every $j \in  [0,\tau^{\star}]$, every vertex in $S_i \cap V_j$ has at least $j$ neighbors in the set $\bigcup_{h=0 }^{i-1}S_h$.
\end{enumerate}
Let us argue that the above conditions can be indeed expressed by an \MSOone formula $\phi$ whose length depends only on $k$ and $\tau^*$. First, note that the existence of $k+1$ pairwise disjoint vertex sets that form a partition of $V(G)$ with $V_{\sf f} \subseteq S_0$ can be easily expressed in~\MSOone. On the other hand, in order to express condition~(i) above, it suffices to quantify, for every~$i \in [2,k]$, every $j \in  [0,\tau^{\star}]$, and every vertex $v \in S_i \cap V_j$, the non-existence of $j$ neighbors of $v$ in the set $\bigcup_{h=0}^{i-2}S_h$. Finally, as for condition~(ii), it suffices to quantify, for every $i \in [1,k]$, every $j \in  [0,\tau^{\star}]$, and every vertex $v \in S_i \cap V_j$, the existence of $j$ neighbors of~$v$ in the set $\bigcup_{h=0}^{i-1}S_h$. Clearly, the length of the obtained \MSOone formula $\phi$ is bounded by a function of $k$ and $\tau^*$, and the lemma follows.
\end{proof}

We finally have all the ingredients to prove our \FPT algorithm to solve the \textsc{GTSS-time} problem. For a graph class ${\cal C}$, we denote by \textsc{GTSS-time}$|_{{\cal C}}$ (resp. \textsc{TSS-time}$|_{{\cal C}}$) the restriction of the \textsc{GTSS-time} (resp. \textsc{TSS-time}) problem to input graphs $G$ belonging to ${\cal C}$.

\begin{theorem}\label{thm:FPT}
If ${\cal C}$ is a graph class of bounded local treewidth, then the \textsc{GTSS-time}$|_{{\cal C}}$ problem is \FPT parameterized by $k$ and $\tau^*$.
\end{theorem}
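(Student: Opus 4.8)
The plan is to assemble the three preceding lemmas together with Courcelle's Theorem; indeed, essentially all the conceptual work has already been front-loaded into \cref{lem:equal-at-least}, \cref{lem:balls}, and \cref{lem:MSOL}, so what remains is to glue them and verify that every quantity involved depends only on $k$ and $\tau^*$ (and not on $n$). First, I would invoke \cref{lem:balls} to reduce the task of deciding whether $t_{\tau}(G) \geq k$ to solving $n$ auxiliary instances, one for each vertex $v \in V(G)$: namely, deciding whether $t_{\tau_k^v}(G_{k}^v, V_{\sf f}) \geq k$, where $G_{k}^v = G[N_k[v]]$ and $V_{\sf f}$ is the set of vertices at distance exactly $k$ from $v$ in $G$.

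Next I would control the complexity of each auxiliary instance. Since ${\cal C}$ has bounded local treewidth, there is a function $f:\NN\to\NN$ (independent of $G$ and $v$) such that $\tw(G_{k}^v) = \tw(G[N_k[v]]) \leq f(k)$ for every $v$; this is precisely the point where the bounded-local-treewidth hypothesis is used, and it applies verbatim since the parameter $k$ of the problem coincides with the radius of the neighborhood. Moreover, as observed just before \cref{lem:balls}, the restricted threshold function satisfies $(\tau_k^v)^{\star} \leq \tau^*$. I would then apply \cref{lem:MSOL} to the instance $(G_{k}^v, \tau_k^v, V_{\sf f}, k)$ to obtain an \MSOone formula $\phi$ expressing $t_{\tau_k^v}(G_{k}^v, V_{\sf f}) \geq k$, whose length is bounded by a function of $k$ and $\tau^*$ only. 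The bound on $|\phi|$ is uniform over all $v$ precisely because it depends only on $k$ and on the uniform upper bound $\tau^*$ on $(\tau_k^v)^{\star}$.

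Finally, I would run Courcelle's Theorem (\cref{thm:Courcelle}) on each auxiliary graph $G_{k}^v$, which has at most $n$ vertices and treewidth at most $f(k)$, deciding whether $\phi$ holds in time $g(\phi, f(k)) \cdot |N_k[v]| \leq g(\phi, f(k)) \cdot n$. Combining the two bounds, each auxiliary instance is solved in time $h(k, \tau^*) \cdot n$ for some computable function $h$, and iterating over the $n$ choices of $v$ yields a total running time of $h(k, \tau^*) \cdot n^2$. This is of the form $h(k, \tau^*)\cdot n^{\Ocal(1)}$, so the algorithm is \FPT with parameters $k$ and $\tau^*$, and by \cref{lem:balls} it correctly decides whether $t_{\tau}(G)\geq k$.

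Since no deep obstacle remains, the step deserving the most care is the uniformity of the bounds across the $n$ auxiliary instances: one must ensure that the treewidth bound $f(k)$ and the formula length bound both hold simultaneously for every $v$, and that the graph $G_{k}^v$ is presented to Courcelle's Theorem together with the vertex partition $V_{\sf f} \uplus V_0 \uplus \dots \uplus V_{\tau^{\star}}$ used in \cref{lem:MSOL} (encoded as vertex labels, to which the labelled version of \cref{thm:Courcelle} applies). Granting this bookkeeping, the running time is genuinely $h(k,\tau^*)\cdot n^2$ and the theorem follows.
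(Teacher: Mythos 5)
Your proposal is correct and follows essentially the same route as the paper's proof: reduce via \cref{lem:balls} to the $n$ neighborhood instances $G_k^v$, bound their treewidth by $f(k)$ using bounded local treewidth, express the property via the \MSOone formula of \cref{lem:MSOL} (with $(\tau_k^v)^*\leq\tau^*$), and apply Courcelle's Theorem to get an overall running time of $h(k,\tau^*)\cdot n^2$. Your extra remark about encoding the partition $V_{\sf f}\uplus V_0\uplus\dots\uplus V_{\tau^*}$ as vertex labels is a sensible piece of bookkeeping that the paper handles implicitly in \cref{lem:MSOL}.
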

\begin{proof} Let $(G, \tau, k)$ be an instance of \textsc{GTSS-time} where $G \in {\cal C}$, $\tau$ is a generalized threshold function in $G$, and $k$ is a positive integer. Since ${\cal C}$ has bounded local treewidth, there exists a function~$f:\NN \to\NN$  such that, for every graph $G\in{\cal C}$, every vertex $v \in V(G)$ and every positive integer $r$, $\tw(G[N_r[v]])\leq f(r)$.

By \cref{lem:balls}, $t_{\tau}(G) \geq k$ if and only if there exists a vertex $v \in V(G)$ such that~$t_{\tau_k^v}(G_{k}^v, V_{\sf f}) \geq k$, where $V_{\sf f}$ is the set of vertices at distance exactly $k$ from $v$ in~$G$. Based on this, for every vertex $v \in V(G)$, we generate in linear time the graph $G_{k}^v$, and it is enough to decide whether $t_{\tau_k^v}(G_{k}^v, V_{\sf f}) \geq k$. Since $G_{k}^v = G[N_k[v]]$ and $G \in {\cal C}$, we have that~$\tw(G_{k}^v) \leq f(k)$.
By \cref{lem:MSOL}, deciding whether $t_{\tau_k^v}(G_{k}^v, V_{\sf f}) \geq k$ can be expressed by an \MSOone (in particular, \MSO) formula $\phi$ whose length depends only on $k$ and~$(\tau_k^v)^* \leq \tau^*$. Therefore,
 \cref{thm:Courcelle} implies that deciding whether $t_{\tau_k^v}(G_{k}^v, V_{\sf f}) \geq k$ can be solved in time~$g(k, \tau^*, \tw(G_{k}^v)) \cdot n$ for some computable function $g$, where $n = |V(G)|$. Since
$\tw(G_{k}^v) \leq f(k)$, deciding whether~$t_{\tau}(G) \geq k$ can be solved in time~$h(k, \tau^*) \cdot n^2$ for some computable function~$h:\Bbb{N}^{2}\to\Bbb{N}$, and the theorem follows.
\end{proof}

As particular cases of \cref{thm:FPT}, it follows that the \textsc{GTSS-time}$|_{{\cal C}}$ problem is \FPT parameterized by $k$ and $\tau^*$ when ${\cal C}$ is the class of graphs of treewidth bounded by a constant, the class of graphs of maximum degree bounded by a constant, the class of planar graphs or, more generally, the class of graphs embeddable in a fixed surface (i.e., graphs of bounded genus).


With \cref{thm:hard-apex} and \cref{thm:FPT} at hand, the following theorem can be easily proved.

\begin{theorem}\label{thm:dichotomy}
Let ${\cal C}$ be a minor-closed graph class.  Then  \textsc{TSS-time}$|_{{\cal C}}$ is
\begin{itemize}
\item[$\bullet$] \FPT parameterized by $k$ and $\tau^*$, if ${\cal C}$ has bounded local treewidth.
\item[$\bullet$] \NP-complete for every fixed $k \geq 4$ and $ \tau^{*} = 2$, otherwise.
\end{itemize}
\end{theorem}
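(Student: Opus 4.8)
The plan is to assemble the two main technical results of the paper, Theorem~\ref{thm:FPT} and Theorem~\ref{thm:hard-apex}, and to use Eppstein's characterization (Theorem~\ref{thm:Eppstein}) to split the argument into the two cases according to whether ${\cal C}$ has bounded local treewidth. Since ${\cal C}$ is minor-closed, these two cases are exhaustive and mutually exclusive, which is exactly what makes the dichotomy clean.

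For the first bullet, I would assume that ${\cal C}$ has bounded local treewidth. The observation is that every instance of \textsc{TSS-time} is in particular an instance of \textsc{GTSS-time} (a threshold function is just a generalized threshold function satisfying the degree bound), and that $\tau^*$ is defined identically in both problems. Hence Theorem~\ref{thm:FPT}, applied to the class ${\cal C}$, directly yields an \FPT algorithm for \textsc{TSS-time}$|_{{\cal C}}$ parameterized by $k$ and $\tau^*$. No additional work is needed in this case.

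For the second bullet, I would assume that ${\cal C}$ does {\sl not} have bounded local treewidth. Because ${\cal C}$ is minor-closed, Theorem~\ref{thm:Eppstein} then guarantees that ${\cal C}$ contains {\sl all} apex graphs. The idea is to transfer the hardness of Theorem~\ref{thm:hard-apex}: that result constructs, for any fixed $k \geq 4$, \NP-hard instances $(G,\tau,k)$ of \textsc{TSS-time} in which $G$ is an apex graph and $\tau(v)=2$ for every vertex of $G$. Since all these apex graphs lie in ${\cal C}$, the constructed instances are valid instances of \textsc{TSS-time}$|_{{\cal C}}$, and therefore the problem is \NP-hard on ${\cal C}$ for every fixed $k \geq 4$ and $\tau^*=2$. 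Membership in \NP\ follows from Corollary~\ref{lem:facilNP}, which shows that \textsc{GTSS-time}, and hence its restriction \textsc{TSS-time}, is in \NP; so the problem is \NP-complete as claimed.

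I expect this proof to be a short and essentially routine combination of earlier statements, as the genuine effort has already been invested in Theorem~\ref{thm:FPT} (the \FPT algorithm via the bounded-treewidth auxiliary graphs $G_k^v$ and Courcelle's theorem) and in Theorem~\ref{thm:hard-apex} (the apex-graph reduction from \textsc{Restricted Planar $3$-Sat}). The only delicate point, and the conceptual heart of the argument, is the invocation of Eppstein's theorem to conclude that a minor-closed class of unbounded local treewidth must contain the entire family of apex graphs produced by the reduction; this is precisely what ensures that the hardness instances always belong to ${\cal C}$, and it is what justifies the slogan that ``local treewidth governs the complexity of target set selection with maximum activation time.''
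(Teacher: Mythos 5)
Your proposal is correct and matches the paper's own proof essentially verbatim: the first bullet is obtained by applying Theorem~\ref{thm:FPT} (noting that \textsc{TSS-time} is a special case of \textsc{GTSS-time}), and the second by combining Theorem~\ref{thm:Eppstein} with the apex-graph hardness of Theorem~\ref{thm:hard-apex}, with membership in \NP\ from Corollary~\ref{lem:facilNP}. Nothing is missing.
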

\begin{proof}
Let ${\cal C}$ be a minor-closed graph class. If ${\cal C}$ has bounded local treewidth, the result follows from \cref{thm:FPT}.  Otherwise, \cref{thm:Eppstein} implies that ${\cal C}$ contains all apex graphs, and by \cref{thm:hard-apex} the \textsc{TSS-time}$|_{{\cal C}}$ problem is  \NP-complete for every fixed $k \geq 4$ and $ \tau^{*} = 2$.
\end{proof}

\b{Note that, since both \cref{thm:hard-apex} and \cref{thm:FPT} apply to the \textsc{GTSS-time} problem, the same dichotomy above applies to the generalized version as well. Also, since 2-\textsc{Neighbor Bootstrap Percolation-time} is a particular case of \textsc{GTSS-time}, \cref{cor:P3-hard-apex} implies the same dichotomy for the former problem, with the constraint on $\tau^*$ being irrelevant.} 
\section{Maximum TSS-time is linear-time solvable in trees}\label{sec:PolyTrees}

In this section, we obtain an $\Ocal(n)$-time algorithm and an $\Ocal(n^2)$-time algorithm for the maximization versions of \pname{TSS-time} and \pname{GTSS-time} in trees, respectively.  That is, for the problems in which the objective is to compute the maximum activation time $t_\tau(T)$ of a given tree $T$ and a (generalized) threshold function $\tau$ in $T$.

Let us begin with \pname{TSS-time}.
Given a tree $T$ and a threshold function $\tau$ in $T$, we say that a vertex $v$ is \emph{saturated} if $\tau(v)=d(v)$; otherwise, it is \emph{non-saturated}.
Clearly, a saturated vertex $v$ is activated if and only if it is in the target set or all its neighbors are activated. In other words, a saturated vertex outside the target set cannot help to activate other vertices.

Given a tree $T$ and two adjacent vertices $w$ and $x$, let $T(w,x)$ be the subtree containing $x$ obtained from $T$ by removing the edge $wx$.
Also let $T[w,x]$ be the subtree obtained from $T(w,x)$ by adding vertex $w$ and edge $wx$.

\begin{lemma}\label{lema1}
Let $T=(V,E)$ be a tree with at least two vertices, $\tau$ be a threshold function in $T$, $v$ be a leaf of $T$, and $w$ be the only neighbor of $v$.
There exists a proper subset $S\subsetneq V$ such that $v,w\not\in S$ and $I(S)=S$, and $S\cup\{v\}$ is a target set (that is $H(S\cup\{v\})=V$) which activates $w$ at time 1.
\end{lemma}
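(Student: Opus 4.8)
The plan is to prove the statement by strong induction on $n=|V(T)|$. First I would record the only structural fact about $v$ that matters: since $v$ is a leaf, $d(v)=1$, and the constraint $1\le\tau(v)\le d(v)$ forces $\tau(v)=1$, so any set containing $v$ already supplies one active neighbour to $w$. For the base case $n=2$, the tree is the single edge $vw$ and I would take $S=\emptyset$: it is trivially closed because all thresholds are positive (no vertex has $\ge\tau\ge 1$ neighbours in $\emptyset$), while $\tau(w)=1$ gives $H(\{v\})=V$ with $w$ activated at time $1$.

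For the inductive step, write the neighbours of $w$ as $v,u_1,\dots,u_d$ with $d=d(w)-1\ge 1$. The key design decision is dictated by two opposing requirements: for $w$ to fire exactly at time $1$ it must have $\tau(w)$ active neighbours at time $0$, one being $v$; while for $I(S)=S$ with $w\notin S$ it must have \emph{fewer} than $\tau(w)$ neighbours in $S$. I would therefore place \emph{exactly} $\tau(w)-1$ of the $u_i$ into $S$, which is possible since $\tau(w)-1\le d(w)-1=d$. Concretely, I would \emph{freeze} the subtrees $T(w,u_1),\dots,T(w,u_{\tau(w)-1})$ entirely into $S$, and for each remaining child $u_j$ (with $\tau(w)\le j\le d$) apply the induction hypothesis to the strictly smaller tree $T[w,u_j]$, in which $w$ plays the role of the leaf; here I assign $w$ the threshold $1$ and keep $\tau$ elsewhere (a valid threshold function, since $d_{T[w,u_j]}(u_j)=d_T(u_j)$ and only $w$'s degree drops). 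This yields a closed set $S_j\subseteq V(T(w,u_j))$ with $u_j\notin S_j$ such that $S_j\cup\{w\}$ activates all of $T[w,u_j]$. I then set $S=\bigcup_{i=1}^{\tau(w)-1}V(T(w,u_i))\cup\bigcup_{j=\tau(w)}^{d}S_j$.

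The verification splits into four checks. That $v,w\notin S$ and $S\subsetneq V$ is immediate from the construction. For the timing, at time $0$ the active neighbours of $w$ are exactly $v$ together with the $\tau(w)-1$ frozen children (no $u_j$ with $j\ge\tau(w)$ lies in $S$), so $w$ has precisely $\tau(w)$ active neighbours and fires at time $1$. For $S\cup\{v\}$ being a target set, once $w$ is active the only link between $T(w,u_j)$ and the rest of $T$ is through $w$, so the frozen subtrees are active at time $0$ and each recursed subtree percolates as in its sub-process. For closedness I would check every $z\notin S$: the cases $z=v$ and $z=w$ follow from the counts above, and any $z$ inside a recursed subtree has all its neighbours inside $T[w,u_j]$ with $w\notin S$, so its number of neighbours in $S$ equals its number in the closed set $S_j$, which is below $\tau(z)$.

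I expect the main obstacle to be the activation argument for $S\cup\{v\}$: one must justify that delaying $w$'s activation by one step, compared with the recursive sub-process where $w$ is active from time $0$, does not prevent the subtree from percolating. The clean way to see this is to use closedness of $S_j$ itself: it guarantees that no vertex of $T(w,u_j)$ can fire at time $1$ from $S_j$ alone, so at time $1$ the active state restricted to $T[w,u_j]$ is exactly $S_j\cup\{w\}$. From that point on the global process restricted to $T[w,u_j]$ coincides step-by-step with the sub-process initiated by $S_j\cup\{w\}$, because the subtree communicates with $T$ only through the now-permanently-active vertex $w$; hence the whole subtree is activated, and therefore $H(S\cup\{v\})=V$.
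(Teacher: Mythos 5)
Your proof is correct and follows essentially the same route as the paper's: induction on $|V(T)|$, freezing exactly $\tau(w)-1$ of the subtrees hanging off $w$ into $S$, and applying the induction hypothesis to each remaining subtree $T[w,u_j]$ with $w$ as the leaf. Your explicit handling of the leaf's threshold in the recursive call and of the one-step delay in $w$'s activation (via closedness of each $S_j$) makes rigorous two points the paper leaves implicit, but the argument is the same.
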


\begin{proof}
We prove the lemma by induction on the number $n$ of vertices of $T$. If $n=2$, $T$ contains exactly the two vertices $v$ and $w$ and the edge $vw$. Moreover $\tau(v)=\tau(w)=1$ (recall that $1\leq\tau(x)\leq d(x)$ for every vertex $x$ of $T$). Taking $S=\emptyset$, we are done, since $I(\emptyset)=\emptyset$ and $I(\{v\})=\{v,w\}=V(T)$.

Now, fix $n>2$, suppose that the lemma is true for every tree $T$ with less than $n$ vertices, and we will prove that the lemma is also true for trees on $n$ vertices.
Let $v$ be any leaf of $T$ and let $w$ be the only neighbor of $v$. Since $n>2$, $d(w)\geq 2$. Let $x_1,\ldots,x_{d(w)-1}$ be the neighbors of $w$ distinct from $v$.
In the following, notice that the sets $[\tau(w)-1]$ and $[\tau(w),d(w)-1]$ may be empty.

We will construct a proper subset $S$ of $V(T)$ satisfying the conditions of the lemma. Firstly let $S=\emptyset$.
If $\tau(w)\geq 2$, add to $S$ all the vertices in $T(w,x_i)$
for every $i\in[\tau(w)-1]$.
If $\tau(w)<d(w)$, fix $k\in[\tau(w),d(w)-1]$.
By the induction hypothesis, since the subtree $T[w,x_k]$ has less than $n$ vertices and $w$ is a leaf of $T[w,x_k]$, there exists a set $S_k$ such that $w,x_k\not\in S_k$, $I(S_k)=S_k$, and $H(S_k\cup\{w\})\supseteq V(T[w,x_k])$. With this, add $S_k$ to $S$ for every $k\in [\tau(w),d(w)-1]$.

By construction, we have that $v,w\not\in S$. We first prove that $I(S)=S$ in $T$. Notice that the only neighbors of $w$ in $S$ are in $\{x_i \mid i\in[\tau(w)-1]\}$, which cannot activate $w$, since its threshold is $\tau(w)$. Then, all the vertices in the subtrees $T(w,x_i)$ for all $i\in[\tau[w]-1]$ together cannot activate $w$. Moreover, $\bigcup_{k=\tau(w)}^{d(w)-1} S_{k}$ (this set may be empty) cannot activate any vertex in $\{x_k \mid k\in[\tau(w),d(w)-1]\}$, since $w,x_k\not\in S_k$ and $I(S_k)=S_k$ for $k\in[\tau(w),d(w)-1]$. Consequently, $S$ cannot activate $w$ and $I(S)=S$.

Now we prove that $H(S\cup\{v\})=V(T)$, that is, that $S\cup\{v\}$ is a target set of $T$.
Firstly notice that $v$ together with all $x_i$'s with $i\in[\tau(w)-1]$ activate $w$ at time 1, since its threshold is $\tau(w)$. Moreover, recall that $H(S_k\cup\{w\})=V(T[w,x_k])$. Therefore, all vertices in the subtrees $T[w,x_k]$ for $k \in [\tau(w),d(w)-1]$ are activated in the process and consequently $H(S\cup\{v\})=V(T)$.
\end{proof}

\cref{fig:Fig8} shows an example of the configuration considered in \cref{lema1}: a set $S$ such that $S\cup\{v\}$ is a target set, but $I(S)=S$.

\begin{figure}[ht]
\centering\scalebox{0.8}{
\begin{tikzpicture}[scale=1.2]
\tikzstyle{vertex1}=[draw,circle,fill=white,minimum size=15pt,inner sep=2pt]
\tikzstyle{vertex2}=[draw,circle,fill=black!25,minimum size=15pt,inner sep=2pt]

\node at (0.4,5) {\textcolor{red}{5}};
\node at (2.3,7.3) {\textcolor{red}{3}};
\node at (3.3,7.3) {\textcolor{red}{3}};
\node at (4.3,7.3) {\textcolor{red}{3}};
\node at (5.3,7.3) {\textcolor{red}{3}};
\node at (2.3,3.3) {\textcolor{red}{3}};
\node at (3.3,3.3) {\textcolor{red}{3}};
\node at (4.3,3.3) {\textcolor{red}{3}};
\node at (5.3,3.3) {\textcolor{red}{3}};

\node[vertex1] (v) at (-3,5) {$v$};
\node[vertex1] (w) at (0,5) {$w$};
\node[vertex2] (x1) at (-1,6) {$x_1$};
\node[vertex2] (x2) at (0,6) {$x_2$};
\node[vertex2] (y1) at (-1,7) {};
\node[vertex2] (y2) at (0,7) {};
\node[vertex2] (z1) at (-1,8) {};
\node[vertex2] (z2) at (0,8) {};
\node[vertex2] (x3) at (-1,4) {$x_3$};
\node[vertex2] (x4) at (0,4) {$x_4$};
\node[vertex2] (y3) at (-1,3) {};
\node[vertex2] (y4) at (0,3) {};
\node[vertex2] (z3) at (-1,2) {};
\node[vertex2] (z4) at (0,2) {};
\node[vertex1] (x5) at (2,7) {$x_5$};
\node[vertex1] (x6) at (2,3) {$x_6$};

\path[-,thick]
(w) edge (v) edge (x1) edge (x2) edge (x3) edge (x4) edge (x5) edge (x6)
(y1) edge (x1) edge (z1) (y2) edge (x2) edge (z2)
(y3) edge (x3) edge (z3) (y4) edge (x4) edge (z4);

\node[vertex2] (a5) at (2,8) {}; \path[-,thick] (x5) edge (a5);
\node[vertex2] (a5) at (2,6) {}; \path[-,thick] (x5) edge (a5);
\node[vertex2] (b5) at (2,5) {}; \path[-,thick] (b5) edge (a5);
\node[vertex1] (a5) at (3,7) {}; \path[-,thick] (x5) edge (a5);
\node[vertex2] (x5) at (3,8) {}; \path[-,thick] (x5) edge (a5);
\node[vertex2] (x5) at (3,6) {}; \path[-,thick] (x5) edge (a5);
\node[vertex1] (x5) at (4,7) {}; \path[-,thick] (x5) edge (a5);
\node[vertex2] (a5) at (4,8) {}; \path[-,thick] (x5) edge (a5);
\node[vertex2] (a5) at (4,6) {}; \path[-,thick] (x5) edge (a5);
\node[vertex2] (b5) at (4,5) {}; \path[-,thick] (b5) edge (a5);
\node[vertex1] (a5) at (5,7) {}; \path[-,thick] (x5) edge (a5);
\node[vertex2] (x5) at (5,8) {}; \path[-,thick] (x5) edge (a5);
\node[vertex2] (x5) at (5,6) {}; \path[-,thick] (x5) edge (a5);

\node[vertex2] (a6) at (2,2) {}; \path[-,thick] (x6) edge (a6);
\node[vertex2] (a6) at (2,4) {}; \path[-,thick] (x6) edge (a6);
\node[vertex1] (a6) at (3,3) {}; \path[-,thick] (x6) edge (a6);
\node[vertex2] (x6) at (3,2) {}; \path[-,thick] (x6) edge (a6);
\node[vertex2] (x6) at (3,4) {}; \path[-,thick] (x6) edge (a6);
\node[vertex2] (y6) at (3,5) {}; \path[-,thick] (x6) edge (y6);
\node[vertex1] (x6) at (4,3) {}; \path[-,thick] (x6) edge (a6);
\node[vertex2] (a6) at (4,2) {}; \path[-,thick] (x6) edge (a6);
\node[vertex2] (a6) at (4,4) {}; \path[-,thick] (x6) edge (a6);
\node[vertex1] (a6) at (5,3) {}; \path[-,thick] (x6) edge (a6);
\node[vertex2] (x6) at (5,2) {}; \path[-,thick] (x6) edge (a6);
\node[vertex2] (x6) at (5,4) {}; \path[-,thick] (x6) edge (a6);
\node[vertex2] (y6) at (5,5) {}; \path[-,thick] (x6) edge (y6);
\end{tikzpicture}}

\caption{An example of a tree $T$ and vertices $v$ and $w$ as in the statement of \cref{lema1}. The vertices of $S$ are shown in \b{gray}. The relevant thresholds are in red. Notice that $I(S)=S$ (i.e., $S$ does not activate any vertex), but $S\cup\{v\}$ is a target set (i.e., $H(S\cup\{v\})=V(T)$).}
\label{fig:Fig8}
\end{figure} 

From \cref{lema1}, we obtain the following lemma for threshold functions in trees.

\begin{lemma}\label{lema1b}
Let $T=(V,E)$ be a tree with at least two vertices and $\tau$ be a threshold function in $T$. For any path $P=(v_0,v_1,\ldots,v_p)$ with $p\geq 1$ in $T$ with $v_0$ being a leaf and all internal vertices being non-saturated, there exists a target set $S_P$ of $T$ which contains $v_0$ and activates $v_i$ at time $i$, for every $i\in[p]$.
\end{lemma}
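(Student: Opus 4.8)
The plan is to prove Lemma~\ref{lema1b} by induction on the length $p$ of the path $P=(v_0,v_1,\ldots,v_p)$, using Lemma~\ref{lema1} to handle the base and to ``seal off'' each internal vertex so that it is activated at exactly the right moment and does not leak activation prematurely. The key structural fact I will exploit is that since each internal vertex $v_i$ (for $1\le i\le p-1$) is \emph{non-saturated}, namely $\tau(v_i)<d(v_i)$, the vertex $v_i$ has at least one neighbor besides $v_{i-1}$ and $v_{i+1}$ through which we can ``feed'' activation at the right time, while still leaving $\tau(v_i)-1$ further neighbors to be supplied by convex (inert) sets that never activate on their own.

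\medskip

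\textbf{Base case ($p=1$).} Here $P=(v_0,v_1)$ with $v_0$ a leaf and $v_1=w$ its unique neighbor. Applying Lemma~\ref{lema1} directly with $v=v_0$ and $w=v_1$ gives a set $S\subsetneq V$ with $v_0,v_1\notin S$, $I(S)=S$, and such that $S\cup\{v_0\}$ is a target set activating $v_1$ at time $1$. Setting $S_P=S\cup\{v_0\}$ finishes the base case.

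\medskip

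\textbf{Inductive step.} Suppose the statement holds for all paths of length less than $p$, and let $P=(v_0,\ldots,v_p)$ be as in the statement with $p\ge 2$. The idea is to activate the prefix $(v_0,\ldots,v_{p-1})$ at times $0,1,\ldots,p-1$ by induction, and then arrange that $v_{p-1}$ activates $v_p$ at time $p$. I would root the tree and consider the subtree $T'=T(v_{p-2},v_{p-1})$ hanging below the edge $v_{p-2}v_{p-1}$, inside which the truncated path $(v_{p-1},v_p,\ldots)$ lives — but the cleaner approach is to induct on the \emph{suffix} rather than the prefix. Concretely, consider the subtree $T[v_0,v_1]=T(v_1,\ldots)$ obtained along the first edge, apply the induction hypothesis to the shorter path $(v_1,\ldots,v_p)$ inside the appropriate subtree (where $v_1$ now plays the role of a leaf after deleting the edge $v_0v_1$, which drops $d(v_1)$ by one so $v_1$ becomes a leaf of that subtree), obtaining a target set $S'$ of that subtree that activates each $v_i$ at time $i-1$. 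I would then use Lemma~\ref{lema1}'s construction on the remaining branches of $v_1$ — those not containing $v_2$ — to supply exactly $\tau(v_1)-1$ additional activated neighbors to $v_1$, built from inert ($I(S)=S$) sets so that they never fire spontaneously, together with placing $v_0$ in the target set. The point is that $v_1$ then needs both $v_0$ (active at time $0$) \emph{and} one more active neighbor to reach threshold $\tau(v_1)$, and these $\tau(v_1)-1$ preloaded neighbors provide the rest, so $v_1$ activates at time $1$, which correctly offsets the suffix activation by one, yielding $v_i$ at time $i$ for all $i\in[p]$. Because $v_1$ is non-saturated, $d(v_1)>\tau(v_1)$, so there genuinely are enough neighbor-branches to install these sets without using the edge to $v_2$.

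\medskip

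\textbf{The main obstacle} I anticipate is the bookkeeping around \emph{timing and non-interference}: I must guarantee (a) that the inert sets placed in the side-branches of each internal vertex satisfy $I(S)=S$ so they do not activate their attachment vertex early and desynchronize the schedule, and (b) that activation of $v_i$ does not cascade back through $v_{i-1}$ or sideways to disrupt the carefully offset times on the main path. Part (a) is exactly what Lemma~\ref{lema1} delivers for each branch (it produces a convex/inert set that only fires once its root is externally activated), so the real work is verifying (b) — that the union of all these branch-sets together with $v_0$ forms a legitimate target set whose induced activation times on $P$ are precisely $0,1,\ldots,p$. I would argue this by checking, at each time step $i$, that $v_i$ receives its $\tau(v_i)$-th active neighbor \emph{exactly} at step $i$ (contributed by $v_{i-1}$ plus the preloaded inert branches reaching threshold) and not earlier, using that $v_{i+1}$ is still inactive at step $i$ and that the inert branch-sets contribute at most $\tau(v_i)-1$ active neighbors on their own.
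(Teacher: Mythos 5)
Your base case is correct, and the first sentence of your inductive step --- activate the prefix $(v_0,\ldots,v_{p-1})$ at times $0,\ldots,p-1$ by induction, then arrange for $v_{p-1}$ to activate $v_p$ at time $p$ --- is precisely the paper's argument. The paper removes the edge $v_{p-1}v_p$, applies the induction hypothesis to the prefix inside the component $T'$ containing $v_0$ (legitimate because $v_0$ is still a leaf of $T'$ and, since $v_{p-1}$ is non-saturated, $\tau$ restricted to $T'$ is still a threshold function), and then applies Lemma~\ref{lema1} to $T''=T[v_{p-1},v_p]$ with $\tau''(v_{p-1})$ reset to $1$, obtaining an inert set $S''$ with $I(S'')=S''$ such that $S''\cup\{v_{p-1}\}$ activates $v_p$ one step after $v_{p-1}$; taking $S_P=S'\cup S''$ finishes the proof.

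Unfortunately you then discard this in favour of inducting on the suffix $(v_1,\ldots,v_p)$, and that version has two genuine gaps. First, the statement being proved requires the initial vertex of the path to be a \emph{leaf} of the tree it lives in; deleting the edge $v_0v_1$ only drops $d(v_1)$ by one, so $v_1$ becomes a leaf of the resulting subtree only when $d_T(v_1)=2$. For an internal vertex $v_1$ of larger degree the induction hypothesis simply does not apply to $(v_1,\ldots,v_p)$ in that subtree. Second, even if you obtain a target set $S'$ that contains $v_1$ and activates $v_i$ at time $i-1$, your plan is to remove $v_1$ from the initial set, cause it to fire at time $1$ instead, and conclude that every $v_i$ is uniformly delayed to time $i$. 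That shift is not justified: $v_2$ might already have $\tau(v_2)$ neighbours in $S'\setminus\{v_1\}$, in which case it still activates at time $1$ and the schedule collapses back to $v_p$ firing at time $p-1$. Making this work would require a strengthened induction statement that tracks when the first vertex of the path becomes active (in the spirit of the ``beginning time'' bookkeeping used later for the generalized problem), which you do not formulate. The prefix induction avoids both problems, because the leaf $v_0$ and the times $0,\ldots,p-1$ are inherited verbatim and only the single new vertex $v_p$ has to be scheduled.
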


\begin{proof}
We prove the lemma by induction on the number $p$ of edges in $P$.
If $p=1$, $P$ has only two vertices $v_0$ and $v_1$, where $v_0$ is a leaf and $v_1$ is the only neighbor of $v_0$, and we are done by \cref{lema1}.

Now fix $p\geq 2$ and suppose that the lemma is true for every path with less than $p$ edges.
Let $P=(v_0,v_1,\ldots,v_p)$ be a path with $p$ edges such that $v_0$ is a leaf and all internal vertices are non-saturated. Let us prove that the lemma is true for $P$.
Let $T'$ be the subtree containing $v_0$ obtained from $T$ by removing the edge $v_{p-1}v_p$. Since $v_{p-1}$ is non-saturated, $\tau(v_{p-1})$ is strictly smaller than the degree of $v_{p-1}$ in $T$, and consequently it is smaller than or equal to the degree of $v_{p-1}$ in $T'$.
With this, let $\tau'$ be the threshold function in $T'$ such that $\tau'(u)=\tau(u)$ for every vertex of $T'$.

Since the path $P'=(v_0,v_1,\ldots,v_{p-1})$ in $T'$ (with threshold function $\tau'$) has less than $p$ edges, we have by induction that there exists a target set $S'$ of $T'$ which contains $v_0$ and activates $v_i$ at time $i$, for every $i\in[p-1]$.

Now let $T''=T[v_{p-1},v_p]$ and let $\tau''$ be the threshold function in $T''$ such that $\tau''(v_{p-1})=1$ and $\tau''(u)=\tau(u)$ for every vertex $u\in V(T'')\setminus\{v_{p-1}\}$. Since $v_{p-1}$ is a leaf of $T''$ and $v_p$ is the only neighbor of $v_{p-1}$ in $T''$, we can apply \cref{lema1} and obtain a vertex subset $S''$ in $T''$ such that $v_{p-1},v_p\not\in S''$ and $S''$ does not activate vertices in $T''$, and such that $S''\cup\{v_{p-1}\}$ is a target set of $T''$ which activates $v_p$ at time 1.

With this, let $S_P=S'\cup S''$. By construction, we have that $S_P$ contains $v_0$ and activates all vertices in $T'$, since it contains $S'$, activating $v_i$ at time $i$ for every $i\in[p-1]$. Finally, since $S_P$ contains $S''$ and activates $v_{p-1}$ at time $p-1$, we have that $S_P$ also activates all vertices in $T''$ (and consequently $S_P$ is a target set of $T$) and activates $v_p$ at time $p$.
\end{proof}

Given a tree $T$ and a threshold function $\tau$ in $T$, let $F_{T,\tau}$ be the forest obtained from $T$ in the following way: first remove all saturated vertices, and then, for every saturated vertex $v$ in $T$ and every non-saturated neighbor $w$ of $v$ in $T$, create a new vertex $v_w$ and add the edge $v_ww$ to $F_{T,\tau}$.

\begin{theorem}\label{teo-trees1}
For any tree $T$ and threshold function $\tau$ in $T$, the maximum activation time $t_\tau(T)$ is the maximum diameter among the trees in the forest $F_{T,\tau}$.
Consequently, \pname{TSS-time} is linear-time solvable in trees.
\end{theorem}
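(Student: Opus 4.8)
The plan is to prove the quantitative identity $t_{\tau}(T)=L$, where $L$ is the length (number of edges) of a longest path in $T$ all of whose internal vertices are non-saturated, and then to show that $L$ equals the maximum diameter of a tree in $F_{T,\tau}$; the linear-time claim then reduces to computing a tree diameter. I would first record two structural facts that drive everything. Every leaf of $T$ is saturated, since a degree-one vertex $v$ has $\tau(v)=d(v)=1$. Moreover, every non-saturated vertex $w$ keeps its degree in $F_{T,\tau}$, because each $T$-neighbour of $w$, saturated or not, contributes exactly one $F_{T,\tau}$-neighbour of $w$ (the neighbour itself, or the added vertex). Hence $d(w)\ge 2$ for non-saturated $w$, so all leaves of $F_{T,\tau}$ are added vertices $v_w$, and the interior of any diametral path of $F_{T,\tau}$ consists of non-saturated vertices of $T$.

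For the upper bound $t_{\tau}(T)\le L$, I would extract a non-saturated-interior path from any target set $S_0$ with $t_{\tau}(S_0)=t$. Pick $z=z_t$ activated at time $t$ and walk backwards: having chosen $z_i$ activated at time $i\ge 1$, select as $z_{i-1}$ a neighbour of $z_i$ activated at \emph{exactly} time $i-1$; such a neighbour exists, for otherwise all the $\tau(z_i)$ neighbours witnessing the activation of $z_i$ would be active already at time $i-2$, forcing $z_i$ active at time $i-1$. As activation times strictly decrease, $z_0,\dots,z_t$ is a simple path with $z_0\in S_0$. Each internal $z_i$ ($1\le i\le t-1$) is non-saturated: if $z_i$ were saturated, its firing at time $i$ would require \emph{all} its neighbours, in particular $z_{i+1}$, to be active by time $i-1$, contradicting that $z_{i+1}$ fires at time $i+1$. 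This yields a non-saturated-interior path of length $t$, so $t\le L$.

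For the lower bound $t_{\tau}(T)\ge L$, I would take a longest non-saturated-interior path $P=(z_0,\dots,z_L)$. A short extremality argument forces both endpoints to be saturated: if, say, $z_L$ were non-saturated, then $d(z_L)\ge 2$ supplies a neighbour off $P$ extending $P$, contradicting maximality. When one endpoint is a leaf this is exactly \cref{lema1b}, but the genuine difficulty is that both saturated endpoints may be \emph{non-leaves} (a spider-like example already makes the longest leaf-to-anything non-saturated-interior path strictly shorter than $L$), so \cref{lema1b} as stated is not enough. The main obstacle is therefore to strengthen \cref{lema1} and \cref{lema1b} into a generalized activation lemma: for any path $(z_0,\dots,z_p)$ with non-saturated internal vertices, there is a target set $S_0\ni z_0$ activating each $z_i$ at time exactly $i$. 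The construction places $z_0$ in $S_0$; at each non-saturated $z_i$ the slack $d(z_i)>\tau(z_i)$ lets us seed exactly $\tau(z_i)-1$ off-path neighbours, completing each such off-path subtree by a stable set $S$ with $I(S)=S$ as in \cref{lema1} so that the timing is not short-circuited, whence $z_i$ has only $\tau(z_i)-1$ active neighbours at time $i-2$ but fires at time $i$ once $z_{i-1}$ joins; a saturated endpoint $z_p$ is handled by seeding all its off-path neighbours, so that $z_{p-1}$, firing at time $p-1$, is the last required neighbour and $z_p$ fires exactly at time $p$. Applying this to $P$ gives $t_{\tau}(T)\ge L$.

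Finally, I would match $L$ with the diameter of $F_{T,\tau}$ using the two structural facts. A longest non-saturated-interior path $(z_0,\dots,z_L)$ with saturated endpoints maps edge-for-edge to a path of $F_{T,\tau}$ by replacing $z_0,z_L$ with the added leaves $(z_0)_{z_1},(z_L)_{z_{L-1}}$; conversely a diametral path $v_w-w-\dots-w'-v'_{w'}$ of $F_{T,\tau}$ maps back to a non-saturated-interior path $v-w-\dots-w'-v'$ of the same length, so $L$ equals the maximum diameter over the trees of $F_{T,\tau}$. The only degenerate case is when every vertex is saturated, in which $F_{T,\tau}$ has no edges and both the causal-path bound and a single edge give $t_{\tau}(T)=1=L$ for $n\ge 2$; this I would check directly. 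Since $F_{T,\tau}$ has $\Ocal(n)$ vertices and edges and is built in $\Ocal(n)$ time, and the diameter of each tree component is computed by two breadth-first searches in time linear in its size, taking the maximum yields $t_{\tau}(T)$ in $\Ocal(n)$ time, proving that \pname{TSS-time} is linear-time solvable in trees.
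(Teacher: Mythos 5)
Your proof is correct and follows the same overall strategy as the paper's: the upper bound via a backward causal path whose internal vertices must be non-saturated, the lower bound via a target set that activates a prescribed non-saturated-interior path one vertex per time step (using \cref{lema1}-style sets $S$ with $I(S)=S$ to prevent premature activation), and the final translation into the maximum diameter of $F_{T,\tau}$. The one genuine divergence is in the lower bound. You correctly observe that \cref{lema1b} as literally stated requires the first vertex of the path to be a leaf of $T$, and you propose to fix this by proving a strengthened activation lemma for arbitrary starting vertices. The paper instead obtains the same effect with a one-line reduction: given $P=(v_0,\dots,v_t)$ with non-saturated internal vertices, it passes to the subtree $T'=T[v_0,v_1]$, in which $v_0$ \emph{is} a leaf, applies \cref{lema1b} there, and then adds all of $V(T)\setminus V(T')$ to the target set; since $T$ is a tree, none of these added vertices is adjacent to $v_1,\dots,v_t$, so the timing along $P$ is unaffected. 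Your generalized lemma is essentially a re-derivation of what this reduction yields for free, so nothing is wrong, but the reduction spares you a second induction. Two smaller points: your extremality argument forcing both endpoints of a longest such path to be saturated is not needed for the lower bound itself (the construction works regardless), though it is convenient for the correspondence with $F_{T,\tau}$; and your explicit treatment of the degenerate case in which every vertex of $T$ is saturated (so that $F_{T,\tau}$ is empty while $t_\tau(T)=1$) addresses an edge case that the paper's proof passes over silently, which is a small improvement rather than a gap.
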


\begin{proof}
Consider a target set $S$ of a tree $T$ which activates a vertex $v$ at time $t$. Then there exists a path $P=(v_0,v_1,\ldots,v_{t-1},v_t=v)$ in $T$ of vertices activated by $S$ at times $0,1,\ldots,t-1,t$, respectively.
Since a saturated vertex is activated if it is in the target set or if all its neighbors are activated, all internal vertices in the path $P$ are non-saturated.

Now consider a path $P=(v_0,v_1,\ldots,v_t)$ such that all its internal vertices are non-saturated. Let $T'=T[v_0,v_1]$. Since $v_0$ is a leaf of $T'$, by \cref{lema1b} there exists a target set $S'$ of $T'$ which contains $v_0$ and activates $v_i$ at time $i$ for every $i\in[t]$. Let $S$ be the set obtained from $S'$ by adding all vertices in $V(T) \setminus V(T')$. Therefore, $S$ is a target set of $T$ which contains $v_0$ and activates $v_i$ at time $i$ for every $i\in[t]$.

Thus, $T$ has maximum activation time at least $t$ if and only if there exists a path $P$ with $t$ edges in $T$ such that all its internal vertices are non-saturated.
Then, \b{by construction of $F_{T,\tau}$}, the maximum activation time $t_\tau(T)$ is equal to the maximum diameter among the trees in the forest $F_{T,\tau}$.
Since the diameter of a tree can be computed in linear time, \b{and the forest $F_{T,\tau}$ can be clearly constructed in linear time}, we have that \pname{TSS-time} is linear-time solvable in trees.
\end{proof}

\medskip

Let us now focus on the \pname{GTSS-time} problem.
Given a graph $G$, a generalized threshold function $\tau$ in $G$, a vertex subset $S_0\subseteq V(G)$, and a vertex $v$ of $G$, recall that $t_{\tau}(v,S_0)$ is the minimum integer $k$ such that $v\in I^k(S_0)$, or $t_{\tau}(v,S_0)=\infty$ if $v\not\in H(S_0)$. By applying the algorithm \pname{Activation-Times} with input set $S_0$, we have that $H(S_0)$ and $t_{\tau}(v,S_0)$ for every vertex $v$ can be computed in time $\Ocal(m+n)$.

We first prove the auxiliary lemma below.
\b{Let in this section} $V_{\sf f}$ be the set of \emph{forced} vertices by the threshold function, that is, the set of vertices $u$ of $T$ with $\tau(u)>d(u)$.
\begin{lemma}\label{lema2b}
Let $T$ be a tree and $\tau$ be a generalized threshold function in $T$.
Let $S_0\supseteq V_{\sf f}$.
For every vertex $v\in H(S_0)$, there exists a target set $S_v\supseteq S_0$ such that $t_\tau(v,S_v)=t_\tau(v,S_0)$.
\end{lemma}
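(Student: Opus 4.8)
The plan is to find, among all target sets $S_v\supseteq S_0$, one that does not \emph{speed up} $v$. Write $t=t_\tau(v,S_0)$. By monotonicity of the interval operator (if $A\subseteq B$ then $I(A)\subseteq I(B)$, hence $I^i(A)\subseteq I^i(B)$ for all $i$), every extension $S_v\supseteq S_0$ satisfies $t_\tau(v,S_v)\le t$; so the whole difficulty is to build a target-set extension of $S_0$ under which $v$ is \emph{not} activated before time $t$. When $t\le 1$ this is immediate: take $S_v=S_0\cup(V(T)\setminus H(S_0))$, which contains $S_0$ and is a target set since it contains every non-activated vertex. If $t=0$ then $v\in S_0\subseteq S_v$, so $t_\tau(v,S_v)=0$; if $t=1$ then $v\notin S_v$ (because $v\in H(S_0)\setminus S_0$) and a non-seed vertex can never be activated before time $1$, whence $t_\tau(v,S_v)=1$. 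So assume $t\ge 2$ and proceed by induction on $|V(T)|$, rooting $T$ at $v$.

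For each child $c$ of $v$ let $T_c$ be the subtree rooted at $c$, so that $V(T)=\{v\}\uplus\biguplus_c V(T_c)$. Since $vc$ is the only edge joining $T_c$ to the rest of $T$ and $v\notin I^{t-1}(S_0)$, the activation inside $T_c$ up to time $t-1$ does not see $v$; formally $I^i(S_0)\cap V(T_c)=I^i_{T_c}(S_0\cap V(T_c))$ for all $i\le t-1$, where $I_{T_c}$ is the interval operator of $T_c$. Let $s_c$ be the activation time of $c$ from $S_0\cap V(T_c)$ inside $T_c$, and split the children into \emph{early} ($s_c\le t-2$), \emph{critical} ($s_c=t-1$) and \emph{late} ($s_c\ge t$, possibly $\infty$). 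Because $v\notin I^{t-1}(S_0)$, vertex $v$ has fewer than $\tau(v)$ neighbours in $I^{t-2}(S_0)$, i.e.\ there are at most $\tau(v)-1$ early children. I then set $S_v=\bigcup_c S^c$ with $v\notin S_v$ and each $S^c\subseteq V(T_c)$, $S^c\supseteq S_0\cap V(T_c)$, chosen as follows. For an early child, let $S^c$ be any target set of $T_c$ extending $S_0\cap V(T_c)$ (for instance, adding all of $V(T_c)\setminus H_{T_c}(S_0\cap V(T_c))$); monotonicity keeps $c$ activated by time $t-2$. For a critical child, apply the induction hypothesis to $(T_c,\tau|_{T_c},S_0\cap V(T_c),c)$ — valid because $S_0\cap V(T_c)\supseteq V_{\sf f}^{T_c}$ (the forced vertices of $T_c$ are forced in $T$, hence in $S_0$, and $c$ is not forced in $T_c$ as it is activated there) and $c\in H_{T_c}(S_0\cap V(T_c))$ — obtaining a target set $S^c$ of $T_c$ that activates $c$ at time exactly $t-1$.

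The delicate case, and the step I expect to be the main obstacle, is that of the \emph{late} children. Here I may put neither $c$ nor any vertex that would activate $c$ early into the target set, yet $T_c$ must still be fully activated. The idea is to let $v$ do the work: once $v$ becomes active at time $t$ it supplies $c$ with an extra active neighbour, after which a downward activation wave can sweep through $T_c$. Concretely, I construct $S^c\supseteq S_0\cap V(T_c)$ that activates nothing of $T_c$ up to time $t-2$ as far as $c$ is concerned (so that $c\notin I^{t-2}_{T_c}(S^c)$), but such that $\{v\}\cup S^c$ activates all of $V(T_c)$ in the tree $T[v,c]$. Such a seed set — one that percolates through the whole subtree only with the help of the single external vertex $v$, while activating nothing prematurely on its own — is precisely the phenomenon isolated in \cref{lema1}; the technical content is to produce such a set that in addition contains the prescribed vertices $S_0\cap V(T_c)$ and respects the generalized thresholds (in particular the forced vertices $V_{\sf f}^{T_c}\subseteq S_0\cap V(T_c)$), i.e.\ a suitable generalization of \cref{lema1}. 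That both requirements can be met simultaneously is, I expect, the crux of the argument; note that a naive minimal completion of $S_0$ to a target set does \emph{not} suffice, since such a minimal set may activate $c$ early and thereby speed up $v$.

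It then remains to verify the three required properties of $S_v=\bigcup_c S^c$. First, $S_v\supseteq\bigcup_c(S_0\cap V(T_c))=S_0$ because $v\notin S_0$. Second, $S_v$ is a target set: the early and critical subtrees are fully activated autonomously, $v$ is activated (shown next), and the late subtrees are then activated by the downward wave from $v$. Finally, for the activation time, the bound $t_\tau(v,S_v)\le t$ holds by monotonicity, as $I^{t-1}(S_v)\supseteq I^{t-1}(S_0)$ still contains at least $\tau(v)$ children of $v$, forcing $v\in I^t(S_v)$. For the reverse inequality, write $t'=t_\tau(v,S_v)$ and suppose $t'\le t-1$; then $v\in I^{t'}(S_v)$ requires at least $\tau(v)$ children in $I^{t'-1}(S_v)$, and since $t'-1\le t-2$ the autonomy identity applies, so these children satisfy $t_{\tau}^{T_c}(c,S^c)\le t-2$. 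By construction only early children have $S^c$-time at most $t-2$ (critical children have time exactly $t-1$, and late children have time at least $t-1$ by the property $c\notin I^{t-2}_{T_c}(S^c)$), giving at least $\tau(v)$ early children and contradicting the bound $\tau(v)-1$. Hence $t'=t$, and $t_\tau(v,S_v)=t=t_\tau(v,S_0)$, as required.
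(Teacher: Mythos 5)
Your overall architecture (root $T$ at $v$, classify children as early/critical/late, and close with the counting argument showing at most $\tau(v)-1$ early children) is sound, and the early and critical cases are handled correctly. But the proof has a genuine gap exactly where you flag it: the late-child construction is asserted, not carried out, and it is not ``a suitable generalization of \cref{lema1}'' in any routine sense. For a late child $c$, the subtree $T_c$ may contain vertices of $S_0$ and, more troublesomely, a nonempty already-activated region $H_{T_c}(S_0\cap V(T_c))$ scattered through $T_c$ (lateness only constrains $c$ itself, not the rest of $T_c$). So the object you need is a completion of a \emph{prescribed partial seed} to a set that percolates through $T_c$ with one unit of external help from $v$, while not activating $c$ before time $t-1$ --- which is essentially the statement of \cref{lema2b} itself applied inside $T_c$ (plus the external-help twist). \cref{lema1}, by contrast, produces a set with $I(S)=S$ from scratch, with no prescribed vertices and with a genuine (degree-bounded) threshold function; neither hypothesis holds in your $T_c$. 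Writing ``I expect'' at the crux leaves the induction unclosed.

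The paper's proof shows how to discharge exactly this obligation, by decomposing differently: it grows the activated region $H(S_v)$ iteratively and, for each boundary vertex $u\in H(S_v)$ with a neighbour $w\notin H(S_v)$, takes the maximal pendant subtree $T_w$ hanging off $u$ that avoids $H(S_v)$. Since $S_0\subseteq H(S_v)$ and $V_{\sf f}\subseteq S_0$, that subtree contains no prescribed seeds and no forced vertices, and after subtracting the already-activated neighbours from the thresholds ($\tau_w(x)=\tau(x)-|N(x)\cap H(S_v)|$, with $\tau_w(u)=1$) one obtains a genuine threshold function on $T_w$ with $u$ as a leaf; \cref{lema1} then applies verbatim to give a set that activates nothing on its own but percolates once $u$ fires. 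Adding these sets cannot change $t_\tau(v,\cdot)$ precisely because they satisfy $I(S_w)=S_w$. If you want to salvage your child-by-child induction, you would need to run this boundary-growing procedure inside each late $T_c$ (or prove a strengthened, prescribed-seed version of \cref{lema1} by its own induction); as written, the key step is missing.
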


\begin{proof}
Initially let $S_v=S_0$. If $S_v$ is a target set, we are done.
Otherwise, \b{we iteratively apply the following procedure}:

($\maltese$) Let $T_v$ be the maximal subtree of $T$ containing $v$ and all vertices in $H(S_v)$. Let $u$ be any vertex of $T_v$ with a neighbor $w$ outside $T_v$. Let $T_w$ be the maximal subtree of $T[u,w]$ containing $u$ and $w$ with no vertex in $H(S_v)$ other than $u$. Also let $\tau_w$ be such that $\tau_w(u)=1$, $\tau_w(w)=\tau(w)-|N(w)\cap H(S_v)|+1$, and $\tau_w(x)=\tau(x)-|N(x)\cap H(S_v)|$ for every vertex $x$ of $T_w$, except $u$ and $w$. Notice that $\tau_w$ is a threshold function of $T_w$ and $u$ is a leaf of $T_w$. Then, applying \cref{lema1}, we have that there exists a set $S_w$ in $T_w$ that activates no vertex in $T$ and such that $S_w\cup\{u\}$ is a target set of $T_w$. Add $S_w$ to $S_v$. Notice that $t_\tau(v,S_v)=t_\tau(v,S_0)$.

Repeating ($\maltese$) until $H(S_v)=V(T)$, we obtain a target set $S_v$ such that $t_\tau(v,S_v)=t_\tau(v,S_0)$, and the lemma follows.
\end{proof}

We now explain how to compute $t_{\tau}(T)$ in time $\Ocal(n^2)$ \b{for a given pair $(T,\tau)$}.
For this, we define, for every vertex $v$ of $T$, $t_\tau(v)$ as the maximum $t_\tau(v,S_0)$ among all target sets $S_0$ of $T$.
Start by computing $H(V_{\sf f})$ and compute the time $t_\tau(v)=t_{\tau}(v,V_{\sf f})$ for every $v\in H(V_{\sf f})$. We define the \emph{beginning time} $b(v)$ of every vertex $v \in V(T) \setminus V_{\sf f}$ as the maximum $t_\tau(w)$ among all neighbors $w$ of $v$ in $H(V_{\sf f})$, if there is one.
Otherwise, let $b(v)=0$.

As before, we say that a vertex $v$ is \emph{saturated} if $\tau(v)=d(v)$; otherwise it is \emph{non-saturated}.
As before, a saturated vertex is activated if and only if it is in the target set or if all its neighbors are activated. In other words, a saturated vertex outside the target set cannot help to activate other vertices.

A \emph{non-saturated path} in the tree $T$ is a path such that all its vertices are non-saturated (including the endpoints) and are outside $H(V_{\sf f})$.
Here we allow paths with only one vertex (and no edge) from a vertex $v$ to itself.

\begin{lemma}\label{lem:tau-generalized}
Let $T$ be a tree, $\tau$ be a generalized threshold function in $T$, and $v\in V(T)$. Let $V_{\sf f}$ be the set of vertices $u$ of $T$ with $\tau(u)>d(u)$. If $v\in H(V_{\sf f})$, then $t_{\tau}(v)=t_{\tau}(v,V_{\sf f})$.
If $v$ is non-saturated outside $H(V_{\sf f})$, then $t_{\tau}(v)=\max\{|P|+b(u) : P\mbox{ is a non-saturated path with an endpoint in $v$, where $u$ is the other endpoint}\}$.
If $v$ is saturated outside $H(V_{\sf f})$, then $t_{\tau}(v)=1+\max\{t_{\tau}(u)\}$ among all non-saturated neighbors $u$ of $v$, if there is one; otherwise $t_{\tau}(v)=1$.
\end{lemma}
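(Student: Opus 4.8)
The plan is to prove each of the three formulas as a pair of matching bounds on $t_{\tau}(v)$, the maximum of $t_{\tau}(v,S_0)$ over target sets $S_0\supseteq V_{\sf f}$. Two facts will be used throughout. First, the activation process is \emph{monotone}: if $V_{\sf f}\subseteq S_0\subseteq S_0'$ then $I^i(S_0)\subseteq I^i(S_0')$ for all $i$, so $t_{\tau}(x,S_0')\le t_{\tau}(x,S_0)$ for every vertex $x$. Second, by \cref{lema2b}, every set $S_0\supseteq V_{\sf f}$ with $v\in H(S_0)$ extends to a target set $S_v\supseteq S_0$ with $t_{\tau}(v,S_v)=t_{\tau}(v,S_0)$; hence I may maximise $t_{\tau}(v,\cdot)$ over all sets $S_0\supseteq V_{\sf f}$ that activate $v$, rather than over target sets. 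Case~1 is then immediate: if $v\in H(V_{\sf f})$, monotonicity applied to $V_{\sf f}\subseteq S_0$ gives $t_{\tau}(v,S_0)\le t_{\tau}(v,V_{\sf f})$ for every target set $S_0$, while \cref{lema2b} applied to $S_0=V_{\sf f}$ yields a target set attaining $t_{\tau}(v,V_{\sf f})$; thus $t_{\tau}(v)=t_{\tau}(v,V_{\sf f})$.

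For the upper bound of Case~2 ($v$ non-saturated, $v\notin H(V_{\sf f})$), fix a target set $S_0\supseteq V_{\sf f}$ and write $t:=t_{\tau}(v,S_0)$. As in the proof of \cref{teo-trees1}, there is a path $(q_0,q_1,\dots,q_t=v)$ with $q_i$ activated at time exactly $i$, and every internal vertex of it is non-saturated (a saturated vertex outside the target set fires only once all of its neighbours are active, which is impossible at a time $i<t$ since $q_{i+1}$ is activated later). Let $j$ be the largest index with $q_j\in H(V_{\sf f})$; if no such index exists, discard $q_0$ and set $j=0$. Then $P=(q_{j+1},\dots,q_t)$ is a non-saturated path outside $H(V_{\sf f})$ with endpoints $u:=q_{j+1}$ and $v$ (both non-saturated, being either internal to the original path or equal to $v$), and $|P|=t-j$. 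If $q_j\in H(V_{\sf f})$, it is a neighbour of $u$, so Case~1 and monotonicity give $j=t_{\tau}(q_j,S_0)\le t_{\tau}(q_j,V_{\sf f})=t_{\tau}(q_j)\le b(u)$; otherwise $j=0\le b(u)$. In both cases $t=|P|+j\le|P|+b(u)$, which proves $t_{\tau}(v)\le\max\{|P|+b(u)\}$.

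For the lower bound of Case~2, take an optimal non-saturated path $P=(u=p_0,p_1,\dots,p_\ell=v)$ and construct a set $S_0\supseteq V_{\sf f}$ activating $v$ at time $b(u)+|P|=b(u)+\ell+1$; extending it to a target set via \cref{lema2b} then preserves this time. The construction makes the activation wave enter $P$ at $u$ at time $b(u)+1$ and crawl along $P$ at unit speed. Since $u$ is non-saturated, $d(u)>\tau(u)$, so $u$ has at least $\tau(u)$ neighbours other than $p_1$; when $b(u)>0$ I use a neighbour $w\in H(V_{\sf f})$ realising $b(u)$ together with $\tau(u)-1$ further such neighbours, and when $b(u)=0$ I use $\tau(u)$ of them. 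Placing these neighbours in $S_0$ and attaching to each an \emph{inert} subtree (a subset $S$ with $I(S)=S$, supplied by \cref{lema1}) leaves $u$ with exactly $\tau(u)-1$ active neighbours up to time $b(u)$ and the $\tau(u)$-th, namely $w$, at time $b(u)$, so $u$ fires precisely at time $b(u)+1$ without the auxiliary vertices triggering anything prematurely or accelerating $w$. Propagating along $P$ exactly as in \cref{lema1b}, applied to the subtree hanging off $u$ toward $v$, then activates $p_i$ at time $b(u)+1+i$, and in particular $v$ at time $b(u)+|P|$; with the upper bound this gives equality.

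Finally, for Case~3 ($v$ saturated, $v\notin H(V_{\sf f})$) I use that a saturated vertex outside the target set is activated strictly after each of its neighbours. Consequently two adjacent saturated vertices cannot both lie outside the target set of an activation that reaches them, so if $v\notin S_0$ is activated then every saturated neighbour of $v$ lies in $S_0$ and is active at time $0$. The last neighbour of $v$ to be activated is therefore either in $S_0$ (giving $t_{\tau}(v,S_0)=1$) or non-saturated (giving $t_{\tau}(v,S_0)=1+t_{\tau}(w,S_0)\le 1+t_{\tau}(w)$), which yields the stated bound, equal to $1$ when $v$ has no non-saturated neighbour. For the matching lower bound I choose a non-saturated neighbour $u^\ast$ maximising $t_{\tau}(u^\ast)$ and realise $t_{\tau}(u^\ast)$ by a target set; its activating path avoids $v$, since $v$ is saturated and hence cannot be an internal vertex of a non-saturated activating path, so $u^\ast$ is activated entirely from the subtree on its side of $v$. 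As the remaining neighbours of $v$ lie in subtrees meeting that one only at $v$, I add those subtrees (made inert by \cref{lema1}) to the target set so that every neighbour of $v$ other than $u^\ast$ is active at time $0$; then $v$ fires exactly at time $1+t_{\tau}(u^\ast)$. The main obstacle is precisely these lower-bound constructions, where one must simultaneously drive a single wave at unit speed, fire the anchor vertex neither too early nor too late, and keep all auxiliary target-set vertices from short-circuiting the path or speeding up $H(V_{\sf f})$; the inert-subtree device of \cref{lema1} together with the bookkeeping of \cref{lema1b} and \cref{lema2b} is what makes this possible, whereas the two upper bounds are routine path-tracing arguments.
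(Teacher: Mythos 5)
Your overall strategy coincides with the paper's: the same three-case split, the same upper bounds obtained by tracing an activation path $(q_0,\dots,q_t=v)$ whose internal vertices are non-saturated and cutting it at the last vertex lying in $H(V_{\sf f})$, and the same style of lower-bound construction that drives a unit-speed wave along the chosen non-saturated path and invokes \cref{lema2b} to complete the partial set to a target set. Case~1, Case~3 and the upper bound of Case~2 are in order (taking the \emph{largest} index $j$ with $q_j\in H(V_{\sf f})$ is in fact a slightly cleaner way to extract the non-saturated subpath than the paper's prefix assumption).

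There is, however, a genuine gap in the Case~2 lower bound, at the anchor vertex $u$. You keep $w$ (the neighbour of $u$ in $H(V_{\sf f})$ realising $b(u)$) out of $S_0$ so that it arrives at time $b(u)$, place $\tau(u)-1$ \emph{further} neighbours of $u$ into $S_0$, and claim this ``leaves $u$ with exactly $\tau(u)-1$ active neighbours up to time $b(u)$''. This ignores the \emph{other} neighbours of $u$ in $H(V_{\sf f})$: they are activated by $V_{\sf f}\subseteq S_0$ at times $t_\tau(\cdot)<b(u)$ whether or not you put them in $S_0$. For instance, if $\tau(u)=3$ and $u$ has neighbours $w_1,w\in H(V_{\sf f})$ with $t_\tau(w_1)=1$ and $t_\tau(w)=b(u)=5$, and your two placed neighbours are chosen outside $H(V_{\sf f})$, then $u$ already has $3$ active neighbours at time $1$ and fires at time $2$, not at $b(u)+1$. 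The paper's construction avoids this by adding exactly $\tau(u)-|N(u)\cap H(V_{\sf f})|$ neighbours \emph{outside} $H(V_{\sf f})$, so that the active count reaches $\tau(u)$ precisely when the last $H(V_{\sf f})$-neighbour arrives at time $b(u)$. Your version is repairable, since $u\notin H(V_{\sf f})$ forces $|N(u)\cap H(V_{\sf f})|\le\tau(u)-1$ and you may insist that the $\tau(u)-1$ placed neighbours contain all of $N(u)\cap H(V_{\sf f})\setminus\{w\}$ --- but this has to be said, and the same accounting must be repeated at every subsequent vertex $p_i$ of $P$ (each of which may have its own neighbours in $H(V_{\sf f})$), which is exactly what your appeal to ``propagating as in \cref{lema1b}'' skips: \cref{lema1b} is stated for genuine threshold functions and knows nothing about $H(V_{\sf f})$ or vertices with $\tau$ exceeding the degree.
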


\begin{proof}
Let $v$ be a vertex of $T$. Suppose first that $v\in H(V_{\sf f})$.
From \cref{lema2b}, there exists a target set $S_v$ such that $t_\tau(v,S_v)=t_\tau(v,V_{\sf f})$. Therefore, $t_{\tau}(v)\geq t_{\tau}(v,V_{\sf f})$. Moreover, since the vertices of $V_{\sf f}$ must be in the target set $S'_v$ with $t_{\tau}(v,S'_v)=t_{\tau}(v)$, then $t_{\tau}(v)= t_{\tau}(v,S'_v)\leq t_{\tau}(v,V_{\sf f})$, and we are done.

Now suppose that $v\not\in H(V_{\sf f})$ and $v$ is saturated.
If all neighbors of $v$ are saturated, then $t_{\tau}(v)=1$, since all its neighbors must be in the target set (otherwise $v$ must be in the target set) and $V(T)\setminus\{v\}$ is a target set activating $v$ at time 1.
Moreover, if $v$ is saturated and has at least one non-saturated neighbor, then $t_{\tau}(v)=1+\max\{t_{\tau}(u) : \mbox{ $u$ is a non-saturated neighbor of $v$}\}$, since $v$ cannot be activated before its neighbors (unless it is in the target set).

Finally, consider a non-saturated vertex $v$. Consider a target set $S_0$ which activates $v$ at time $t$. We want to show that there exists a path $P$ of non-saturated vertices such that $t=b(u)+|P|$, where $u$ and $v$ are the endpoints of $P$. First notice that there exists a path $P'=(u_0,u_1,u_2,\ldots,u_{t-1},v)$ in the tree $T$ whose vertices are activated by $S_0$ at times $0,1,2,\ldots,t-1,t$, respectively.
Since vertices with $\tau(v)\geq d(v)$ cannot help to activate other vertices at time greater than 1, all vertices in the path $P'$, except $u_0$, are non-saturated.
We may assume that there exists $0\leq k<t$ such that $u_1,\ldots,u_k\in H(V_{\sf f})$ and $u_{k+1},\ldots,u_{t-1}\not\in H(V_{\sf f})$. This is because every vertex of $H(V_{\sf f})$ with activation time $k+1$ was activated by a vertex of $H(V_{\sf f})$ with activation time $k$.
Therefore $b(u_{k+1})\geq t_{\tau}(u_k)=k$. The subpath $P=(u_{k+1},\ldots,u_{t-1},v)$ of $P'$ is a non-saturated path and has size $|P|=t-k$. Then the activation time of $v$ in the process of the target set $S_0$ is equal to $t=k+(t-k)=b(u)+|P|$, where $u=u_{k+1}$ is the endpoint of $P$ distinct from $v$.

Now consider a non-saturated path $P=(u_1,\ldots,u_{\ell-1},v)$ with size $|P|=\ell$.
Recall that, by definition, $P$ has no vertex in $H(V_{\sf f})$.
We want to show that there exists a target set $S_0$ which activates $v$ at time $b(u_1)+\ell$. Initially let $S_0=V_{\sf f}$.
Since $u_1$ is non-saturated, we can add to $S_0$ $\tau(u_1)-|N(u_1)\cap H(S_0)|$ neighbors of $u_1$ outside $H(S_0)$ distinct from $u_2$. With this, $S_0$ activates $u_1$ at time $b(u_1)+1$. Again, since $u_2$ is non-saturated, we can add to $S_0$ $\tau(u_2)-|N(u_2)\cap H(S_0)|$ neighbors of $u_2$ outside $H(S_0)$ distinct from $u_3$. With this, $S_0$ activates $u_2$ at time $b(u_1)+2$. Following these arguments, we obtain a set $S_0$ which activates $v$ at time $b(u_1)+\ell$. From \cref{lema2b}, there exists a target set $S_v\supset S_0$ which activates $u_\ell$ at time $b(u_1)+\ell$, and we are done.
\end{proof}

\begin{figure}[ht]
\centering
\begin{tikzpicture}[scale=1.15]
\tikzstyle{vertex1}=[draw,circle,fill=blue!30,minimum size=15pt,inner sep=2pt]
\tikzstyle{vertex2}=[draw,circle,fill=black!30,minimum size=15pt,inner sep=2pt]
\tikzstyle{vertex3}=[draw,circle,fill=green!30,minimum size=15pt,inner sep=2pt]
\tikzstyle{vertex4}=[draw,circle,fill=gray!30,minimum size=15pt,inner sep=2pt]
\tikzstyle{vertex5}=[draw,circle,fill=white!50,minimum size=15pt,inner sep=2pt]
\tikzstyle{vertex6}=[draw,circle,fill=orange!50,minimum size=15pt,inner sep=2pt]

\node at (0,3.4)   {\textcolor{red}{2}};
\node at (1,3.4)   {\textcolor{red}{2}};
\node at (2,3.4)   {\textcolor{red}{1}};
\node at (3,3.4)   {\textcolor{red}{0}};
\node at (4,3.4)   {\textcolor{red}{0}};
\node at (5,3.4)   {\textcolor{red}{2}};
\node at (6,3.4)   {\textcolor{red}{2}};
\node at (7,3.4)   {\textcolor{red}{2}};
\node at (8,3.4)   {\textcolor{red}{0}};
\node at (9,3.4)   {\textcolor{red}{2}};
\node at (10,3.4)  {\textcolor{red}{1}};
\node at (10.3,2.3){\textcolor{red}{0}};
\node at (0.3,2.3) {\textcolor{red}{2}};
\node at (1.3,2.3) {\textcolor{red}{2}};
\node at (2.3,2.3) {\textcolor{red}{2}};
\node at (3.3,2.3) {\textcolor{red}{2}};
\node at (4.3,2.3) {\textcolor{red}{2}};
\node at (5.3,2.3) {\textcolor{red}{2}};
\node at (6.3,2.3) {\textcolor{red}{2}};
\node at (7.3,2.3) {\textcolor{red}{2}};
\node at (8.0,2.4) {\textcolor{red}{2}};
\node at (9.3,2.3) {\textcolor{red}{2}};
\node at (0.3,1.3) {\textcolor{red}{2}};
\node at (1.3,1.3) {\textcolor{red}{2}};
\node at (2.3,1.3) {\textcolor{red}{2}};
\node at (3.3,1.3) {\textcolor{red}{2}};
\node at (4.3,1.3) {\textcolor{red}{2}};
\node at (5.3,1.3) {\textcolor{red}{2}};
\node at (6.3,1.3) {\textcolor{red}{2}};
\node at (7.3,1.3) {\textcolor{red}{2}};
\node at (8.3,1.3) {\textcolor{red}{2}};
\node at (9.3,1.3) {\textcolor{red}{2}};
\node at (10.3,1.3){\textcolor{red}{1}};
\node at (0.0,-.4) {\textcolor{red}{1}};
\node at (1.0,-.4) {\textcolor{red}{2}};
\node at (2.0,-.4) {\textcolor{red}{2}};
\node at (3.0,-.4) {\textcolor{red}{0}};
\node at (4.0,-.4) {\textcolor{red}{0}};
\node at (5.0,-.4) {\textcolor{red}{0}};
\node at (6.0,-.4) {\textcolor{red}{2}};
\node at (7.0,-.4) {\textcolor{red}{2}};
\node at (8.0,-.4) {\textcolor{red}{1}};
\node at (9.0,-.4) {\textcolor{red}{3}};
\node at (10 ,-.4) {\textcolor{red}{2}};

\node[vertex2] (a) at (8,1) {$0$};
\node[vertex1] (b) at (7,1) {$1$}; \path[-,thick] (a) edge (b);
\node[vertex2] (c) at (7,0) {$0$}; \path[-,thick] (c) edge (b);
\node[vertex1] (a) at (6,1) {$2$}; \path[-,thick] (a) edge (b);
\node[vertex2] (c) at (6,0) {$0$}; \path[-,thick] (c) edge (a);
\node[vertex1] (b) at (5,1) {$3$}; \path[-,thick] (a) edge (b);
\node[vertex1] (c) at (5,0) {$1$}; \path[-,thick] (c) edge (b);
\node[vertex1] (a) at (4,1) {$4$}; \path[-,thick] (a) edge (b);
\node[vertex1] (c) at (4,0) {$1$}; \path[-,thick] (c) edge (a);
\node[vertex1] (b) at (3,1) {$5$}; \path[-,thick] (a) edge (b);
\node[vertex1] (c) at (3,0) {$1$}; \path[-,thick] (c) edge (b);
\node[vertex1] (a) at (2,1) {$6$}; \path[-,thick] (a) edge (b);
\node[vertex2] (c) at (2,0) {$0$}; \path[-,thick] (c) edge (a);
\node[vertex1] (b) at (1,1) {$7$}; \path[-,thick] (a) edge (b);
\node[vertex2] (c) at (1,0) {$0$}; \path[-,thick] (c) edge (b);
\node[vertex3] (a) at (1,2) {$8$}; \path[-,thick] (a) edge (b);
\node[vertex4] (d) at (1,3) {$0$}; \path[-,thick] (d) edge (a);
\node[vertex3] (b) at (2,2) {$9$}; \path[-,thick] (a) edge (b);
\node[vertex4] (c) at (2,3) {$0$}; \path[-,thick] (c) edge (b);
\node[vertex3] (a) at (3,2) {$10$}; \path[-,thick] (a) edge (b);
\node[vertex1] (c) at (3,3) {$1$}; \path[-,thick] (c) edge (a);
\node[vertex3] (b) at (4,2) {$11$}; \path[-,thick] (a) edge (b);
\node[vertex1] (c) at (4,3) {$1$}; \path[-,thick] (c) edge (b);
\node[vertex3] (a) at (5,2) {$12$}; \path[-,thick] (a) edge (b);
\node[vertex2] (c) at (5,3) {$0$}; \path[-,thick] (c) edge (a);
\node[vertex3] (b) at (6,2) {$13$}; \path[-,thick] (a) edge (b);
\node[vertex2] (c) at (6,3) {$0$}; \path[-,thick] (c) edge (b);
\node[vertex3] (a) at (7,2) {$14$}; \path[-,thick] (a) edge (b);
\node[vertex2] (c) at (7,3) {$0$}; \path[-,thick] (c) edge (a);
\node[vertex3] (b) at (8,2) {$15$}; \path[-,thick] (a) edge (b);
\node[vertex1] (e) at (9,3) {$3$}; \path[-,thick] (e) edge (b);
\node[vertex6] (a) at (9,2) {$16$}; \path[-,thick] (a) edge (b);
\node[vertex4] (c) at (9,1) {$0$}; \path[-,thick] (c) edge (a);

\node[vertex5] (a) at (9,0) {$1$}; \path[-,thick] (c) edge (a);
\node[vertex4] (c) at (8,0) {$0$}; \path[-,thick] (c) edge (a);
\node[vertex4] (c) at (10,0){$0$}; \path[-,thick] (c) edge (a);
\node[vertex5] (a) at (10,1){$1$}; \path[-,thick] (c) edge (a);

\node[vertex5] (a) at (0,3) {$1$}; \path[-,thick] (d) edge (a);
\node[vertex4] (d) at (0,2) {$0$}; \path[-,thick] (d) edge (a);
\node[vertex5] (a) at (0,1) {$1$}; \path[-,thick] (d) edge (a);
\node[vertex4] (d) at (0,0) {$0$}; \path[-,thick] (d) edge (a);

\node[vertex1] (a) at (8,3) {$1$}; \path[-,thick] (e) edge (a);
\node[vertex1] (a) at (10,3){$2$}; \path[-,thick] (e) edge (a);
\node[vertex1] (e) at (10,2){$1$}; \path[-,thick] (e) edge (a);

\end{tikzpicture}

\caption{A tree $G$ with maximum activation time 16. The thresholds are in red, $V_{\sf f}$ is in \b{dark gray}, and $H(V_{\sf f})\setminus S_0$ is in blue. A maximum path of non-saturated vertices is in  green. The vertex with maximum activation time 16 is in orange (notice that it is saturated). The numbers inside the vertices are their activation times. The target set $S_0$ with maximum activation time is in \b{dark gray and light gray} (vertices with activation time 0).}
\label{fig:Fig9}
\end{figure} 

\begin{figure}[ht]
\centering
\begin{tikzpicture}[scale=1.15]
\tikzstyle{vertex1}=[draw,circle,fill=blue!30,minimum size=15pt,inner sep=2pt]
\tikzstyle{vertex2}=[draw,circle,fill=black!30,minimum size=15pt,inner sep=2pt]
\tikzstyle{vertex3}=[draw,circle,fill=green!30,minimum size=15pt,inner sep=2pt]
\tikzstyle{vertex4}=[draw,circle,fill=gray!30,minimum size=15pt,inner sep=2pt]
\tikzstyle{vertex5}=[draw,circle,fill=white!50,minimum size=15pt,inner sep=2pt]
\tikzstyle{vertex6}=[draw,circle,fill=orange!50,minimum size=15pt,inner sep=2pt]

\node at (0,3.4)   {\textcolor{red}{2}};
\node at (1,3.4)   {\textcolor{red}{2}};
\node at (2,3.4)   {\textcolor{red}{1}};
\node at (3,3.4)   {\textcolor{red}{0}};
\node at (4,3.4)   {\textcolor{red}{0}};
\node at (5,3.4)   {\textcolor{red}{2}};
\node at (6,3.4)   {\textcolor{red}{2}};
\node at (7,3.4)   {\textcolor{red}{2}};
\node at (8,3.4)   {\textcolor{red}{0}};
\node at (9,3.4)   {\textcolor{red}{2}};
\node at (10,3.4)  {\textcolor{red}{1}};
\node at (10.3,2.3){\textcolor{red}{0}};
\node at (0.3,2.3) {\textcolor{red}{2}};
\node at (1.3,2.3) {\textcolor{red}{2}};
\node at (2.3,2.3) {\textcolor{red}{2}};
\node at (3.3,2.3) {\textcolor{red}{2}};
\node at (4.3,2.3) {\textcolor{red}{2}};
\node at (5.3,2.3) {\textcolor{red}{2}};
\node at (6.3,2.3) {\textcolor{red}{2}};
\node at (7.3,2.3) {\textcolor{red}{2}};
\node at (8.0,2.4) {\textcolor{red}{2}};
\node at (9.3,2.3) {\textcolor{red}{2}};
\node at (0.3,1.3) {\textcolor{red}{2}};
\node at (1.3,1.3) {\textcolor{red}{2}};
\node at (2.3,1.3) {\textcolor{red}{2}};
\node at (3.3,1.3) {\textcolor{red}{2}};
\node at (4.3,1.3) {\textcolor{red}{2}};
\node at (5.3,1.3) {\textcolor{red}{2}};
\node at (6.3,1.3) {\textcolor{red}{2}};
\node at (7.3,1.3) {\textcolor{red}{2}};
\node at (8.3,1.3) {\textcolor{red}{2}};
\node at (9.3,1.3) {\textcolor{red}{2}};
\node at (10.3,1.3){\textcolor{red}{1}};
\node at (0.0,-.4) {\textcolor{red}{1}};
\node at (1.0,-.4) {\textcolor{red}{2}};
\node at (2.0,-.4) {\textcolor{red}{2}};
\node at (3.0,-.4) {\textcolor{red}{0}};
\node at (4.0,-.4) {\textcolor{red}{0}};
\node at (5.0,-.4) {\textcolor{red}{0}};
\node at (6.0,-.4) {\textcolor{red}{2}};
\node at (7.0,-.4) {\textcolor{red}{2}};
\node at (8.0,-.4) {\textcolor{red}{1}};
\node at (9.0,-.4) {\textcolor{red}{3}};
\node at (10 ,-.4) {\textcolor{red}{2}};

\node[vertex2] (a) at (8,1) {$0$};
\node[vertex1] (b) at (7,1) {$1$}; \path[-,thick] (a) edge (b);
\node[vertex2] (c) at (7,0) {$0$}; \path[-,thick] (c) edge (b);
\node[vertex1] (a) at (6,1) {$2$}; \path[-,thick] (a) edge (b);
\node[vertex2] (c) at (6,0) {$0$}; \path[-,thick] (c) edge (a);
\node[vertex1] (b) at (5,1) {$3$}; \path[-,thick] (a) edge (b);
\node[vertex1] (c) at (5,0) {$1$}; \path[-,thick] (c) edge (b);
\node[vertex1] (a) at (4,1) {$4$}; \path[-,thick] (a) edge (b);
\node[vertex1] (c) at (4,0) {$1$}; \path[-,thick] (c) edge (a);
\node[vertex1] (b) at (3,1) {$5$}; \path[-,thick] (a) edge (b);
\node[vertex1] (c) at (3,0) {$1$}; \path[-,thick] (c) edge (b);
\node[vertex1] (a) at (2,1) {$6$}; \path[-,thick] (a) edge (b);
\node[vertex2] (c) at (2,0) {$0$}; \path[-,thick] (c) edge (a);
\node[vertex1] (b) at (1,1) {$7$}; \path[-,thick] (a) edge (b);
\node[vertex2] (c) at (1,0) {$0$}; \path[-,thick] (c) edge (b);
\node[vertex3] (a) at (1,2) {$11$}; \path[-,thick] (a) edge (b);
\node[vertex6] (d) at (1,3) {$12$}; \path[-,thick] (d) edge (a);
\node[vertex3] (b) at (2,2) {$10$}; \path[-,thick] (a) edge (b);
\node[vertex4] (c) at (2,3) {$0$}; \path[-,thick] (c) edge (b);
\node[vertex3] (a) at (3,2) {$9$}; \path[-,thick] (a) edge (b);
\node[vertex1] (c) at (3,3) {$1$}; \path[-,thick] (c) edge (a);
\node[vertex3] (b) at (4,2) {$8$}; \path[-,thick] (a) edge (b);
\node[vertex1] (c) at (4,3) {$1$}; \path[-,thick] (c) edge (b);
\node[vertex3] (a) at (5,2) {$7$}; \path[-,thick] (a) edge (b);
\node[vertex2] (c) at (5,3) {$0$}; \path[-,thick] (c) edge (a);
\node[vertex3] (b) at (6,2) {$6$}; \path[-,thick] (a) edge (b);
\node[vertex2] (c) at (6,3) {$0$}; \path[-,thick] (c) edge (b);
\node[vertex3] (a) at (7,2) {$5$}; \path[-,thick] (a) edge (b);
\node[vertex2] (c) at (7,3) {$0$}; \path[-,thick] (c) edge (a);
\node[vertex3] (b) at (8,2) {$4$}; \path[-,thick] (a) edge (b);
\node[vertex1] (e) at (9,3) {$3$}; \path[-,thick] (e) edge (b);
\node[vertex4] (a) at (9,2) {$0$}; \path[-,thick] (a) edge (b);
\node[vertex4] (c) at (9,1) {$0$}; \path[-,thick] (c) edge (a);

\node[vertex5] (a) at (9,0) {$1$}; \path[-,thick] (c) edge (a);
\node[vertex4] (c) at (8,0) {$0$}; \path[-,thick] (c) edge (a);
\node[vertex4] (c) at (10,0){$0$}; \path[-,thick] (c) edge (a);
\node[vertex5] (a) at (10,1){$1$}; \path[-,thick] (c) edge (a);

\node[vertex4] (a) at (0,3) {$0$}; \path[-,thick] (d) edge (a);
\node[vertex4] (d) at (0,2) {$0$}; \path[-,thick] (d) edge (a);
\node[vertex5] (a) at (0,1) {$1$}; \path[-,thick] (d) edge (a);
\node[vertex4] (d) at (0,0) {$0$}; \path[-,thick] (d) edge (a);

\node[vertex1] (a) at (8,3) {$1$}; \path[-,thick] (e) edge (a);
\node[vertex1] (a) at (10,3){$2$}; \path[-,thick] (e) edge (a);
\node[vertex1] (e) at (10,2){$1$}; \path[-,thick] (e) edge (a);

\end{tikzpicture}

\caption{Another target set of the same tree of \cref{fig:Fig9} (with time 12). The thresholds are in red, $V_{\sf f}$ is in dark gray, and $H(V_{\sf f})\setminus S_0$ is in blue. A maximum path of non-saturated vertices is in green. The vertex with time 12 is in orange (notice that it is saturated). The numbers inside the vertices are their activation times. The target set $S_0$ is in \b{dark gray and light gray} (vertices with time 0).}
\label{fig:Fig10}
\end{figure}
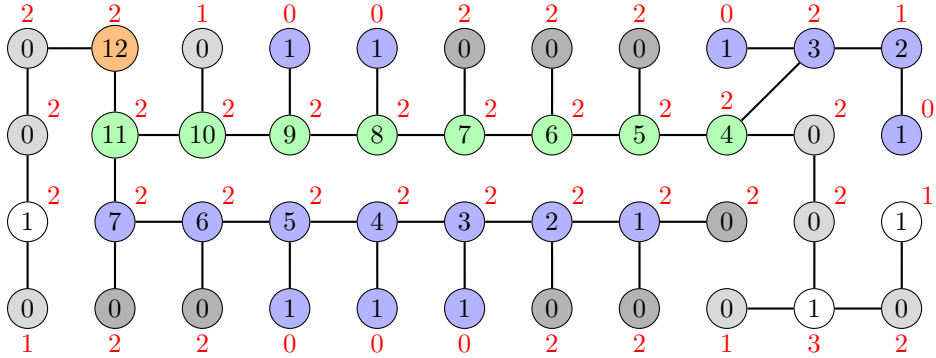

\cref{fig:Fig9} and \cref{fig:Fig10} show an example for the same tree $T$. In both figures, the \b{dark gray} vertices are the vertices of $V_{\sf f}$ (that is, vertices $v$ with $\tau(v)>d(v)$) and the blue vertices are the vertices in $H(V_{\sf f})\setminus V_{\sf f}$. Let $u$ and $v$ be the vertices with labels 8 and 15 in \cref{fig:Fig9}, respectively. Notice that all 8 vertices in the path between $u$ and $v$ (green in both figures) are non-saturated.  In both figures, The numbers inside the dark gray or blue vertices are the values of $t_{\tau}(w)$ of the vertices in $H(V_{\sf f})$. In this example, we have that $b(u)=7$ and $b(v)=3$. Moreover, $t_{\tau}(u)=3+8=11$ and $t_{\tau}(v)=7+8=15$.
The maximum times 16 and 12 in \cref{fig:Fig9} and \cref{fig:Fig10}, respectively, are achieved at saturated vertices. The maximum time $t_{\tau}(T)$ is 16, obtained by the target set of \cref{fig:Fig9}.  \cref{fig:Fig11} shows an example where the maximum time is achieved at a vertex of $H(V_{\sf f})$.

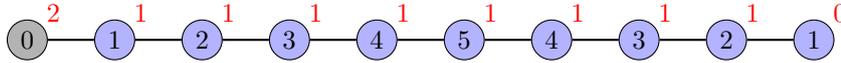
\begin{figure}[ht]
\centering
\begin{tikzpicture}[scale=1.15]
\tikzstyle{vertex1}=[draw,circle,fill=blue!30,minimum size=15pt,inner sep=2pt]
\tikzstyle{vertex2}=[draw,circle,fill=black!30,minimum size=15pt,inner sep=2pt]

\node at (0.3,1.3) {\textcolor{red}{2}};
\node at (1.3,1.3) {\textcolor{red}{1}};
\node at (2.3,1.3) {\textcolor{red}{1}};
\node at (3.3,1.3) {\textcolor{red}{1}};
\node at (4.3,1.3) {\textcolor{red}{1}};
\node at (5.3,1.3) {\textcolor{red}{1}};
\node at (6.3,1.3) {\textcolor{red}{1}};
\node at (7.3,1.3) {\textcolor{red}{1}};
\node at (8.3,1.3) {\textcolor{red}{1}};
\node at (9.3,1.3) {\textcolor{red}{0}};

\node[vertex2] (a) at (0,1) {$0$};
\node[vertex1] (b) at (1,1) {$1$}; \path[-,thick] (a) edge (b);
\node[vertex1] (a) at (2,1) {$2$}; \path[-,thick] (a) edge (b);
\node[vertex1] (b) at (3,1) {$3$}; \path[-,thick] (a) edge (b);
\node[vertex1] (a) at (4,1) {$4$}; \path[-,thick] (a) edge (b);
\node[vertex1] (b) at (5,1) {$5$}; \path[-,thick] (a) edge (b);
\node[vertex1] (a) at (6,1) {$4$}; \path[-,thick] (a) edge (b);
\node[vertex1] (b) at (7,1) {$3$}; \path[-,thick] (a) edge (b);
\node[vertex1] (a) at (8,1) {$2$}; \path[-,thick] (a) edge (b);
\node[vertex1] (b) at (9,1) {$1$}; \path[-,thick] (a) edge (b);
\end{tikzpicture}

\caption{Example with maximum activation time $t(G)=5$. All vertices belong to $H(B_0)$. The target set is $B_0$ (only the vertex in dark gray). The thresholds are in red. The numbers inside the vertices are their activation times.
}\label{fig:Fig11}
\end{figure}

\begin{theorem}\label{teo-trees2}
Let $T$ be a tree and $\tau$ be a generalized threshold function in $T$.
Then, $t_{\tau}(T)=\max\{t_{\tau}(v) \mid v\in V(T)\}$.
Consequently, \pname{GTSS-time} is $\Ocal(n^2)$-time solvable \b{in trees}.
\end{theorem}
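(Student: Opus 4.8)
The plan is to deduce the identity $t_{\tau}(T)=\max\{t_{\tau}(v)\mid v\in V(T)\}$ directly from the definitions, and then to read off an $\Ocal(n^2)$ algorithm from the closed-form expressions for $t_{\tau}(v)$ furnished by \cref{lem:tau-generalized}. For the identity, recall that $t_{\tau}(T)=\max_{S_0}t_{\tau}(S_0)$ with $S_0$ ranging over the target sets of $T$, that $t_{\tau}(S_0)=\max_{v\in V(T)}t_{\tau}(v,S_0)$, and that $t_{\tau}(v)=\max_{S_0}t_{\tau}(v,S_0)$ with $S_0$ again ranging over target sets. Since the family of target sets is the same in both maximizations, and is non-empty because $V(T)\supseteq V_{\sf f}$ is always a target set, the claim is exactly the commutation $\max_{S_0}\max_{v}t_{\tau}(v,S_0)=\max_{v}\max_{S_0}t_{\tau}(v,S_0)$, which holds because finite maxima commute. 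I would write this out in one line.

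The algorithmic content lies in computing $t_{\tau}(v)$ for every $v$. I would first build $V_{\sf f}=\{u:\tau(u)>d(u)\}$ and run the algorithm \pname{Activation-Times} on input $V_{\sf f}$, which returns $H(V_{\sf f})$ together with $t_{\tau}(v,V_{\sf f})$ for all $v$ in time $\Ocal(n)$ (a tree has $m=n-1$). By the first case of \cref{lem:tau-generalized} this already settles $t_{\tau}(v)=t_{\tau}(v,V_{\sf f})$ for every $v\in H(V_{\sf f})$. Next I would compute the beginning times $b(v)$ for all $v\notin V_{\sf f}$ by scanning each neighbourhood once, in total time $\Ocal(n)$. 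The three cases of \cref{lem:tau-generalized} partition $V(T)$, and their dependencies are acyclic: the values on $H(V_{\sf f})$ feed into $b(\cdot)$; a non-saturated vertex outside $H(V_{\sf f})$ depends only on $b(\cdot)$ and on paths through other non-saturated vertices; and a saturated vertex outside $H(V_{\sf f})$ depends only on its non-saturated neighbours. Hence the evaluation can be carried out in this order.

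The crux is the non-saturated case. Let $F$ be the forest induced in $T$ by the non-saturated vertices lying outside $H(V_{\sf f})$. Since $F$ is a forest, between any two of its vertices there is at most one path, so \cref{lem:tau-generalized} reduces to
\[
t_{\tau}(v)=\max_{u}\big(\mathrm{dist}_F(v,u)+1+b(u)\big),
\]
the maximum being over all $u$ in the same tree of $F$ as $v$ (including $u=v$, contributing $1+b(v)$), because a non-saturated path on $|P|$ vertices corresponds to $\mathrm{dist}_F(v,u)+1=|P|$. For each fixed $v$ I would run a single traversal of its tree in $F$, recording $\mathrm{dist}_F(v,u)$ for every $u$ and maximizing $\mathrm{dist}_F(v,u)+1+b(u)$; this costs $\Ocal(n)$ per vertex and hence $\Ocal(n^2)$ overall. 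Finally, each saturated vertex $v$ outside $H(V_{\sf f})$ is handled in time $\Ocal(d(v))$ by the third case, and returning $\max_v t_{\tau}(v)$ takes $\Ocal(n)$, so the whole procedure runs in $\Ocal(n^2)$.

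Since \cref{lem:tau-generalized} is assumed, the genuine difficulty has already been discharged, and the remaining obstacle is bookkeeping: verifying that the three cases cover every vertex and that the prescribed evaluation order respects all dependencies, so that each $t_{\tau}(v)$ is available when it is needed. I would also double-check the accounting identity $\mathrm{dist}_F(v,u)+1=|P|$ to ensure the path formula is applied under the convention that $|P|$ counts vertices, which is the point where an off-by-one slip could occur.
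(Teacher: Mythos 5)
Your proposal is correct and follows essentially the same route as the paper: the identity is the trivial commutation of finite maxima (the paper phrases it as ``the maximum activation time must be achieved at some vertex''), and the algorithm is the same three-case evaluation driven by \cref{lem:tau-generalized} --- one run of \pname{Activation-Times} on $V_{\sf f}$, then a per-vertex $\Ocal(n)$ traversal of the forest of non-saturated vertices outside $H(V_{\sf f})$, then a local scan for saturated vertices. Your explicit formula $t_{\tau}(v)=\max_u(\mathrm{dist}_F(v,u)+1+b(u))$ is in fact a slightly more careful rendering of the paper's terser ``compute a maximum non-saturated path by BFS,'' since it makes clear that one must maximize $|P|+b(u)$ jointly rather than just the path length.
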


\begin{proof}
Clearly $t_{\tau}(T)=\max\{t_{\tau}(v) \mid v\in V(T)\}$, since the maximum activation time must be achieved at some vertex.
In order to compute $t_\tau(T)$, we have to compute $H(V_{\sf f})$ and $b(v)$ for every vertex $v$ of $T$, which can be done in $\Ocal(n)$-time by the algorithm \pname{Activation-Times}.
With this, we have computed $t_\tau(v)$ for every vertex $v \in H(V_{\sf f})$.
Let $v$ be a non-saturated vertex outside $H(V_{\sf f})$. We can now compute a maximum non-saturated path $P$ with an endpoint in $v$ in $\Ocal(n)$-time, by a breadth-first search over non-saturated vertices outside $H(V_{\sf f})$. Thus, we can compute $t_\tau(v)$ for every non-saturated vertex outside $H(V_{\sf f})$ in time $\Ocal(n^2)$.
For saturated vertices $v$, we can compute $t_\tau(v)$ by searching locally within its neighborhood.
\end{proof}

One interesting observation is that, in \cref{teo-trees1}, the threshold values are not important, but only whether a vertex is saturated or not.
However, in \cref{teo-trees2}, the threshold values are important, since the beginning set $H(V_{\sf f})$ depends on these values.

\section{Further research}\label{sec:concl}


We introduced the \pname{Target Set Selection-Time (TSS-time)} problem and studied its computational complexity, as well as for its generalized version  (\pname{GTSS-time}), obtaining both positive and negative results. A number of interesting questions remain open. In particular, is the value of $k$ in our \NP-hardness results tight? Namely, $k=4$ in \cref{thm:NPC} and \cref{thm:hard-apex}, and $k=5$ in \cref{thm:NPC2}. For the $2$-\textsc{Neighbor Bootstrap Percolation-time} problem, non-trivial arguments were needed in order to establish such dichotomies~\cite{wg2014-tcs}, which do not seem to be easily generalizable to our problem. 

Our main result (\cref{thm:dichotomy}) is a complexity dichotomy for the \textsc{TSS-time} problem in minor-closed graph classes, as well as for its  generalized version. Within minor-closed graph classes of bounded local treewidth (for which know that the \textsc{TSS-time} problem is \FPT with parameters $k$ and $\tau^*)$, it would be very interesting to obtain an additional dichotomy  distinguishing between the polynomial-time solvable cases (such as trees, cf. \cref{teo-trees1}) and the \NP-complete ones (such as planar graphs, cf. \cref{corol-W1}). Another natural research direction is to obtain a complexity dichotomy including also graph classes that are not minor-closed. In the proof of our dichotomy (\cref{thm:dichotomy}), we crucially use \cref{thm:Eppstein}, which only applies to minor-closed graph classes.

As an ingredient in our complexity dichotomy, we proved in \cref{thm:FPT} that, if ${\cal C}$ is a graph class of bounded local treewidth, then the \textsc{GTSS-time} problem restricted to input graphs in ${{\cal C}}$  is \FPT parameterized by $k$ and $\tau^*$. Our algorithm uses Courcelle's Theorem~\cite{Courcelle90} as a black box, and therefore we did not focus on optimizing the dependence on $k$ and $\tau^*$ of our algorithm. Note that, by \cref{corol-W1}, the \pname{TSS-time} problem is \NP-hard in graphs with maximum degree $\Delta$ for any fixed $\Delta\geq 4$ and $k=\Theta(\log n)$, even if all thresholds are equal to 2. Since graphs of bounded maximum degree have bounded local treewidth, this implies that, even if $\tau^*$ is bounded by a constant, the dependence on $k$ of an \FPT algorithm cannot be of the form $2^{\Ocal(k)}$ unless $\P = \NP$.
Also, what about the hardness of the \textsc{TSS-time} problem in graphs of bounded local treewidth if $k$ is a constant, and $\tau^*$ may depend on $n$? This would be the ``dual'' scenario of the one discussed above for planar graphs and graphs of bounded maximum degree, that is, $\tau^*$ constant and $k$ depending on $n$.

We presented algorithms in time $\Ocal(n)$ and $\Ocal(n^2)$ to find a target set with maximum activation time in a tree for threshold functions and generalized threshold functions, respectively. Obtaining a linear-time algorithm for the latter problem in trees remain open. Finally, can we obtain polynomial-time algorithms in graph classes other than trees? In particular, what about cactus graphs or cographs? \b{Even cliques do not seem to be  completely trivial}.

\bibliographystyle{plainurl}
\bibliography{tss-maxtime}

\begin{thebibliography}{10}

\bibitem{ACKERMAN2010}
Eyal Ackerman, Oren Ben-Zwi, and Guy Wolfovitz.
\newblock {Combinatorial model and bounds for target set selection}.
\newblock {\em Theoretical Computer Science}, 411:4017--4022, 2010.
\newblock \href {http://dx.doi.org/10.1016/j.tcs.2010.08.021}
  {\path{doi:10.1016/j.tcs.2010.08.021}}.

\bibitem{neural2}
Hamed Amini.
\newblock {Bootstrap percolation in living neural networks}.
\newblock {\em Journal of Statistical Physics}, 141(3):459--475, 2010.
\newblock \href {http://dx.doi.org/10.1007/s10955-010-0056-z}
  {\path{doi:10.1007/s10955-010-0056-z}}.

\bibitem{balogh3}
J{\'o}zsef Balogh and B{\'e}la Bollob{\'a}s.
\newblock {Bootstrap percolation on the hypercube}.
\newblock {\em Probability Theory and Related Fields}, 134(4):624--648, 2006.
\newblock \href {http://dx.doi.org/10.1007/s00440-005-0451-6}
  {\path{doi:10.1007/s00440-005-0451-6}}.

\bibitem{balogh2}
J{\'o}zsef Balogh, B{\'e}la Bollob{\'a}s, Hugo Duminil-Copin, and Robert
  Morris.
\newblock {The sharp threshold for bootstrap percolation in all dimensions}.
\newblock {\em Transactions of the American Mathematical Society},
  364(5):2667--2701, 2012.
\newblock URL: \url{www.jstor.org/stable/41524940}.

\bibitem{balogh}
J{\'o}zsef Balogh, B{\'e}la Bollob{\'a}s, and Robert Morris.
\newblock {Bootstrap percolation in three dimensions}.
\newblock {\em The Annals of Probability}, 37(4):1329--1380, 2009.
\newblock \href {http://dx.doi.org/10.1214/08-AOP433}
  {\path{doi:10.1214/08-AOP433}}.

\bibitem{balogh4}
J{\'o}zsef Balogh, B{\'e}la Bollob{\'a}s, and Robert Morris.
\newblock {Bootstrap Percolation in High Dimensions}.
\newblock {\em Combinatorics, Probability and Computing}, 19(5-6):643--692,
  2010.
\newblock \href {http://dx.doi.org/10.1017/S0963548310000271}
  {\path{doi:10.1017/S0963548310000271}}.

\bibitem{abs-1812-01482}
Suman Banerjee, Rogers Mathew, and Fahad Panolan.
\newblock Target set selection on graphs of bounded vertex cover number.
\newblock {\em CoRR}, abs/1812.01482, 2018.
\newblock URL: \url{http://arxiv.org/abs/1812.01482}.

\bibitem{barbosa-2012}
Rommel~M. Barbosa, Erika M.~M. Coelho, Mitre~C. Dourado, Dieter Rautenbach, and
  Jayme~L. Szwarcfiter.
\newblock {On the Carath\'eodory Number for the Convexity of Paths of Order
  Three}.
\newblock {\em SIAM Journal on Discrete Mathematics}, 26(3):929--939, 2012.
\newblock \href {http://dx.doi.org/10.1137/110828678}
  {\path{doi:10.1137/110828678}}.

\bibitem{bazgan2014}
Cristina Bazgan, Morgan Chopin, Andr{\'e} Nichterlein, and Florian Sikora.
\newblock {Parameterized inapproximability of target set selection and
  generalizations}.
\newblock {\em Computability}, 3:135--145, 2014.
\newblock \href {http://dx.doi.org/10.3233/COM-140030}
  {\path{doi:10.3233/COM-140030}}.

\bibitem{BENZWI2011}
Oren Ben-Zwi, Danny Hermelin, Daniel Lokshtanov, and Ilan Newman.
\newblock {Treewidth governs the complexity of target set selection}.
\newblock {\em Discrete Optimization}, 8:87--96, 2011.
\newblock \href {http://dx.doi.org/10.1016/j.disopt.2010.09.007}
  {\path{doi:10.1016/j.disopt.2010.09.007}}.

\bibitem{benevides2015}
Fabr{\'\i}cio Benevides, Victor Campos, Mitre~C. Dourado, Rudini Sampaio, and
  Ana Silva.
\newblock {The maximum time of 2-neighbour bootstrap percolation: algorithmic
  aspects}.
\newblock {\em European Journal of Combinatorics}, 48:88--99, 2015.
\newblock \href {http://dx.doi.org/10.1016/j.ejc.2015.02.012}
  {\path{doi:10.1016/j.ejc.2015.02.012}}.

\bibitem{benevides-geo}
Fabr{\'i}cio Benevides, Victor Campos, Mitre~C. Dourado, Rudini~M. Sampaio, and
  Ana Silva.
\newblock {The maximum infection time in the geodesic and monophonic
  convexities}.
\newblock {\em Theoretical Computer Science}, 609:287--295, 2016.
\newblock \href {http://dx.doi.org/10.1016/j.tcs.2015.10.009}
  {\path{doi:10.1016/j.tcs.2015.10.009}}.

\bibitem{benevides-EJC2013}
Fabr{\'i}cio Benevides and Micha{\l} Przykucki.
\newblock {On slowly percolating sets of minimal size in bootstrap
  percolation}.
\newblock {\em The Electronic Journal of Combinatorics}, 20(2):1--20, 2013.
\newblock \href {http://dx.doi.org/10.37236/2542} {\path{doi:10.37236/2542}}.

\bibitem{benevides-SIDMA}
Fabr{\'i}cio Benevides and Micha{\l} Przykucki.
\newblock {Maximum Percolation Time in Two-Dimensional Bootstrap Percolation}.
\newblock {\em SIAM Journal on Discrete Mathematics}, 29(1):224--251, 2015.
\newblock \href {http://dx.doi.org/10.1137/130941584}
  {\path{doi:10.1137/130941584}}.

\bibitem{BESSY2019}
St{\'e}phane Bessy, Stefan Ehard, Lucia~D. Penso, and Dieter Rautenbach.
\newblock Dynamic monopolies for interval graphs with bounded thresholds.
\newblock {\em Discrete Applied Mathematics}, 260:256--261, 2019.
\newblock \href {http://dx.doi.org/10.1016/j.dam.2019.01.022}
  {\path{doi:10.1016/j.dam.2019.01.022}}.

\bibitem{BUENO201822}
Letícia~R. Bueno, Lucia~D. Penso, Fábio Protti, Victor~R. Ramos, Dieter
  Rautenbach, and Uéverton~S. Souza.
\newblock On the hardness of finding the geodetic number of a subcubic graph.
\newblock {\em Information Processing Letters}, 135:22--27, 2018.
\newblock \href {http://dx.doi.org/10.1016/j.ipl.2018.02.012}
  {\path{doi:10.1016/j.ipl.2018.02.012}}.

\bibitem{CDPRS2011}
Carmen~C. Centeno, Mitre~C. Dourado, Lucia~D. Penso, Dieter Rautenbach, and
  Jayme~L. Szwarcfiter.
\newblock {Irreversible conversion of graphs}.
\newblock {\em Theoretical Computer Science}, 412(29):3693--3700, 2011.
\newblock \href {http://dx.doi.org/10.1016/j.tcs.2011.03.029}
  {\path{doi:10.1016/j.tcs.2011.03.029}}.

\bibitem{chalupa}
John Chalupa, Paul~Larry Leath, and Gary~Robert Reich.
\newblock {Bootstrap percolation on a Bethe lattice}.
\newblock {\em Journal of Physics C: Solid State Physics}, 12(1):31--35, 1979.
\newblock \href {http://dx.doi.org/10.1088/0022-3719/12/1/008}
  {\path{doi:10.1088/0022-3719/12/1/008}}.

\bibitem{CHANG2013}
Ching-Lueh Chang and Yuh-Dauh Lyuu.
\newblock {Bounding the sizes of dynamic monopolies and convergent sets for
  threshold-based cascades}.
\newblock {\em Theoretical Computer Science}, 468:37--49, 2013.
\newblock \href {http://dx.doi.org/10.1016/j.tcs.2012.11.016}
  {\path{doi:10.1016/j.tcs.2012.11.016}}.

\bibitem{ningchen}
Ning Chen.
\newblock {On the Approximability of Influence in Social Networks}.
\newblock {\em SIAM Journal on Discrete Mathematics}, 23(3):1400--1415, 2009.
\newblock \href {http://dx.doi.org/10.1137/08073617X}
  {\path{doi:10.1137/08073617X}}.

\bibitem{chiang2013}
Chun-Ying Chiang, Liang-Hao Huang, Bo-Jr Li, Jiaojiao Wu, and Hong-Gwa Yeh.
\newblock {Some results on the target set selection problem}.
\newblock {\em Journal of Combinatorial Optimization}, 25(4):702--715, 2013.
\newblock \href {http://dx.doi.org/10.1007/s10878-012-9518-3}
  {\path{doi:10.1007/s10878-012-9518-3}}.

\bibitem{chopin2014}
Morgan Chopin, Andr{\'e} Nichterlein, Rolf Niedermeier, and Mathias Weller.
\newblock Constant thresholds can make target set selection tractable.
\newblock {\em Theory of Computing Systems}, 55(1):61--83, 2014.
\newblock \href {http://dx.doi.org/10.1007/s00224-013-9499-3}
  {\path{doi:10.1007/s00224-013-9499-3}}.

\bibitem{CICALESE201540}
Ferdinando Cicalese, Gennaro Cordasco, Luisa Gargano, Martin Milanič, Joseph
  Peters, and Ugo Vaccaro.
\newblock Spread of influence in weighted networks under time and budget
  constraints.
\newblock {\em Theoretical Computer Science}, 586:40--58, 2015.
\newblock Fun with Algorithms.
\newblock \href {http://dx.doi.org/10.1016/j.tcs.2015.02.032}
  {\path{doi:10.1016/j.tcs.2015.02.032}}.

\bibitem{CICALESE20141}
Ferdinando Cicalese, Gennaro Cordasco, Luisa Gargano, Martin Milanič, and Ugo
  Vaccaro.
\newblock Latency-bounded target set selection in social networks.
\newblock {\em Theoretical Computer Science}, 535:1--15, 2014.
\newblock \href {http://dx.doi.org/10.1016/j.tcs.2014.02.027}
  {\path{doi:10.1016/j.tcs.2014.02.027}}.

\bibitem{cordasco2018}
Gennaro Cordasco, Luisa Gargano, Marco Mecchia, Adele~A. Rescigno, and Ugo
  Vaccaro.
\newblock {Discovering small target sets in social networks: a fast and
  effective algorithm}.
\newblock {\em Algorithmica}, 80(6):1804--1833, 2018.
\newblock \href {http://dx.doi.org/10.1007/s00453-017-0390-5}
  {\path{doi:10.1007/s00453-017-0390-5}}.

\bibitem{eurinardo}
Eurinardo~R. Costa, Mitre~C. Dourado, and Rudini~M. Sampaio.
\newblock Inapproximability results related to monophonic convexity.
\newblock {\em Discrete Applied Mathematics}, 197:70--74, 2015.
\newblock \href {http://dx.doi.org/10.1016/j.dam.2014.09.012}
  {\path{doi:10.1016/j.dam.2014.09.012}}.

\bibitem{Courcelle90}
Bruno Courcelle.
\newblock {The monadic second-order logic of graphs. I. Recognizable sets of
  finite graphs}.
\newblock {\em Information and Computation}, 85(1):12--75, 1990.

\bibitem{CyganFKLMPPS15}
Marek Cygan, Fedor~V. Fomin, Lukasz Kowalik, Daniel Lokshtanov, D{\'{a}}niel
  Marx, Marcin Pilipczuk, Michal Pilipczuk, and Saket Saurabh.
\newblock {\em Parameterized Algorithms}.
\newblock Springer, 2015.
\newblock \href {http://dx.doi.org/10.1007/978-3-319-21275-3}
  {\path{doi:10.1007/978-3-319-21275-3}}.

\bibitem{DahlhausJPSY94}
Elias Dahlhaus, David~S. Johnson, Christos~H. Papadimitriou, Paul~D. Seymour,
  and Mihalis Yannakakis.
\newblock {The Complexity of Multiterminal Cuts}.
\newblock {\em {SIAM} Journal on Computing}, 23(4):864--894, 1994.
\newblock \href {http://dx.doi.org/10.1137/S0097539792225297}
  {\path{doi:10.1137/S0097539792225297}}.

\bibitem{Diestel12}
Reinhard Diestel.
\newblock {\em Graph Theory, 4th Edition}, volume 173 of {\em Graduate texts in
  mathematics}.
\newblock Springer, 2012.
\newblock URL: \url{https://dblp.org/rec/books/daglib/0030488.bib}.

\bibitem{DOURADO2017}
Mitre~C. Dourado, Lucia~D. Penso, and Dieter Rautenbach.
\newblock {Geodetic convexity parameters for $(q,q-4)$-graphs}.
\newblock {\em Discrete Applied Mathematics}, 223:64--71, 2017.
\newblock \href {http://dx.doi.org/10.1016/j.dam.2017.01.023}
  {\path{doi:10.1016/j.dam.2017.01.023}}.

\bibitem{DF13}
Rodney~G. Downey and Michael~R. Fellows.
\newblock {\em Fundamentals of Parameterized Complexity}.
\newblock Texts in Computer Science. Springer, 2013.
\newblock \href {http://dx.doi.org/10.1007/978-1-4471-5559-1}
  {\path{doi:10.1007/978-1-4471-5559-1}}.

\bibitem{du1988}
Pierre Duchet.
\newblock {Convex sets in graphs, II. Minimal path convexity}.
\newblock {\em Journal of Combinatorial Theory, Series B}, 44(3):307--316,
  1988.
\newblock \href {http://dx.doi.org/10.1016/0095-8956(88)90039-1}
  {\path{doi:10.1016/0095-8956(88)90039-1}}.

\bibitem{dvork2018}
Pavel Dvor{\'a}k, Dusan Knop, and Tom{\'a}s Toufar.
\newblock {Target Set Selection in Dense Graph Classes}.
\newblock In Wen-Lian Hsu, Der-Tsai Lee, and Chung-Shou Liao, editors, {\em
  29th International Symposium on Algorithms and Computation (ISAAC 2018)},
  volume 123 of {\em Leibniz International Proceedings in Informatics
  (LIPIcs)}, pages 18:1--18:13, Dagstuhl, Germany, 2018. Schloss
  Dagstuhl--Leibniz-Zentrum fuer Informatik.
\newblock \href {http://dx.doi.org/10.4230/LIPIcs.ISAAC.2018.18}
  {\path{doi:10.4230/LIPIcs.ISAAC.2018.18}}.

\bibitem{EHARD2019}
Stefan Ehard and Dieter Rautenbach.
\newblock On some tractable and hard instances for partial incentives and
  target set selection.
\newblock {\em Discrete Optimization}, 34:100547, 2019.
\newblock \href {http://dx.doi.org/10.1016/j.disopt.2019.05.004}
  {\path{doi:10.1016/j.disopt.2019.05.004}}.

\bibitem{Eppstein00}
David Eppstein.
\newblock Diameter and treewidth in minor-closed graph families.
\newblock {\em Algorithmica}, 27(3):275--291, 2000.
\newblock \href {http://dx.doi.org/10.1007/s004530010020}
  {\path{doi:10.1007/s004530010020}}.

\bibitem{er1972}
Paul Erd{\H o}s, Edrita Fried, Andr{\'a}s Hajnal, and Eric~C. Milner.
\newblock {Some remarks on simple tournaments}.
\newblock {\em Algebra Universalis}, 2:238--245, 1972.
\newblock \href {http://dx.doi.org/10.1007/BF02945032}
  {\path{doi:10.1007/BF02945032}}.

\bibitem{faja}
Martin Farber and Robert~E. Jamison.
\newblock {Convexity in graphs and hypergraphs}.
\newblock {\em SIAM Journal on Algebraic Discrete Methods}, 7(3):433--444,
  1986.
\newblock \href {http://dx.doi.org/10.1137/0607049}
  {\path{doi:10.1137/0607049}}.

\bibitem{Grohe03}
Martin Grohe.
\newblock {Local Tree-Width, Excluded Minors, and Approximation Algorithms}.
\newblock {\em Combinatorica}, 23(4):613--632, 2003.
\newblock \href {http://dx.doi.org/10.1007/s00493-003-0037-9}
  {\path{doi:10.1007/s00493-003-0037-9}}.

\bibitem{hn1981}
Frank Harary and Juhani Nieminen.
\newblock Convexity in graphs.
\newblock {\em Journal of Differential Geometry}, 16(2):185--190, 1981.

\bibitem{Hartmann18}
Tim~A. Hartmann.
\newblock {Target Set Selection Parameterized by Clique-Width and Maximum
  Threshold}.
\newblock In {\em Proc. of the 44th International Conference on Current Trends
  in Theory and Practice of Computer Science (SOFSEM)}, volume 10706 of {\em
  LNCS}, pages 137--149, 2018.
\newblock \href {http://dx.doi.org/10.1007/978-3-319-73117-9\_10}
  {\path{doi:10.1007/978-3-319-73117-9\_10}}.

\bibitem{holroyd}
Alexander~E. Holroyd.
\newblock {Sharp metastability threshold for two-dimensional bootstrap
  percolation}.
\newblock {\em Probability Theory and Related Fields}, 125(2):195--224, 2003.
\newblock \href {http://dx.doi.org/10.1007/s00440-002-0239-x}
  {\path{doi:10.1007/s00440-002-0239-x}}.

\bibitem{cs1}
Paul A.~Dreyer Jr. and Fred~S. Roberts.
\newblock {Irreversible $k$-threshold processes: Graph-theoretical threshold
  models of the spread of disease and of opinion}.
\newblock {\em Discrete Applied Mathematics}, 157(7):1615--1627, 2009.
\newblock \href {http://dx.doi.org/10.1016/j.dam.2008.09.012}
  {\path{doi:10.1016/j.dam.2008.09.012}}.

\bibitem{kempe2003}
David Kempe, Jon~M. Kleinberg, and {\'{E}}va Tardos.
\newblock {Maximizing the Spread of Influence through a Social Network}.
\newblock In {\em Proc. of the 9th ACM SIGKDD International Conference on
  Knowledge Discovery and Data Mining}, pages 137--146, 2003.
\newblock \href {http://dx.doi.org/10.1145/956750.956769}
  {\path{doi:10.1145/956750.956769}}.

\bibitem{KempeKT15}
David Kempe, Jon~M. Kleinberg, and {\'{E}}va Tardos.
\newblock {Maximizing the Spread of Influence through a Social Network}.
\newblock {\em Theory of Computing}, 11:105--147, 2015.
\newblock \href {http://dx.doi.org/10.4086/toc.2015.v011a004}
  {\path{doi:10.4086/toc.2015.v011a004}}.

\bibitem{KHOSHKHAH2014}
Kaveh Khoshkhah, Hossein Soltani, and Manouchehr Zaker.
\newblock {Dynamic monopolies in directed graphs: The spread of unilateral
  influence in social networks}.
\newblock {\em Discrete Applied Mathematics}, 171:81--89, 1986.
\newblock \href {http://dx.doi.org/10.1016/j.dam.2014.02.006}
  {\path{doi:10.1016/j.dam.2014.02.006}}.

\bibitem{dmtcs:3952}
Jan Kynčl, Bernard Lidický, and Tomáš Vyskočil.
\newblock {Irreversible 2-conversion set in graphs of bounded degree}.
\newblock {\em Discrete Mathematics \& Theoretical Computer Science},
  19(3):81--89, 2017.
\newblock \href {http://dx.doi.org/10.23638/DMTCS-19-3-5}
  {\path{doi:10.23638/DMTCS-19-3-5}}.

\bibitem{wg2014-tcs}
Thiago~Braga Marcilon and Rudini Sampaio.
\newblock {The maximum time of 2-Neighbor Bootstrap Percolation: Complexity
  results}.
\newblock {\em Theoretical Computer Science}, 708:1--17, 2018.
\newblock \href {http://dx.doi.org/10.1016/j.tcs.2017.10.014}
  {\path{doi:10.1016/j.tcs.2017.10.014}}.

\bibitem{MarcilonS18}
Thiago~Braga Marcilon and Rudini~M. Sampaio.
\newblock {The maximum infection time of the $P_3$ convexity in graphs with
  bounded maximum degree}.
\newblock {\em Discrete Applied Mathematics}, 251:245--257, 2018.
\newblock \href {http://dx.doi.org/10.1016/j.dam.2018.05.053}
  {\path{doi:10.1016/j.dam.2018.05.053}}.

\bibitem{MarcilonS18-tw}
Thiago~Braga Marcilon and Rudini~Menezes Sampaio.
\newblock {The $P_3$ infection time is W[1]-hard parameterized by the
  treewidth}.
\newblock {\em Information Processing Letters}, 132:55--61, 2018.
\newblock \href {http://dx.doi.org/10.1016/j.ipl.2017.12.006}
  {\path{doi:10.1016/j.ipl.2017.12.006}}.

\bibitem{morris}
Robert Morris.
\newblock {Minimal percolating sets in bootstrap percolation}.
\newblock {\em The Electronic Journal of Combinatorics}, 16(R2):1--20, 2003.

\bibitem{nichterlein2013}
Andr{\'e} Nichterlein, Rolf Niedermeier, Johannes Uhlmann, and Mathias Weller.
\newblock On tractable cases of target set selection.
\newblock {\em Social Network Analysis and Mining}, 3(2):233--256, 2013.
\newblock \href {http://dx.doi.org/10.1007/s13278-012-0067-7}
  {\path{doi:10.1007/s13278-012-0067-7}}.

\bibitem{Przykucki}
Micha{\l} Przykucki.
\newblock {Maximal Percolation Time in Hypercubes Under 2-Bootstrap
  Percolation}.
\newblock {\em The Electronic Journal of Combinatorics}, 19(2):1--13, 2012.
\newblock \href {http://dx.doi.org/10.37236/2412} {\path{doi:10.37236/2412}}.

\bibitem{riedl}
Eric Riedl.
\newblock {Largest Minimal Percolating Sets in Hypercubes under 2-Bootstrap
  Percolation}.
\newblock {\em The Electronic Journal of Combinatorics}, 17(R80):1--13, 2010.
\newblock \href {http://dx.doi.org/10.37236/352} {\path{doi:10.37236/352}}.

\bibitem{soltani2019}
Hossein Soltani and Babak Moazzez.
\newblock {A polyhedral study of dynamic monopolies}.
\newblock {\em Annals of Operations Research}, 279(1-2):71--87, 2019.
\newblock \href {http://dx.doi.org/10.1007/s10479-018-3107-5}
  {\path{doi:10.1007/s10479-018-3107-5}}.

\bibitem{TAKAOKA2015}
Asahi Takaoka and Shuichi Ueno.
\newblock {A Note on Irreversible 2-Conversion Sets in Subcubic Graphs}.
\newblock {\em IEICE Transactions on Information and Systems},
  E98.D(8):1589--1591, 2015.
\newblock \href {http://dx.doi.org/10.1587/transinf.2015EDL8021}
  {\path{doi:10.1587/transinf.2015EDL8021}}.

\bibitem{ve}
Marcel L.~J. van~de Vel.
\newblock {\em Theory of convex structures}, volume~50.
\newblock Elsevier, 1993.

\bibitem{ZAKER2012}
Manouchehr Zaker.
\newblock {On dynamic monopolies of graphs with general thresholds}.
\newblock {\em Discrete Mathematics}, 312(6):1136--1143, 2012.
\newblock \href {http://dx.doi.org/10.1016/j.disc.2011.11.038}
  {\path{doi:10.1016/j.disc.2011.11.038}}.

\end{thebibliography}

\end{document}